\documentclass[prx,aps,superscriptaddress,twocolumn,nofootinbib,longbibliography]{revtex4-2}

\usepackage{amsmath,amssymb,amsthm,bbm,bm,color,dsfont,float,graphicx,hyperref,makecell,mathrsfs,mathtools,nicefrac,pgfplots,physics,tikz,times,txfonts}
\usepackage{tikz-cd}
\usepackage[capitalise, noabbrev]{cleveref}
\usepackage[normalem]{ulem}
\usepackage{tcolorbox}
\hypersetup{colorlinks=true,linkcolor=purple,citecolor=purple,urlcolor=purple}

\pgfplotsset{compat=newest}
\tcbset{before skip=10pt,toptitle=2mm,bottomtitle=1mm,fonttitle=\bfseries}
\tcbuselibrary{theorems}
\tcbuselibrary{breakable}

\definecolor{NavyBlue}{rgb}{0.0, 0.0, 0.5}
\definecolor{OliveGreen}{rgb}{0.33, 0.42, 0.18}
\definecolor{def_color_frame}{RGB}{220,230,242}
\colorlet{def_color_back}{def_color_frame!30}
\definecolor{def_color_text}{RGB}{37,64,97}
\definecolor{def_color_frame2}{RGB}{242,200,200}
\colorlet{def_color_back2}{def_color_frame2!30}
\definecolor{def_color_text2}{RGB}{97,55,33}

\newtcolorbox[auto counter]{bluebox}[2][]{%
colback=def_color_back,colframe=def_color_frame,fonttitle=\bfseries,coltitle=def_color_text,float,floatplacement=t,title=Box~\thetcbcounter: #2,#1}

\newtcolorbox[use counter from=bluebox]{redbox}[2][]{%
colback=def_color_back2,colframe=def_color_frame2,fonttitle=\bfseries,coltitle=def_color_text2,float,floatplacement=t,title=Box~\thetcbcounter: #2,#1}

\def\E{ {\cal E} }

\def\P{ {\cal P} }
\def\Q{ {\cal Q} }

\def\>{\rangle}
\def\<{\langle}

\newcommand{\1}{\openone}

\newcommand{\poly}{\operatorname{poly}}

\renewcommand{\v}[1]{\ensuremath{\boldsymbol #1}}

\definecolor{bluecyan}{rgb}{0.27, 0.66, 0.88}
\definecolor{ppblue}{RGB}{46,117,182}
\definecolor{ppred}{RGB}{197, 90, 17}

\theoremstyle{plain}
\newtheorem{thm}{Theorem}

\newtheorem{lem}[thm]{Lemma}
\newtheorem{prop}[thm]{Proposition}

\newtheorem{defn}[thm]{Definition}

\usepackage{newfloat}
\usepackage[super]{nth}
\DeclareFloatingEnvironment[fileext=frm,placement={!ht},name=Box]{myfloat}

\DeclarePairedDelimiter\floor{\lfloor}{\rfloor}
\definecolor{beige}{rgb}{1.00,0.95,0.90}
\usepackage[framemethod=TikZ]{mdframed}

\begin{document}

\title{From the Unknown to the Desired:\protect\\
Transforming Unknown Initial States in the Resource Theory of Work and Heat}

\author{Tanmoy Biswas}
\email{tbiswas1@uni-koeln.de}
\affiliation{Department Mathematik/Informatik--Abteilung Informatik,\protect\\ Universit\"at zu K\"oln, Albertus-Magnus-Platz, 50923 K\"oln, Germany}
\affiliation{Theoretical Division (T-4), Los Alamos National Laboratory,\protect\\ Los Alamos, New Mexico 87545, USA.}

\author{Andreas Winter}
\email{andreas.winter@uni-koeln.de}
\affiliation{Department Mathematik/Informatik--Abteilung Informatik,\protect\\ Universit\"at zu K\"oln, Albertus-Magnus-Platz, 50923 K\"oln, Germany}
\affiliation{ICREA {\&} Grup d'Informaci\'{o} Qu\`{a}ntica, Departament de F\'{\i}sica,\protect\\ Universitat Aut\`{o}noma de Barcelona, 08193 Bellaterra (Barcelona), Spain}

\begin{abstract}
We study thermodynamic state transformations in the resource theory of work and heat when the initial quantum state is unknown. Moving beyond state-dependent protocols, we develop a universal framework in which a single transformation applies to all admissible input states. For mutually commuting conserved charges, we construct charge-conserving protocols whose transformation error, quantified by trace distance, decays super-polynomially with the number of system copies. As an application, we demonstrate universal work extraction: asymptotically optimal work can be extracted without microscopic knowledge of the initial state, with a super-polynomially small error. Our results show that thermodynamic state transformations remain achievable under limited prior information and establish a universal framework for resource-theoretic thermodynamics beyond the state-aware setting.
\end{abstract}

\date{8 September 2026}

\maketitle

\section{Introduction}

The resource-theoretic approach to quantum thermodynamics provides a powerful operational framework for analyzing state transformations under conservation laws, typically implemented through energy-conserving unitaries acting jointly on the system and a thermal environment 
\cite{Janzing2000, OppHorodecki3,Brandaosecondlaw,brandao2013resource}. This framework is especially effective when the initial state is known and thermal states are freely available through access to an infinite heat bath \cite{Seifert2008,Strasberg_Winter}. However, treating thermal states as free resources effectively presupposes access to an idealized, infinitely large reservoir, an assumption that need not hold in realistic settings \cite{Ergotropy, BeyondHalpern,Scharlau2018quantumhornslemma,Lobejko2021,vomEnde2022BathHamiltonians}. In many applications, including microscopic heat engines and information-processing devices, operations take place on timescales so short that the system must effectively be regarded as closed \cite{Aberg2013,Skrzypczyk2014, TajimaHayashi2017, ItoHayashi2018, Mohammady}. On such timescales, preparing thermal states can itself be highly nontrivial, especially at the low temperatures relevant for many quantum-computing platforms \cite{Strasberg_WinterPRA,Brown2011,Kempe,Bravyi2021PartitionFunctions}. Moreover, when the environment is finite, its state changes appreciably through interaction with the system \cite{Rivas2014NonMarkovianity,Breuer2009NonMarkovian}. A second limitation concerns knowledge of the initial state \cite{Jaynes1957InformationI,Jaynes1957InformationII,Landauer1991,Bennett1982}. Its preparation may involve a highly complex quantum process, such as a deep quantum circuit, or the system may be exposed to unknown noise, making an accurate microscopic description of the available state inaccessible \cite{HalpernPRAComplexity,Munson}. Consequently, the conventional assumption that the input state is completely known in advance---and that the transformation protocol can therefore be tailored to its microscopic details and to a desired target---need not be valid in realistic scenarios. These considerations motivate a fundamental question: \emph{To what extent are thermodynamic state transformations possible when the initial quantum state is unknown and access to an infinitely large thermal reservoir is unavailable?} This question is particularly important for work extraction, where the optimal transformation, and hence the maximum extractable work, generally depends on detailed knowledge of the initial state.

Ref.~\cite{Sparaciari-AET} addresses this problem for a \emph{known} initial state in the presence of a single conserved charge, namely the Hamiltonian, and without assuming access to an infinitely large thermal reservoir. In the thermodynamic limit, it shows that state transformations under energy-conserving unitaries are completely characterized by the entropy and average energy of the state. Remarkably, transforming \(n\) independent and identically distributed (i.i.d.) copies of the initial state requires only an ancillary system of sublinear size, equipped with a Hamiltonian whose norm grows sublinearly with \(n\). The resulting contributions of the ancilla to the total entropy and average energy are therefore \(o(n)\) and become negligible relative to the extensive system quantities in the thermodynamic limit. Ref.~\cite{Bera2019thermodynamicsas} extends this characterization to several mutually commuting conserved charges by reducing the problem to an effective single conserved quantity. A limitation of both constructions, however, is that implementing the required charge-conserving unitary relies on an additional coherence reservoir, or \emph{reference frame}, incorporated into the ancilla \cite{Gour_2008,PopescuWinter,AbergPRL}. Beyond its conceptual role, such a reference frame must be prepared in a suitable coherent state tailored to the transformation, which can be experimentally demanding \cite{Bartlett}. Moreover, the use of a finite reference frame leads to a transformation error that decreases only as a small power of the system size \(n\), reflecting a trade-off between the achievable accuracy and the size of the reference frame. This raises two important further questions: \emph{Can one realize the same class of thermodynamic state transformations without requiring a reference frame?} and \emph{Can one achieve a faster decay of the transformation error?}

Here, we address these questions from first principles by developing a framework for thermodynamic state transformations when the initial quantum state is unknown. Rather than designing a protocol tailored to a specific input state, we seek universal transformations that apply to any state within a prescribed class and produce a desired target state up to a controllably small error. This shifts the emphasis from state-dependent to universal protocols and brings typicality and concentration phenomena to the forefront in the asymptotic regime of many identical copies.

Our construction is inspired by the state-aware setting studied in Refs.~\cite{Sparaciari-AET,Bera2019thermodynamicsas}, where, in the absence of an infinite thermal reservoir, asymptotic state transformations are completely characterized by the entropy and the average values of the mutually commuting charges. In particular, equality of the entropy and average charges of the initial and target states constitutes the necessary and sufficient condition for asymptotic interconversion. We first develop a new protocol, distinct from that of Ref.~\cite{Sparaciari-AET}, for transforming \(n\) i.i.d.\ copies of a known initial state into a target state with the same entropy and average charges. The central ingredient is a \emph{charge--entropy normal form}: we show that \(n\) i.i.d.\ copies of a state can be mapped, in a charge-conserving manner and with an error that vanishes in the thermodynamic limit, to a tensor product of two states, one supported on a subspace of linear size and the other on a subspace of sublinear size. The linearly supported component, together with its associated charge content, depends only on the entropy and average charges of the state, whereas all microscopic state-dependent information is confined to the sublinear component. Consequently, states with the same entropy and average charges share the same linearly supported part of the charge--entropy normal form, and the desired transformation can be achieved by manipulating only the sublinear component. Importantly, our construction eliminates the need for an external reference frame and thereby yields a faster decay of the transformation error.

We then extend the construction to unknown initial states. Motivated by techniques from gentle tomography, we show that the relevant thermodynamic quantities---namely, the entropy and average charges---can be estimated while inducing only a vanishingly small disturbance to the state, using a coarse-grained, charge-preserving measurement. This effectively reduces the unknown-state problem to the known-state setting: once the entropy and average charges are estimated, the corresponding known-state transformation protocol can be applied. By combining this gentle estimation procedure with our reference-frame-free protocol, we obtain a universal scheme for asymptotic thermodynamic state transformations starting from an unknown quantum state.

As an application of our results, we consider universal work extraction in the absence of an infinitely large heat bath. Work extraction from unknown quantum states has previously been studied within thermal operations, where access to an infinitely large thermal reservoir is assumed \cite{WatanabePRL, WatanabeTakagi2026}. Standard optimal work-extraction protocols generally depend explicitly on knowledge of the initial state, with the maximum extractable work determined by the difference between its average energy and that of the Gibbs state with the same entropy \cite{Chakraborty2025SampleComplexity, Safranek, Canzio2025almostunknown,Biswas2026information,Pagliaro}. We show that, even when the initial state is unknown, one can construct a universal protocol that achieves optimal work extraction in the thermodynamic limit. Thus, asymptotically, the lack of knowledge of the initial state does not reduce the maximum extractable work compared with the case in which the state is completely known.

\section{Main Results}
\subsection{Setting}
Macroscopic thermodynamics provides a coarse-grained description of physical 
systems in terms of a small number of macroscopic quantities, such as entropy 
and the expectation values of energy etc, while disregarding the vast microscopic 
details of the underlying quantum state. Motivated by this perspective, it is natural to regard all quantum states sharing the same values of these macroscopic quantities as belonging to the same macroscopic equivalence class. 

For a set of mutually commuting charges $L_1,\ldots,L_t$ satisfying 
$[L_i,L_j]=0$, with $L_1=H$ denoting the Hamiltonian, we define
\begin{equation}
    \mathcal{C}(s,l_1,\ldots,l_t)
    :=
    \left\{
        \rho \,\middle|\,
        S(\rho)=s,\;
        \Tr(\rho L_i)=l_i,\quad i=1,\ldots,t
    \right\},
\end{equation}
where $\rho$ is the density matrix of a $d$-level quantum system, 
$\rho\in\mathcal{B}(\mathcal{H}_d)$. We refer to 
$\mathcal{C}(s,l_1,\ldots,l_t)$ as a \textit{charge-entropy equivalence class}. 
Such an equivalence class generally contains quantum states that may differ in their coherence, despite sharing the same macroscopic thermodynamic description. We are interested in investigating the operational significance of such equivalence classes in terms of state interconversion within the framework of resource theory of work and heat \cite{Sparaciari-AET}. In the following, we briefly review that framework.

The resource objects are quantum states on $n$ independent and identically 
prepared copies of this system, while the local 
dimension $d$ and the set of commuting charges $L_1,\ldots,L_t$ remain fixed. Then, the total charge 
is given as
\begin{equation}
L_c^{\times n}
=
\sum_{k=1}^{n}
\1^{\otimes(k-1)}
\otimes L_c
\otimes
\1^{\otimes(n-k)}
\end{equation}
for $c\in\{1,\ldots,t\}$. As we are interested in state interconversion between states within a charge-entropy equivalence class using $n$ copies of the state, we allow access 
to ancillary systems whose size and charges associated with it scale sublinearly with $n$. 
The contribution of such ancillas to the values of the charges and the entropy vanishes in the asymptotic limit, and therefore they do not changes the corresponding charge-entropy equivalence class. 

The allowed transformations are global unitaries $U$ satisfying
\begin{equation}
\label{defn_cc}
[U,L_c^{\times n}+L_{c,\mathrm{anc}}]=0,
\end{equation}
for every $c\in\{1,\ldots,t\}$ and $L_{c,\mathrm{anc}}$ denotes the charge associated with the ancilla system. We refer to such unitaries as \emph{charge-conserving unitaries}. Thus, one assumes complete coherent control over the joint system and ancilla, with the only fundamental constraint being the conservation of the charges. After the application of $U$, the ancillary system may be discarded by tracing it out. Since the ancilla is scaling sublinearly with the system size, this operation has a negligible asymptotic effect and therefore does not alter the corresponding charge-entropy equivalence class.

\begin{figure}[t]
    \centering
    \includegraphics[width=8 cm]{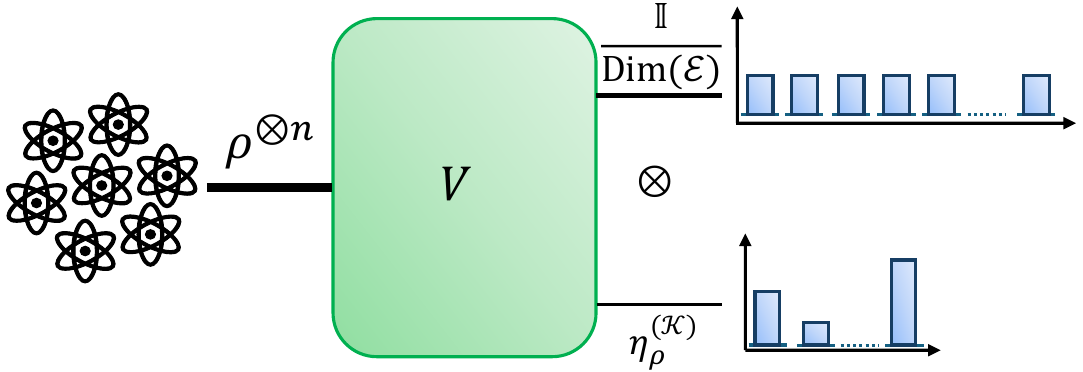}
    \caption{Conversion of an \(n\)-copy i.i.d.~state \(\rho^{\otimes n}\) into its \emph{charge--entropy normal form} by a unitary \(V\) that depends only on the charge--entropy equivalence class containing \(\rho\). The first component is maximally mixed, with its dimension and associated charges determined solely by the entropy and average charges of \(\rho\), whereas the second component retains the explicit dependence on \(\rho\), with its associated charge operators depending explicitly on the total charges \(L_c^{\times n}\).}
    \label{fig:state_transformation_1}
\end{figure}

\subsection{Transformation from known initial states}
In this section, we consider the transformation of known initial states. Specifically, we construct a charge-conserving unitary, depending only on the information of charge entropy equivalence class, that transforms independent and identically prepared copies of an initial state into a normal form consisting of a maximally mixed state tensored with a state depending on the initial state. We state the result in the following theorem.
\begin{thm}[charge-entropy normal form]
\label{Main_Central_Thm}
Let \(\rho\in\mathcal{C}(s,l_1,\ldots,l_t)\). For any \(\epsilon_n>0\), there exist Hilbert spaces
\(\mathcal{E}\) and \(\mathcal{K}\), associated with charges
\(L_c^{(\mathcal{E})}\) and \(L_c^{(\mathcal{K})}\), satisfying
\begin{align}
    \frac{1}{\poly(n)}e^{n(s-2\epsilon_n)}-1
    &\leq
    \dim (\mathcal{E})
    \leq e^{ns},
    &
    L_c^{(\mathcal{E})}
    &=nl_c\,\1^{(\mathcal{E})}\label{dimchargebd},
    \\
    \dim (\mathcal{K})
    &\leq \poly(n)e^{3n\epsilon_n},
    &
    \|L_c^{(\mathcal{K})}\|
    &\leq n\epsilon_n\label{dimchargebd2},
\end{align}
for \(c\in\{1,\ldots,t\}\). There exists a projector
\(P_{\epsilon_n}\) such that
\(\operatorname{supp}(P_{\epsilon_n})\simeq\mathcal{E}\otimes\mathcal{K}\),
and a unitary
\begin{equation}
V:\mathcal{H}_d^{\otimes n}=\operatorname{supp}(P_{\epsilon_n})\oplus\operatorname{supp}(P_{\epsilon_n})^\perp
\rightarrow
(\mathcal{E}\otimes\mathcal{K})
\oplus
\operatorname{supp}(P_{\epsilon_n})^\perp ,
\end{equation}
depending only on \(\mathcal{C}(s,l_1,\ldots,l_t)\), of the form
\begin{equation}\label{dsumstru}
V=U\oplus\1^{(\operatorname{supp}(P_{\epsilon_n})^\perp)}
\quad\mathrm{where}\quad
U:\operatorname{supp}(P_{\epsilon_n})
\rightarrow
\mathcal{E}\otimes\mathcal{K}.
\end{equation}
The unitary \(V\) acts non-trivially only on $\operatorname{supp}(P_{\epsilon_n})$ satisfies the charge-conservation condition
\begin{equation}\label{conscc}
\left(
L_c^{(\mathcal{E})}\otimes\1^{(\mathcal{K})}
+
\1^{(\mathcal{E})}\otimes L_c^{(\mathcal{K})}
\right)
\oplus
P_{\epsilon_n}^{\perp}L_c^{\times n}
=
V L_c^{\times n} V^{\dagger},
\end{equation}
for \(c\in\{1,\ldots,t\}\). Under this charge-preserving unitary \(V\), we have
\begin{equation}
\label{dt}
\left\|
V\rho^{\otimes n}V^{\dagger}
-
\frac{\1^{(\mathcal{E})}}
{\dim (\mathcal{E})}
\otimes
\eta_{\rho}^{(\mathcal{K})}
\right\|_1
\leq
\delta({\epsilon_n}),
\end{equation}
where 
\begin{equation}
\label{deltaepsilon_n}
    \delta({\epsilon_n}) 
    :=\poly(n) \exp\!\left[ -\frac{n}{4}
\left(
\frac{\epsilon_n}{\log(1/\epsilon_n)}
\right)^2
\right],
\end{equation}
and $\1^{(\mathcal{E})}$ denotes identity operator on Hilbert space $\mathcal{E}$, and $\eta_{\rho}^{(\mathcal{K})}$ is a density matrix on $\mathcal{K}$.
\end{thm}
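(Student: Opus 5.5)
Since the charges $L_1,\ldots,L_t$ commute, they have a joint eigenbasis $\{|i\rangle\}_{i=1}^d$ with eigenvalues $\lambda_c(i)$. On $\mathcal{H}_d^{\otimes n}$ we use the type decomposition in this basis. For a type $T$, a probability vector on $\{1,\ldots,d\}$ with denominator $n$, the type class subspace $\mathcal{H}_T$ is spanned by $|i_1\cdots i_n\rangle$ of type $T$. It is a joint eigenspace of every $L_c^{\times n}$ with eigenvalue $n\sum_i T(i)\lambda_c(i)$.

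The first step is to pass from $\rho$ to its dephased version. Write $\rho=\sum_i$-blockwise with respect to the joint eigenspaces. We do not dephase the state itself; instead we take $P_{\epsilon_n}$ to be the sum of $\Pi_T$ over \emph{charge-typical} types. A type is charge-typical if $|\sum_i T(i)\lambda_c(i)-l_c|\le \epsilon_n$ for all $c$ and $|H(T)-s|$ is small in a suitable entropy sense. The latter requires handling the entropy through Schur--Weyl duality rather than types, because $\rho$ need not commute with the charges. Concretely, we decompose $\mathcal{H}_d^{\otimes n}=\bigoplus_\lambda \mathcal{Q}_\lambda\otimes\mathcal{P}_\lambda$ and refine by the weight (charge) decomposition of $\mathcal{Q}_\lambda$. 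The charges act only on $\mathcal{Q}_\lambda$, and the $L_c^{\times n}$ lie in the Cartan subalgebra of $\mathfrak{u}(d)$. So the joint charge eigenspaces are $\bigoplus_\lambda \mathcal{Q}_{\lambda,\mu}\otimes\mathcal{P}_\lambda$, with $\mu$ ranging over weights (types) and $\mathcal{Q}_{\lambda,\mu}$ the weight space. Keyl--Werner / empirical Young diagram estimation gives that $\rho^{\otimes n}$ concentrates on $\lambda/n$ near the spectrum of $\rho$. Hence $\bar H(\lambda/n)$ lies within $\epsilon_n$ of $s$, using a continuity bound $|\Delta H|\lesssim \delta\log(1/\delta)$. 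The weak law for the charges puts the weight $\mu$ near $n l$. Both tail bounds are Hoeffding/Sanov-type and give $\exp[-n\delta^2/4]$ with $\delta=\epsilon_n/\log(1/\epsilon_n)$, times $\poly(n)$ for the number of $\lambda,\mu$. This is exactly the form of $\delta(\epsilon_n)$.

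Next, on $P_{\epsilon_n}$ we have $\rho^{\otimes n}P\simeq\bigoplus_{\lambda}\,\tau_\lambda\otimes \1_{\mathcal{P}_\lambda}/\dim\mathcal{P}_\lambda\cdot p_\lambda$. This follows because $\rho^{\otimes n}$ is permutation invariant and therefore maximally mixed on each multiplicity space $\mathcal{P}_\lambda$. For typical $\lambda$, $\dim\mathcal{P}_\lambda= e^{nH(\lambda/n)\pm O(\log n)}\in[e^{n(s-\epsilon_n)}/\poly(n),\,e^{n(s+\epsilon_n)}]$. The maximally mixed factor $\mathcal{E}$ should be a common subfactor of all the $\mathcal{P}_\lambda$ of dimension $D:=\lfloor \min_\lambda \dim\mathcal{P}_\lambda\, e^{-n\epsilon_n}\rfloor$, which is at most $e^{ns}$. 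For each typical $\lambda$, write $\dim\mathcal{P}_\lambda=a_\lambda D+r_\lambda$ with $r_\lambda<D$. The remainder has relative weight $r_\lambda/\dim\mathcal{P}_\lambda\le D/\dim\mathcal{P}_\lambda\le e^{-n\epsilon_n}$, and we either discard it into $P^\perp$ or absorb it into the error. Then $\mathcal{P}_\lambda\supseteq\mathbb{C}^D\otimes\mathbb{C}^{a_\lambda}$ with $a_\lambda\le e^{2n\epsilon_n}\poly(n)$.

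Setting $\mathcal{K}=\bigoplus_{\lambda,\mu}\mathcal{Q}_{\lambda,\mu}\otimes\mathbb{C}^{a_\lambda}$ gives $\dim\mathcal{K}\le\poly(n)e^{3n\epsilon_n}$, accounting for the $\poly(n)$ from $\dim\mathcal{Q}_\lambda$ and the extra $\epsilon_n$ slack. The isometry $U$ sends $\mathcal{Q}_{\lambda,\mu}\otimes(\mathbb{C}^D\otimes\mathbb{C}^{a_\lambda})$ to $\mathbb{C}^D\otimes(\mathcal{Q}_{\lambda,\mu}\otimes\mathbb{C}^{a_\lambda})$. We define $L_c^{(\mathcal{E})}=nl_c\1$ and $L_c^{(\mathcal{K})}=\bigoplus(\mu\cdot\lambda_c-nl_c)\1$, so $\|L_c^{(\mathcal{K})}\|\le n\epsilon_n$ by typicality of $\mu$. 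Relation \eqref{conscc} then holds by construction, since $U$ maps each joint eigenspace to an eigenspace with the same total eigenvalue. The construction uses only $s,l_c$ and $\epsilon_n$, not $\rho$, and $\eta_\rho$ is the image of $\bigoplus p_\lambda\tau_\lambda\otimes\1_{a_\lambda}/a_\lambda$, normalized. Finally, the trace-distance bound \eqref{dt} combines the weight outside $P_{\epsilon_n}$, the discarded remainders, and the renormalization, using gentle-measurement style estimates.

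The main obstacle is the entropy concentration with the stated rate. We need to control $\Pr[\,|\bar H(\lambda/n)-s|>\epsilon_n]$ via $\|\lambda/n-\mathrm{spec}(\rho)\|_1$ deviations. This requires the $\delta\log(1/\delta)$ continuity of entropy, which is where the $\epsilon_n/\log(1/\epsilon_n)$ comes from, combined with the Keyl--Werner large deviation bound $\Pr[\lambda]\le\poly(n)e^{-nD(\lambda/n\|r)}$ and Pinsker. A secondary issue is that the dimension window of typical $\mathcal{P}_\lambda$ must be tight enough for the common factor $D$ to satisfy the lower bound $e^{n(s-2\epsilon_n)}/\poly(n)-1$. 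I will get this by choosing the entropy window $\epsilon_n/2$ and the floor loss as above.
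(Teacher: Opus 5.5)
Your proposal is correct and is essentially the paper's proof. The shared ingredients are: Schur--Weyl blocks restricted to entropy-typical $\lambda$ via Keyl--Werner plus Fannes--Audenaert continuity (the source of $\epsilon_n/\log(1/\epsilon_n)$); charge-typical subspaces of $\mathcal{Q}_\lambda$, where your weight spaces $\mathcal{Q}_{\lambda,\mu}$ are an explicit description of the paper's $\mathcal{Q}'_\lambda$; a common factor $\mathcal{E}$ of dimension $\lfloor D_{\min}^{(\epsilon_n)}e^{-n\epsilon_n}\rfloor$ split off each $\mathcal{P}_\lambda$, with the remainder of relative weight at most $e^{-n\epsilon_n}$ discarded into $P_{\epsilon_n}^\perp$; charges $nl_c\1$ on $\mathcal{E}$ and the shifted weight charges on $\mathcal{K}$; and the gentle operator lemma for the trace-distance bound.

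One small bookkeeping correction. Your ``secondary issue'' is not an issue: the entropy window $\epsilon_n$ already gives the stated lower bound $e^{n(s-2\epsilon_n)}/\poly(n)-1$ on $\dim(\mathcal{E})$, so there is no need to shrink the window to $\epsilon_n/2$. With the window $\epsilon_n$ you only get $a_\lambda\le\poly(n)e^{3n\epsilon_n}$, not $\poly(n)e^{2n\epsilon_n}$, but that is still exactly what the bound on $\dim(\mathcal{K})$ requires.
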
 

The full proof of this theorem is given in Appendix~\ref{cc1_calc}, while a proof sketch containing the main ideas is presented in Subsection A of the Methods. 
%

We refer to the state
\begin{equation}
    \tau^{(\mathcal E)}\otimes
    \eta_{\rho}^{(\mathcal K)},
    \qquad
    \tau^{(\mathcal E)}
    :=
    \frac{\1^{(\mathcal E)}}{\dim (\mathcal E)},
\end{equation}
equipped with the charges $L_c^{(\mathcal E)}$ and $L_c^{(\mathcal K)}$, for 
$c\in\{1,\ldots,t\}$, acting on the Hilbert spaces $\mathcal E$ and $\mathcal K$, respectively, as the \emph{charge-entropy normal form} of the state $\rho$ (see Fig.~\ref{fig:state_transformation_1}).

Let us discuss the structure of the charge-entropy normal form. This structure is determined predominantly by the charge-entropy equivalence class $\mathcal{C}(s,l_1,\ldots,l_t)$ containing $\rho$. The only dependence of the normal form on the particular state $\rho$ appears in the state of the second subsystem $\eta_{\rho}^{(\mathcal K)}$, whose dimension is bounded as given in Eq.~\eqref{dimchargebd}. Therefore, all states belonging to the same charge-entropy equivalence class admit the similar charge-entropy normal form structure, differing only in the state of the subsystem $\eta_{\rho}^{(\mathcal K)}$.

The Hilbert space $\mathcal E$ carries only the trivial charge [see Eq.~\eqref{dimchargebd}], whereas all non-trivial charges are associated with the Hilbert space $\mathcal K$. To ensure charge conservation, these charges are constructed such that their corresponding norms are bounded by $n\epsilon_n$ [see Eq.~\eqref{dimchargebd2}]. Moreover, the charges associated with $\mathcal E$ and $\mathcal K$ are constructed solely from the charge-entropy equivalence class [see Eqs.~\eqref{dimchargebd} and \eqref{conanc}]. Therefore, every state within the same charge-entropy equivalence class admits a charge-entropy normal form with the same charge structure on the subsystems $\mathcal E$ and $\mathcal K$. This observation provides the foundation for characterizing state interconversion within a charge-entropy equivalence class, as formalized in the following theorem.
\begin{thm}[Asymptotic equivalence theorem]
\label{thm2}
Let $\rho,\sigma\in\mathcal{C}(s,l_1,\ldots,l_t)$ be any two states
associated with the commuting charges $\{L_c\}_{c=1}^t$. For any
$\epsilon_n>0$, there exist an ancilla state
$\eta_{\mathrm{anc}}\in\mathcal{B}(\mathcal{H}_{\mathrm{anc}})$ with charge
operator $L_{c,\mathrm{anc}}$, and a unitary $U$ satisfying
\begin{equation}
[U,L_c^{\times n}+L_{c,\mathrm{anc}}]=0,
\end{equation}
such that transformation error satisfy
\begin{align}\label{T_err_main}
&\left\|
\Tr_{\mathrm{anc}}
\left[
U
(\rho^{\otimes n}\otimes\eta_{\mathrm{anc}})
U^{\dagger}
\right]
-\sigma^{\otimes n}
\right\|_1
\lesssim 2\delta({\epsilon_n}),
\end{align}
where $\delta({\epsilon_n})$ is given in Eq.~\eqref{deltaepsilon_n}. Moreover, the dimension of the ancilla system and the corresponding charges satisfy
\begin{equation}
\label{diancbd_chargebd}
\dim (\mathcal{H}_{\mathrm{anc}})
\leq
\poly(n)\exp\!\left(3n\epsilon_n\right),
\qquad
\|L_{c,\mathrm{anc}}\|\leq n\epsilon_n .
\end{equation}
\end{thm}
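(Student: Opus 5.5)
The plan is to apply Theorem~\ref{Main_Central_Thm} to both $\rho$ and $\sigma$. Since they lie in the same class $\mathcal{C}(s,l_1,\ldots,l_t)$, we get the same $\mathcal{E}$, $\mathcal{K}$, $P_{\epsilon_n}$ and charge operators $L_c^{(\mathcal E)}=nl_c\1$ and $L_c^{(\mathcal K)}$. We also get the same unitary $V$, because $V$ depends only on the class. Theorem~\ref{Main_Central_Thm} then gives
\[
\big\|V\rho^{\otimes n}V^\dagger-\tau^{(\mathcal E)}\otimes\eta_\rho^{(\mathcal K)}\big\|_1\le\delta(\epsilon_n),
\qquad
\big\|V\sigma^{\otimes n}V^\dagger-\tau^{(\mathcal E)}\otimes\eta_\sigma^{(\mathcal K)}\big\|_1\le\delta(\epsilon_n).
\]
The two normal forms differ only in the small factor on $\mathcal K$. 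So the task reduces to swapping $\eta_\rho^{(\mathcal K)}$ for $\eta_\sigma^{(\mathcal K)}$ in a charge-conserving way, using an ancilla that is a copy of $\mathcal K$.

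\textbf{Construction.} Take $\mathcal{H}_{\mathrm{anc}}:=\mathcal K'\cong\mathcal K$ with charges $L_{c,\mathrm{anc}}:=L_c^{(\mathcal K)}$ and ancilla state $\eta_{\mathrm{anc}}:=\eta_\sigma^{(\mathcal K)}$. The bounds \eqref{diancbd_chargebd} then follow directly from \eqref{dimchargebd2}. Let $F$ be the swap between $\mathcal K$ and $\mathcal K'$, extended to the full space as $F\oplus\1$ on $\operatorname{supp}(P_{\epsilon_n})^\perp\otimes\mathcal K'$. Define
\[
U:=(V^\dagger\otimes\1_{\mathrm{anc}})\,(\1^{(\mathcal E)}\otimes F\oplus\1)\,(V\otimes\1_{\mathrm{anc}}).
\]

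\textbf{Charge conservation.} By \eqref{conscc}, conjugation by $V$ maps $L_c^{\times n}+L_{c,\mathrm{anc}}$ to
\[
\big(L_c^{(\mathcal E)}\otimes\1\otimes\1+\1\otimes L_c^{(\mathcal K)}\otimes\1+\1\otimes\1\otimes L_c^{(\mathcal K')}\big)\oplus\big(P^\perp_{\epsilon_n}L_c^{\times n}\otimes\1+\1\otimes L_c^{(\mathcal K')}\big).
\]
On the first block, $F$ commutes with this operator because the two $\mathcal K$-charges are identical. On the second block the swap acts as the identity. Hence $[U,L_c^{\times n}+L_{c,\mathrm{anc}}]=0$.

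\textbf{Error bound.} Tracing out the ancilla after the swap yields $\tau^{(\mathcal E)}\otimes\eta_\sigma^{(\mathcal K)}$ exactly when the input is the ideal normal form $\tau^{(\mathcal E)}\otimes\eta_\rho^{(\mathcal K)}$. Note that $V$ on the system commutes with the partial trace over the ancilla, so we may trace out first and then conjugate by $V^\dagger$. The error then splits into two pieces:
\begin{itemize}
\item[(i)] The input discrepancy, at most $\delta(\epsilon_n)$. It survives because the unitaries, the swap and the partial trace are all contractive in trace norm.
\item[(ii)] The discrepancy between $V^\dagger(\tau\otimes\eta_\sigma)V$ and $\sigma^{\otimes n}$, again at most $\delta(\epsilon_n)$.
\end{itemize}
Together these give the claimed $2\delta(\epsilon_n)$.

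\textbf{Main obstacle.} The delicate point is that $\tau\otimes\eta_\rho$ is supported on $\mathcal E\otimes\mathcal K$, whereas $V\rho^{\otimes n}V^\dagger$ has weight, and coherences, outside $\operatorname{supp}(P_{\epsilon_n})$. So one must check that the block-diagonal extension $F\oplus\1$ handles the off-block parts without spoiling contractivity. I would deal with this by comparing the actual input to the ideal state at the level of the full joint state, before applying the channel, so that contractivity covers everything at once. This is why the bound reads $\lesssim$, allowing for constant factors or gentle-measurement-type corrections.
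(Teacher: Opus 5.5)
Your proposal is correct and follows the paper's proof essentially verbatim: you apply Theorem~\ref{Main_Central_Thm} with the class-dependent $V$, take $\mathcal H_{\mathrm{anc}}\cong\mathcal K$ with $\eta_{\mathrm{anc}}=\eta_\sigma^{(\mathcal K)}$ and $L_{c,\mathrm{anc}}=L_c^{(\mathcal K)}$, conjugate the swap of the two $\mathcal K$ factors by $V$, and obtain $2\delta(\epsilon_n)$ from the triangle inequality; your charge-conservation check is in fact more explicit than the paper's. Your ``main obstacle'' is not a real one: $X\mapsto\Tr_{\mathrm{anc}}[U(X\otimes\eta_{\mathrm{anc}})U^\dagger]$ is a CPTP map on the whole system space, so its contractivity absorbs the off-block weight and coherences with no extra correction, and the $\lesssim$ simply reflects the asymptotic form of $\delta(\epsilon_n)$ already present in Theorem~\ref{Main_Central_Thm}.
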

\begin{proof}
By Theorem~\ref{Main_Central_Thm}, there exists a charge-conserving unitary \(V\) associated with the charge-entropy equivalence class \(\mathcal C(s,l_1,\ldots,l_t)\) such that \(\rho^{\otimes n}\) and \(\sigma^{\otimes n}\) are mapped, up to an error \(\delta({\epsilon_n})\) [cf.~Eq.~\eqref{dt}], to their respective charge-entropy normal forms,
\begin{equation}
\tau^{(\mathcal E)}\otimes\eta_{\rho}^{(\mathcal K)},
\qquad
\tau^{(\mathcal E)}\otimes\eta_{\sigma}^{(\mathcal K)}.
\end{equation}
Now, we choose the ancillary system along with the charges to be
\begin{equation}
\mathcal H_{\mathrm{anc}}=\mathcal K,
\quad
\eta_{\mathrm{anc}}=\eta_{\sigma}^{(\mathcal K)},\quad L_{c,\mathrm{anc}}=L_c^{(\mathcal{K})},
\end{equation}
where $L_c^{(\mathcal{K})}$ is the charge from Theorem \ref{Main_Central_Thm} with $c\in\{1,\ldots,t\}$. The desired charge-conserving unitary is defined as
\begin{equation}\label{3product}
U = V^{\dagger}
\,\mathrm{SWAP}^{(\mathcal{K})}
\,V,
\end{equation}
where \(\mathrm{SWAP}^{(\mathcal{K})}\) exchanges the subsystems $(\eta_{\rho}^{(\mathcal K)}$ and $\eta_{\sigma}^{(\mathcal K)})$ supported on Hilbert space $\mathcal{K},$ and acts trivially on the rest. Since the charges associated with \(\mathcal K\) depend only on the charge-entropy equivalence class, the two subsystems carry identical charges. Hence the $\mathrm{SWAP}^{(\mathcal{K})}$ is charge conserving, and therefore so is \(U\). The action of \(U\) is illustrated in Fig. \ref{fig:state_transformation_2}

Since the SWAP operation is exact, the only errors arise from the applications of \(V\) and \(V^\dagger\). By the triangle inequality, the total transformation error is therefore at most $2\delta({\epsilon_n})$ where $\delta({\epsilon_n})$ is given in Eq.~\eqref{deltaepsilon_n}, hence establishing Eq.~\eqref{T_err_main}. Finally, the bounds on the ancilla dimension and charges in Eq.~\eqref{diancbd_chargebd} follow immediately from Eqs.~\eqref{dimchargebd} and \eqref{dimchargebd2} of Theorem~\ref{Main_Central_Thm}.
\end{proof}

\begin{figure}[t]
    \centering
    \includegraphics[width=8.5 cm]{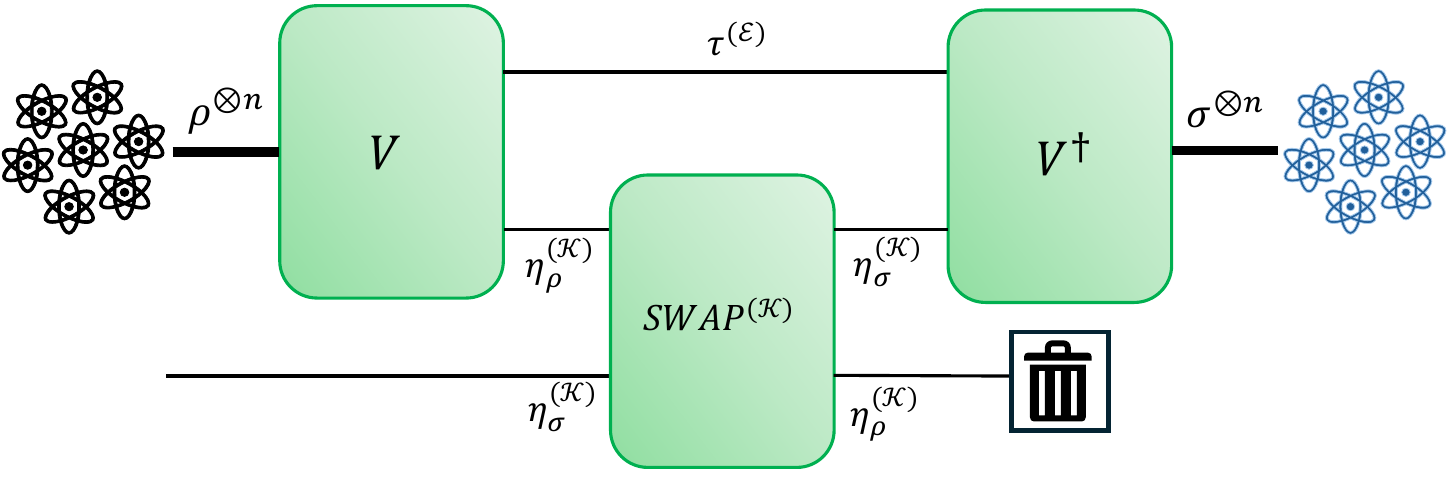}
    \caption{Transformation of \(\rho^{\otimes n}\) into \(\sigma^{\otimes n}\), where \(\rho\) and \(\sigma\) belong to the same charge--entropy equivalence class, assisted by a sublinear-sized ancilla. The unitary \(V\) maps \(\rho^{\otimes n}\) to the charge--entropy normal form \(\tau^{(\mathcal{E})}\otimes\eta^{(\mathcal{K})}_{\rho}\) with error at most \(\delta(\epsilon_n)\). Since the linearly supported component \(\tau^{(\mathcal{E})}\) is identical for states within the same charge--entropy equivalence class, the desired transformation is achieved by swapping \(\eta^{(\mathcal{K})}_{\rho}\) with an ancilla prepared in the state \(\eta^{(\mathcal{K})}_{\sigma}\), followed by \(V^\dagger\). The total transformation error is at most \(2\delta(\epsilon_n)\), accounting for the errors incurred in applying \(V\) and \(V^\dagger\).}
    \label{fig:state_transformation_2}
\end{figure}
In the following, we show that the state transformation can be implemented with a super-polynomially small error while requiring only a sublinear-sized ancilla system. 

\begin{thm}[Super-polynomial accuracy with sublinear ancilla]
\label{corr1}
Under the assumptions of Theorem~\ref{thm2}, choosing
\begin{equation}
    \epsilon_n=\frac{(\log n)^2}{\sqrt{n}},
\end{equation}
the charge-conserving unitary $U$ can be implemented with an ancillary system satisfying
\begin{align}
    \log\left(\dim (\mathcal{H}_{\mathrm{anc}})\right)
    &=O\left(\sqrt{n}(\log n)^2\right),\\
    \|L_{c,\mathrm{anc}}\|
    &\leq\sqrt{n}\left(\log n\right)^2,
\end{align}
while the transformation error decays super-polynomially as
\begin{equation}
    \exp\bigl[-\Theta((\log n)^2)\bigr].
\end{equation}
\end{thm}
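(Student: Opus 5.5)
This result is a direct specialization of Theorem~\ref{thm2}, so the plan is to substitute $\epsilon_n=(\log n)^2/\sqrt{n}$ into the bounds of Eq.~\eqref{diancbd_chargebd} and into the error $2\delta(\epsilon_n)$ from Eq.~\eqref{deltaepsilon_n}. No new construction is needed: the ancilla is $\mathcal K$ prepared in $\eta_\sigma^{(\mathcal K)}$, and the unitary is $U=V^\dagger\,\mathrm{SWAP}^{(\mathcal K)}V$, exactly as in the proof of Theorem~\ref{thm2}.

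\textbf{Ancilla size and charges.} Start from $\dim(\mathcal H_{\mathrm{anc}})\le \poly(n)e^{3n\epsilon_n}$. This gives
\begin{equation}
\log\dim(\mathcal H_{\mathrm{anc}})\le O(\log n)+3\sqrt{n}(\log n)^2=O\!\left(\sqrt{n}(\log n)^2\right).
\end{equation}
The charge bound is immediate: $\|L_{c,\mathrm{anc}}\|\le n\epsilon_n=\sqrt{n}(\log n)^2$. Both quantities are sublinear in $n$, as required.

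\textbf{Error exponent.} Compute $\log(1/\epsilon_n)=\tfrac12\log n-2\log\log n$. This equals $\Theta(\log n)$, and for large $n$ it lies between $\tfrac14\log n$ and $\tfrac12\log n$. Hence
\begin{equation}
\frac{\epsilon_n}{\log(1/\epsilon_n)}=\Theta\!\left(\frac{\log n}{\sqrt n}\right),
\qquad
\frac n4\left(\frac{\epsilon_n}{\log(1/\epsilon_n)}\right)^2=\Theta\!\left((\log n)^2\right),
\end{equation}
with explicit constants; for instance, the exponent is at least $(\log n)^2$ for large $n$. The error is then bounded by $2\delta(\epsilon_n)\le \poly(n)\exp[-\Theta((\log n)^2)]$.

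\textbf{Main obstacle.} The one step needing care is absorbing the $\poly(n)=e^{O(\log n)}$ prefactor. This prefactor comes from type counting in Theorem~\ref{Main_Central_Thm}, so its degree is fixed, depending only on $d$ and $t$ and not on $\epsilon_n$. I would verify this from the proof of Theorem~\ref{Main_Central_Thm}. Once it holds, $O(\log n)=o((\log n)^2)$, so the prefactor is absorbed and the total error is $\exp[-\Theta((\log n)^2)]$. This decays faster than any inverse polynomial, since $(\log n)^2/\log n\to\infty$.

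The matching upper estimate $\delta\ge\exp[-O((\log n)^2)]$, needed for the $\Theta$, follows from the same computation of the exponent.

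Finally, check the implicit condition that $\epsilon_n\to0$, so that the dimension bounds in Eq.~\eqref{dimchargebd} are meaningful: $\dim\mathcal E\approx e^{n(s-o(1))}$ still carries the linear entropy.
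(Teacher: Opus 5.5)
Your proposal is correct and follows the paper's proof essentially verbatim. Substitute $\epsilon_n=(\log n)^2/\sqrt{n}$ into the ancilla-dimension and charge bounds of Theorem~\ref{thm2}. Observe that $\log(1/\epsilon_n)=\Theta(\log n)$, so the exponent $\frac{n}{4}\bigl(\epsilon_n/\log(1/\epsilon_n)\bigr)^2$ is $\Theta((\log n)^2)$. Then absorb the $\poly(n)$ prefactor.

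Your explicit constants, and your remark that the prefactor's degree is fixed by type counting (depending on $d$, not on $\epsilon_n$), make precise a point the paper uses without comment. Note that your lower estimate $\delta(\epsilon_n)\ge\exp[-O((\log n)^2)]$ concerns the bound, not the true error; the paper reads the $\Theta$ the same way.
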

The proof of this theorem can be found in Subsection B of the Methods.

At this point, we make an important remark. The results stated in Theorems~\ref{thm2} and~\ref{corr1} resemble the asymptotic equivalence established in~\cite[Thm.~1]{Sparaciari-AET}. However, our proof is entirely different. Besides providing a simpler derivation and extending the result to arbitrary sets of commuting charges, we show that the transformation can be achieved with only a sublinear-sized ancilla while the error decreases \emph{super-polynomially} with the number of copies. In contrast, the proof of~\cite{Sparaciari-AET} achieves only \emph{polynomial} error decay under the same ancilla-size constraint. The intuition behind this is explained in the following paragraph.

While our framework considers a general set of commuting charges, \cite{Sparaciari-AET} focuses on the special case of a single charge, the Hamiltonian, and employs a sublinear-sized product ancilla of the form $\eta=\eta_1\otimes\eta_2$. The state \begin{equation} \eta_1:= \left(\frac{\1}{d}\right)^{\otimes n_1} \otimes \ketbra{0}^{\otimes n_2} \end{equation} serves as an information battery with zero Hamiltonian, enabling the modification of the spectrum of $\rho^{\otimes n}$ to that of $\sigma^{\otimes n}$ through a global unitary transformation. On the other hand, the state \begin{equation}\label{rf_sparaciari}
\eta_2=\ketbra{\psi}, \qquad \ket{\psi} = \frac{1}{\sqrt{|k\mathcal{L}+\mathcal{L}|}} \sum_{h\in k\mathcal{L}+\mathcal{L}} \ket{h}, 
\end{equation} 
acts as a reference frame with Hamiltonian having operator norm $o(n)$, allowing the aforementioned unitary transformation to be lifted to an energy-conserving unitary. Here,
\begin{equation}
    \mathcal{L}
    =
    \mathcal{E}_{\rho}
    \cup
    (-\mathcal{E}_{\rho})
    \cup
    \{0\},
\end{equation}
where $\mathcal{E}_{\rho}$ denotes the set of typical energy values of $H^{\times n}$, so that $|\mathcal{L}|=\poly(n)$.

We emphasize that implementing an arbitrary unitary by means of a reference frame necessarily involves a trade-off between the ancilla size and the implementation error. A larger reference frame yields a more accurate energy-conserving implementation, but at the cost of a larger ancilla. Indeed, \cite[Lemma~16]{Sparaciari-AET} and more specifically \cite[Eq.~(A16)]{Sparaciari-AET} shows that the transformation error is bounded from above by a constant multiple of 
\begin{equation}
  \label{sparaciari_lb2}
    \frac{|k\mathcal{L}+\mathcal{L}|}{|k\mathcal{L}|}-1,
\end{equation}
and inspection of the calculation on the functioning shows that the error is also bounded from below by a power of this expression. 
Since $|\mathcal{L}|=\poly(n)$, one can ensure sublinear ancilla scaling by suitably restricting the choice of $k$ in Eq.~\eqref{rf_sparaciari} (e.g., $k\leq n^{1/7}$ as in \cite{Sparaciari-AET}). However, the same parameter $k$ also constrains the achievable transformation error. Consequently, the lower bound on the transformation error in Eq.~\eqref{sparaciari_lb2} exhibits at best polynomial decay in $n$.  

Our proof avoids this bottleneck altogether. Rather than relying on a generic reference frame construction for implementing arbitrary unitaries, we explicitly construct the required charge-conserving unitary via the charge entropy normal form, thereby achieving super-polynomially small transformation error with the same sublinear ancilla scaling.

\subsection{Transformation from unknown initial states}
\begin{figure*}[!htbp]
    \centering
    \includegraphics[width=13 cm]{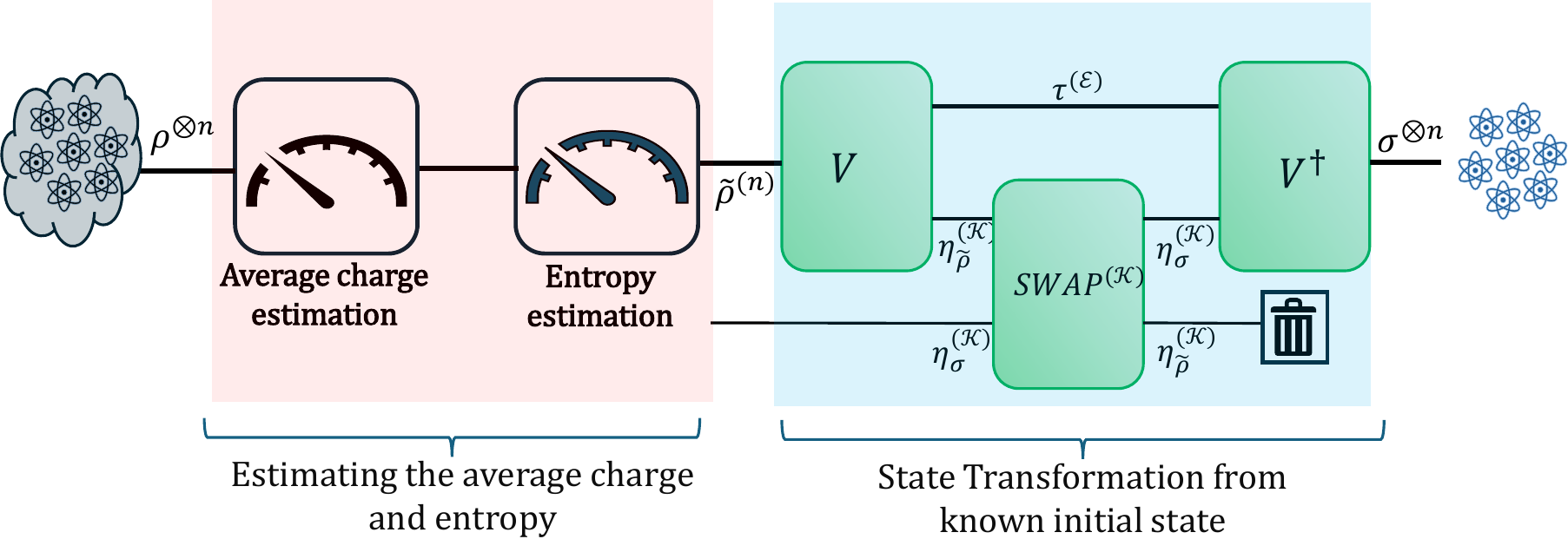}
    \caption{Schematic illustration of universal thermodynamic state transformations from unknown quantum states. Given \(n\) copies of an unknown initial state \(\rho^{\otimes n}\), one first performs a charge-preserving measurement to estimate the entropy and average charges. Once these quantities are determined, one applies a state-transformation protocol that depends only on the estimated entropy and average charges. Combining these two steps---estimation of the entropy and average charges, followed by the corresponding state-transformation protocol---yields a universal transformation protocol.}
    \label{fig:work_extraction_unknown_22}
\end{figure*}

In this section, we consider the problem of transforming an unknown initial state
$\rho\in\mathcal{B}(\mathcal{H}_d)$ into a given target state
$\sigma\in\mathcal{B}(\mathcal{H}_d)$ via a charge conserving unitary $U$. We assume that no prior information
about the initial state is available apart from the dimension $d$ of the Hilbert
space. In particular, neither the entropy nor the average charge of $\rho$ is
known, and consequently, its charge-entropy equivalence class is also unknown. Therefore, to determine whether the transformation from $\rho$ to the target
state $\sigma$ is achievable via a charge-conserving unitary in presence of sublinear sized ancilla with sublinear charge, one needs to
verify whether $\rho$ and $\sigma$ belong to the same charge-entropy
equivalence class. If they belong to the same charge-entropy equivalence class,
then Theorem~\ref{thm2} can be applied to construct such a charge-conserving unitary
$U$ that implements the transformation. Otherwise, the state transformation is not possible.

To determine the entropy and average charge of the initial state $\rho$, one
first performs measurements on the $n$-copy state $\rho^{\otimes n}$ while
ensuring negligible disturbance. Violating this constraint may result
in a post-measurement state belonging to a different charge-entropy equivalence class, thereby preventing the subsequent state transformation. 

Since the allowed operations are restricted to charge-conserving unitaries [see Eq.~\eqref{defn_cc}]
acting on the system and ancilla, we consider only those measurements that admit
a realization within this class of operations. Specifically, we extend the ancilla by introducing a probe system
of sublinear size, initialized in a pure state \(\ket{0}\) and carrying zero
charge for all $c\in\{1,\ldots,t\}$ i.e.,

\begin{equation}
    \forall c\in\{1,\ldots,t\}\quad L_{c,\text{probe}}=0.
\end{equation}
The measurement procedure is then implemented by applying a
charge-conserving unitary on the joint system consisting of
\(\rho^{\otimes n}\), the original ancilla, and the probe, followed by a
projective measurement on the probe. In this way, the measurement process is
realized entirely within the class of charge-conserving operations, with the
probe acting as a register that stores the measurement outcome.

By the Wigner-Araki-Yanase (WAY) theorem, repeatable projective measurements
compatible with the conservation laws must have measurement operators that
commute with the conserved quantities; see \cite{AhmadiJenningsRudolph2013} for a modern exposition. Therefore, the allowed projective
measurements on \(\rho^{\otimes n}\) are restricted to those with projectors
\(\{M_i\}\) satisfying
\begin{equation}
    \forall i\;\forall c\in\{1,\ldots,t\} \quad
    [M_i,L_c^{\times n}]=0.
\end{equation}
We refer to such measurements as \emph{charge-preserving measurements}.

As discussed earlier, unlike the case of a known initial state, here the average charge and the entropy of the initial state are unknown. Our objective is therefore to estimate these quantities efficiently by performing charge-preserving measurements that induce only negligible disturbance to the state, thereby preserving its suitability for the subsequent state transformation. Once the entropy and average charges of the initial state have been estimated, the problem reduces to the known-state setting, and the state transformation protocol described in the previous section can be applied directly. 

We now explain how to efficiently estimate the average conserved charges, namely the expectation values $\Tr(L_c\rho)$ for $c\in\{1,\ldots,t\}$, while causing an asymptotically vanishing disturbance to the state. Since the charges mutually commute, they admit a common eigenbasis $\{\ket{x}\}_{x=1}^{d}$ such that
\begin{equation}
\forall c\in\{1,\ldots,t\}, \qquad
L_c=\sum_{x=1}^{d}L_c(x)\ket{x}\!\bra{x}.
\end{equation}
Consequently,
\begin{equation}
\Tr(L_c\rho)=\sum_{x=1}^{d}L_c(x)\langle x|\rho|x\rangle.
\end{equation}
Therefore, it suffices to estimate the diagonal elements of $\rho$ in the common eigenbasis. Defining
\begin{equation}
\alpha_x:=\langle x|\rho|x\rangle,\qquad x\in\{1,\ldots,d\},
\end{equation}
the expectation value of every conserved charge is recovered as
\begin{equation}
\Tr(L_c\rho)=\sum_{x=1}^{d}L_c(x)\alpha_x.
\end{equation}
To estimate the values of $\alpha_x$, we employ the gentle tomography protocol introduced in \cite{BHL2006_PRA}, that uses a coarse-grained projective measurement on the unknown state $\rho^{\otimes n}$ to learn the state while causing only an asymptotically vanishing disturbance.

\begin{thm}[Estimating average charges with charge-preserving measurement]
\label{thm_charge_estimation}
   For \(m=n^z\) with \(0<z<\frac12\), there exists a charge preserving measurement for estimating \(\alpha_x\) has failure probability at most $\mathcal{O}\!\left(n^{\,z-\frac12}\log n\right).$ Upon successful execution, the estimated value of $\alpha_x$ deviates from true value of $\alpha_x$ by at most $ \mathcal{O}\!\left(n^{-z} \log n\right),$ while the disturbance induced on the state is upper-bounded by $\mathcal{O}\!\left(n^{-p}\right)$ where $p$ is a positive constant.
\end{thm}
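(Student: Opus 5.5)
We will construct the measurement from the type (empirical occupation) projectors in the common eigenbasis $\{\ket{x}\}$. For a string $x^n=(x_1,\ldots,x_n)$, let $N(x|x^n)$ be the number of occurrences of $x$. For a type $T=(T_1,\ldots,T_d)$, define the type projector
\begin{equation}
\Pi_T=\sum_{x^n:\,N(\cdot|x^n)=nT}\ket{x^n}\!\bra{x^n}.
\end{equation}
Each $\Pi_T$ is diagonal in the product eigenbasis of every $L_c^{\times n}$. Hence $[\Pi_T,L_c^{\times n}]=0$, and so is every coarse-graining of the $\Pi_T$. Any such projective measurement is therefore charge preserving, and it can be realized by a charge-conserving controlled unitary that writes the (coarse-grained) outcome into the zero-charge probe. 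The probe needs only $O(d\log n)$ qubits, which is sublinear. Since $\Tr(\Pi_T\rho^{\otimes n})$ depends only on the diagonal $\alpha=(\alpha_x)$, the outcome statistics are those of $n$ i.i.d.\ samples from $\alpha$.

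Following the gentle tomography idea of \cite{BHL2006_PRA}, we will not measure the exact type. Instead we coarse-grain each frequency $N(x|x^n)/n$ into bins of width $\Delta=\Theta(n^{-z}\log n)$, i.e.\ about $m=n^z$ bins. To defeat boundary effects we use a randomly shifted grid: the offset is drawn uniformly from a set of $\Theta(\log n)$ (or $\Theta(\sqrt n)$) equally spaced shifts, and we output the bin center. We then show three things.

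\textbf{Accuracy.} By Hoeffding, $|N(x|x^n)/n-\alpha_x|\le n^{-1/2}\log n$ except with probability $e^{-\Omega(\log^2 n)}$. On that event the bin containing the empirical frequency lies within $\Delta+n^{-1/2}\log n=O(n^{-z}\log n)$ of $\alpha_x$, because $z<1/2$.

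\textbf{Failure probability.} We declare failure if, for some $x$, the concentration window $[\alpha_x-n^{-1/2}\log n,\,\alpha_x+n^{-1/2}\log n]$ straddles a bin boundary. For a uniformly random shift this happens with probability at most $2n^{-1/2}\log n/\Delta=O(n^{z-1/2})$ per coordinate, up to the $\log n$ factor. A union bound over the $d$ coordinates and the Hoeffding tail gives $O(n^{z-1/2}\log n)$.

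\textbf{Disturbance.} On success, the projector $M$ onto the realized coarse bin satisfies $\Tr(M\rho^{\otimes n})\ge 1-\epsilon$ with $\epsilon=e^{-\Omega(\log^2 n)}$. This is because $\rho^{\otimes n}$ and its dephasing in type classes have the same statistics for $M$, $M$ being a sum of $\Pi_T$. The gentle measurement lemma then gives $\|M\rho^{\otimes n}M/\Tr(M\rho^{\otimes n})-\rho^{\otimes n}\|_1\le 2\sqrt{\epsilon}$. Averaging over outcomes and over shifts, the expected disturbance is bounded by the failure probability plus $2\sqrt\epsilon$. This yields $O(n^{-p})$ with, for instance, $p=\frac12-z-o(1)$.

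The main obstacle is the disturbance bound in the unlucky case, when the window straddles a boundary. There the measurement is not gentle, so we must account for it through the failure probability rather than through the gentle measurement lemma. The cleanest way is to make the shift classical randomness known to the experimenter and to bound the averaged disturbance via convexity. A second subtlety is making the shift selection itself charge-conserving. This is automatic, however, because every candidate coarse-grained projector commutes with each $L_c^{\times n}$, so controlling on a zero-charge random register keeps the overall unitary charge conserving.
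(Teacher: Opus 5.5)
Your argument is correct once one option is fixed (below), but it is organized differently from the paper. The paper does not prove this statement from scratch. For each fixed $x$ it builds the binomial-count measurement $\{N_k\}$ from the per-copy test $\{\ket{x}\!\bra{x},\1-\ket{x}\!\bra{x}\}$. It coarse-grains the count range $[0,n]$ with $m-1$ independent uniformly random boundaries, invokes \cite[Prop.~1]{BHL2006_PRA} for the failure/accuracy/disturbance trade-off, and union-bounds over $x$.

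You instead coarse-grain the full type measurement, so all $\alpha_x$ are read out at once. This is equivalent to the paper's $d$ measurements, which are all diagonal in the same product basis and hence commute. You then supply the proof yourself from Hoeffding, a randomly shifted regular grid, and the gentle operator lemma. This buys three things:
\begin{itemize}
\item an explicit check of charge preservation and of the probe implementation, which the paper leaves implicit;
\item a deterministic bin width, so you need no analogue of the paper's ``Case~2'' (no boundary close to the true value), which for i.i.d.\ uniform boundaries is what costs the $\log n$ in the accuracy;
\item a conditional disturbance $2\sqrt{\epsilon}=e^{-\Omega(\log^2 n)}$, with the polynomial rate entering only through the probability of a bad grid.
\end{itemize}
The paper's route buys brevity by deferring to a published proof.

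The fix: the option of $\Theta(\log n)$ equally spaced offsets does not give the stated failure probability. With $K$ offsets, the possible boundary positions form a grid of spacing $\Delta/K$. For $K=\Theta(\log n)$ this spacing is $\Theta(n^{-z})$, much larger than the window length $2n^{-1/2}\log n$. For a worst-case $\alpha_x$ the window contains one such position, and exactly one offset out of $K$ then puts a boundary inside it. The straddle probability is therefore $1/K=\Theta(1/\log n)$, far above $\mathcal{O}(n^{z-1/2}\log n)$. In general it is at most $2n^{-1/2}\log n/\Delta+1/K$, so you need $K\gtrsim n^{1/2-z}$: keep your $\Theta(\sqrt n)$ choice, or use a continuous uniform offset.

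A smaller point: width $\Delta=\Theta(n^{-z}\log n)$ gives about $n^{z}/\log n$ bins, not $n^z$. Either choice satisfies the theorem, since the $\log n$ factor merely moves between the accuracy bound and the failure bound.
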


We now present the theorem establishing that the von Neumann entropy of the initial state can be estimated efficiently using charge preserving measurement inducing only a negligible disturbance to the state.  

\begin{thm}[Estimating entropy with charge-preserving measurement]
\label{thm_entropy_estimation}
For any \(0<z<\tfrac{1}{2}\), there exists a charge-preserving measurement applied to \(\rho^{\otimes n}\) that estimates the von Neumann entropy \(S(\rho)\) of an unknown quantum state \(\rho\). The protocol fails with probability at most $\mathcal{O}\!\left(n^{\,z-\frac{1}{2}}(\log n)^{3/2}\right).$ Conditioned on success, the estimated entropy deviates from the true value by at most $\mathcal{O}\!\left(n^{-z}\log n\right)$, while the disturbance induced on the state is bounded by $\mathcal{O}\!\left(n^{-q}\right)$, where $q>0$ is a constant.
\end{thm}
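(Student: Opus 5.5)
The plan is to reduce entropy estimation to spectrum estimation and use the Schur--Weyl (Keyl--Werner / empirical Young diagram) measurement, coarse-grained so that the disturbance stays small. The measurement is the projection onto the isotypic components \(\mathcal{H}_d^{\otimes n}=\bigoplus_{\lambda\vdash n} \mathcal{U}_\lambda\otimes\mathcal{V}_\lambda\), where \(\mathcal U_\lambda\) is the \(U(d)\) irrep and \(\mathcal V_\lambda\) the \(S_n\) irrep. The resulting estimate is \(\hat S=H(\lambda/n)\), the Shannon entropy of the normalized diagram.

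\textbf{Charge preservation.} The projectors \(\Pi_\lambda\) lie in the commutant of \(\{U^{\otimes n}\}\) and are also invariant under permutations. Each \(L_c^{\times n}\) is the generator of \(e^{itL_c}{}^{\otimes n}\), so it lies in the algebra generated by \(\{U^{\otimes n}\}\). Hence \([\Pi_\lambda,L_c^{\times n}]=0\) for all \(c\). Any coarse-graining into sums of \(\Pi_\lambda\) is therefore charge preserving, and by the probe construction above it can be realized with a zero-charge probe. The estimation accuracy then follows from the known Keyl--Werner/Hayashi large-deviation bound: \(\Pr[\|\lambda/n-\mathrm{spec}(\rho)\|_1>\delta]\le \poly(n)e^{-cn\delta^2}\).

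\textbf{Gentleness.} Following the gentle tomography strategy of \cite{BHL2006_PRA} used in Theorem~\ref{thm_charge_estimation}, I would not measure \(\lambda\) exactly. Instead I would measure a coarse-grained version. Choose \(m=n^z\) and a random offset, and bin the value of \(H(\lambda/n)\) into intervals of width \(\sim n^{-z}\log n\); equivalently, bin \(\lambda/n\) into a grid of spacing \(n^{-z}\). Grouping the \(\Pi_\lambda\) accordingly gives \(\{M_i\}\).

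The spectrum concentrates within \(O(\sqrt{\log n/n})\) of \(\mathrm{spec}(\rho)\). With probability \(1-O(n^{z-1/2}\log n)\) over the offset, this window lies strictly inside one bin. When it does, a single \(M_i\) carries weight \(1-\poly(n)e^{-\Omega((\log n)^2)}\), and the gentle measurement lemma bounds the disturbance. Failure means the window straddles a bin boundary. The probability of this is (window width)/(bin width) \(\sim \sqrt{\log n/n}\cdot n^{z}\). The extra \((\log n)^{3/2}\) in the statement should come from widening the window to absorb the Lipschitz/continuity factor of entropy. On success the error is \(O(n^{-z}\log n)\): half a bin width plus the Fannes-type continuity loss \(|H(p)-H(q)|\le \|p-q\|_1\log d+h(\|p-q\|_1)\), which produces the \(\log n\).

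\textbf{Main obstacle.} The hardest step is converting "failure" into a bound on the disturbance with polynomial rate \(n^{-q}\). The difficulty is that the entropy map \(\lambda\mapsto H(\lambda/n)\) is only Hölder near degenerate spectra, so the concentration window in entropy space can be wider than in spectrum space. I would handle this as follows. Bin the entropy value directly. Use Fannes–Audenaert continuity to bound the entropy window by \(O(\sqrt{\log n/n}\,\log n)\). Then average over the random offset, so the expected disturbance is at most \(O(\sqrt{\text{failure prob}})+\) the exponentially small tail. This gives \(q=\tfrac12(\tfrac12-z)-o(1)>0\).
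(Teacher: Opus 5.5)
Your proposal is correct and is essentially the paper's proof. The paper also coarse-grains the Schur--Weyl projectors $\1^{(\mathcal{P}_\lambda)}\otimes\1^{(\mathcal{Q}_\lambda)}$ by binning $H(\bar\lambda)$ into $m=n^z$ randomly placed bins. It turns spectrum concentration into an entropy window $a_n=O\bigl((\log n)^{3/2}/\sqrt{n}\bigr)$ via Fannes--Audenaert, bounds the chance that a bin boundary falls in this window, and then applies the gentle operator lemma.

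The differences are minor:
\begin{itemize}
\item Your equal-width grid with a random offset fixes the bin width, so the accuracy bound is immediate. The paper instead draws boundaries i.i.d.\ uniformly and needs a separate argument that some boundary lies near $S(\rho)$.
\item Your offset-averaged disturbance $O(n^{-\frac12(\frac12-z)})$ matches the paper's accounting for the combined protocol, not the conditional bound $O(n^{-p})$ inside this proof.
\item A spectral window of order $\sqrt{\log n/n}$ gives only a polynomial tail $n^{-\Omega(1)}$, not $e^{-\Omega((\log n)^2)}$ as you wrote. This is still enough for the stated bounds.
\end{itemize}
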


We emphasize that, in our protocol, the charge-preserving measurement used to estimate the entropy is applied to the post-measurement state obtained after estimating the average energy (see Fig.~\ref{fig:work_extraction_unknown_22}), whereas Theorem~\ref{thm_entropy_estimation} establishes the entropy-estimation guarantee when the same measurement is applied directly to the initial state $\rho^{\otimes n}$. Nevertheless, by the triangle inequality, the total disturbance resulting from the sequential estimation of the average charge and the entropy is bounded by the sum of the individual disturbances associated with the two estimation procedures. In particular, if the disturbances associated with the charge and entropy estimation are $\delta_L$ and $\delta_S$, respectively, then
\begin{equation}
\delta_{\mathrm{tot}}
\leq \delta_L+\delta_S.
\end{equation}
As Theorems \ref{thm_charge_estimation} and \ref{thm_entropy_estimation} give $\delta_L\leq\mathcal{O}(n^{-p})$ and $\delta_S\leq \mathcal{O}(n^{-q})$, the total disturbance remains asymptotically vanishing,
\begin{equation}\label{addisturbance}
\delta_{\mathrm{tot}}
\leq\mathcal{O}\!\left(n^{-\min\{p,q\}}\right).
\end{equation}

\section{Application: Universal work extraction}
We now use our results to show that the maximal amount of work extractable from an unknown initial state is asymptotically the same as when the initial state is known. To this end, let us first briefly recall work extraction from a known initial state within the resource theory of work and heat, where the goal is to extract the maximum amount of work from many copies of a given state. To describe work extraction, we focus on the setting in which the Hamiltonian is the only conserved charge. To model the extracted work explicitly, we introduce a battery consisting of $l$ copies of a system initially prepared in its pure ground state, $\ketbra{E_{\min}}^{\otimes l}$, where $l$ is to be determined. Extracting work from a known state $\rho^{\otimes n}$ then amounts to implementing the transformation
\begin{equation}
    \omega_{\mathrm{in}}
    =
    \rho^{\otimes n}\otimes\ketbra{E_{\min}}^{\otimes l}
\end{equation}
into
\begin{equation}
    \omega_{\mathrm{tar}}
    =
    \sigma^{\otimes n}\otimes\ketbra{E_{\text{tar}}}^{\otimes l},
\end{equation}
where the final state $\sigma$ is chosen so as to maximize the extracted work per copy of $\rho$,
\begin{equation}
    W:=\frac{l}{n}\left(E_{\text{tar}}-E_{\min}\right).
\end{equation}
By Theorem~2, such a transformation is possible if and only if $\omega_{\mathrm{in}}$ and $\omega_{\mathrm{tar}}$ have the same energy and entropy. Since the battery is pure both initially and finally, entropy conservation implies
\begin{equation}
    S(\omega_{\mathrm{in}})=S(\omega_{\mathrm{tar}})\quad\Leftrightarrow\quad S(\rho)=S(\sigma).
\end{equation}
Moreover, energy conservation, $E(\omega_{\mathrm{in}})=E(\omega_{\mathrm{tar}})$
\begin{equation}\label{Ext_work}
    W=\frac{l}{n}\left(E_{\text{tar}}-E_{\min}\right)
    =E(\rho)-E(\sigma).
\end{equation}
Hence, maximizing the energy gained by the battery, or equivalently maximizing $l$, reduces to minimizing the average energy of $\sigma$ subject to the entropy constraint $S(\sigma)=S(\rho)$. As shown in Ref.~\cite{Sparaciari-AET}, the state that minimizes the energy at fixed von Neumann entropy is the Gibbs state $\tau_{\rho}$ satisfying
\begin{equation}
    \tau_{\rho}
    =
    \frac{e^{-\beta H}}{\Tr\!\left(e^{-\beta H}\right)},
    \qquad
    S(\tau_{\rho})=S(\rho),
\end{equation}
where $\beta$ is chosen such that $\tau_{\rho}$ has the same entropy as $\rho$ which implies to obtain maximum work the target state should be
\begin{equation}
    \omega_{\text{fin}}:=\tau_{\rho}^{\otimes n} \otimes\ketbra{E_{\max}}^{\otimes l}.
\end{equation}
Thus, the maximal extractable work is
\begin{equation}\label{maxWork}
    W_{\max}=E(\rho)-E(\tau_{\rho}),
\end{equation}
as illustrated by the energy--entropy diagram in Fig.~\ref{fig:work_extraction}. It is worth emphasizing that the expression in Eq.~\eqref{maxWork} is consistent with the results of Refs.~\cite{AlickiBoost,Ergotropy}.

This characterization reveals an important fact: for a known initial state, the maximal extractable work depends only on two quantities, namely its average energy and von Neumann entropy. This observation allows us to extend the same optimal work-extraction to an unknown initial state. The protocol proceeds as follows:
\begin{enumerate}
    \item We first estimate the average energy and entropy of the unknown state. By Theorems~\ref{thm_charge_estimation} and~\ref{thm_entropy_estimation}, these quantities can be estimated using an energy-preserving measurement. Moreover, by Eq.~\eqref{addisturbance}, the disturbance induced by this measurement vanishes asymptotically. By continuity of both the average energy and the von Neumann entropy, the corresponding changes in these quantities therefore also vanish in the asymptotic limit.

    \item Having obtained asymptotically accurate estimates of the average energy and entropy, we then apply the optimal work-extraction protocol for the known-state setting given in Theorem~\ref{thm2}. Since the disturbance generated by the estimation step vanishes asymptotically, the resulting changes in average energy and entropy also vanish. Consequently, the work extracted according to Eq.~\eqref{Ext_work} converges to the same value as in the case where the initial state is known.
\end{enumerate}
Therefore, in the asymptotic limit, ignorance of the initial state does not reduce the maximal amount of extractable work: the same optimal work-extraction can be achieved by first estimating the relevant thermodynamic quantities through an asymptotically non-disturbing, energy-preserving measurement and subsequently applying the optimal work extraction protocol from known initial state.

\begin{figure}[t]
    \centering
    \includegraphics[width=8.5 cm]{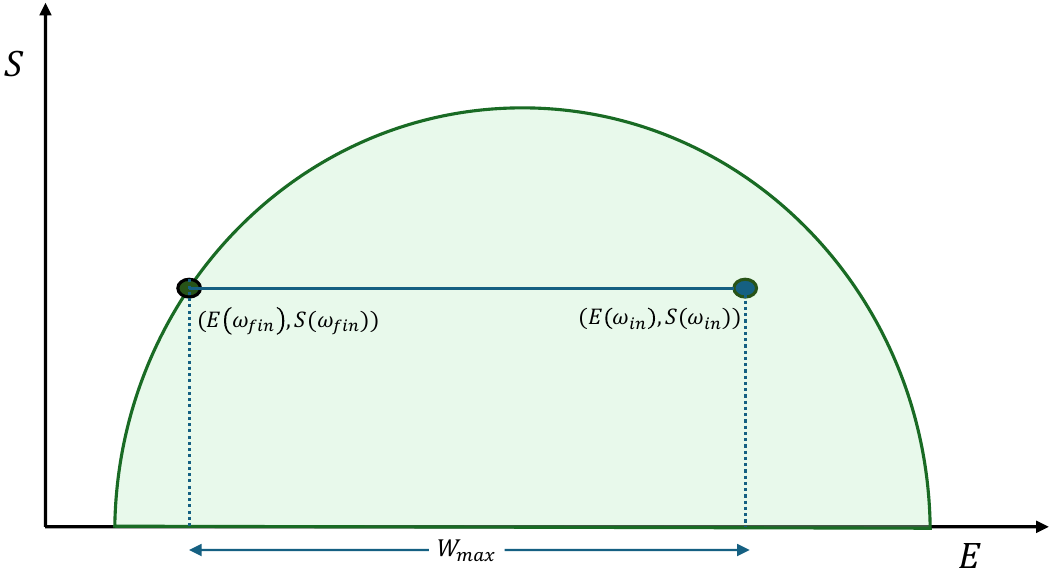}
    \caption{Energy--entropy diagram illustrating the maximum extractable work per copy from \(\rho\). The diagram shows that, to extract the maximum amount of work from \(\rho\), one must transform it into the Gibbs state \(\tau_{\rho}\) that has the same entropy as \(\rho\) and is defined with respect to the same Hamiltonian. }
    \label{fig:work_extraction}
\end{figure}

\section{Discussion and outlook}
In this work, we have developed a general framework for thermodynamic state transformations starting from an unknown quantum state within the resource theory of work and heat, where the allowed operations consist of charge-conserving unitaries assisted by a sublinear-sized ancilla (in terms of both entropy, energy and other commuting charges).

As a first step, we have introduced a protocol for state interconversion that transforms \(n\) i.i.d.~copies of a known initial state into a desired target state by exploiting the permutation invariance of i.i.d.~states and of the corresponding total-charge operators. We showed that the protocol extends naturally to multiple commuting conserved charges and achieves a super-polynomially vanishing transformation error in the asymptotic limit. Importantly, we identified that our construction does not require a reference frame as part of the ancillary system to enforce charge conservation, allowing this super-polynomial convergence to be achieved using only a sublinear-sized ancilla. Our results further establish that the entropy and average conserved charges determine the equivalence classes within which state interconversion is possible in the thermodynamic limit, \(n\rightarrow\infty\). In this sense, our framework provides an explicit passage from microscopic quantum states to thermodynamic macrostates: while the underlying microstates may differ, their asymptotic interconversion properties are determined solely by macroscopic quantities, namely entropy and average charges.

We then extended the framework to the \emph{universal} setting in which the initial state is unknown. To this end, we employed gentle tomography based on charge-preserving measurements performed on the $n$ i.i.d.~copies of the initial state to estimate its entropy and average energy -- the key thermodynamic quantities required to implement the transformation -- while inducing only asymptotically negligible disturbance. We thereby showed that an asymptotic transformation to a target state is possible if and only if the target state has the same entropy and average energy as the initial state. Once these quantities are estimated, the transformation can be implemented by applying the corresponding protocol developed for a known initial state; otherwise, such a transformation is impossible. As a concrete thermodynamic application, we showed that optimal work extraction remains achievable even without knowing the initial state. Taken together, these results establish a general operational framework for asymptotic thermodynamic state transformations under conservation laws, demonstrating that ignorance of the microscopic initial state need not constitute a fundamental limitation.

Our work opens up several interesting directions for future research. First, while our results are established in the i.i.d.~setting, it is natural to ask whether they can be extended to more general \(n\)-copy states, in particular permutation-invariant and almost-i.i.d.~states. 
Schur-Weyl duality, which plays a central role in our proof of Theorem~1, applies naturally to permutation-invariant states, suggesting that an analogous charge-entropy normal form may exist beyond the i.i.d.~regime. 
In a different direction, we might want to extend the results to general tensor product states, as was shown in \cite{KhanianBeraRieraLW2022}.

Second, it would be interesting to investigate finite-size corrections to our results \cite{Tomamichel2016finite,ChubbPRA}. Throughout this work, we operate in the large-deviation regime, where fluctuations of entropy and conserved charges become negligible relative to their extensive values \cite{Chubb2018beyondthermodynamic,BiswasPRE}. Going beyond this thermodynamic limit through a moderate-deviation analysis may reveal a refined characterization of state transformations involving not only entropy and average charges, but also their fluctuations, such as the corresponding variances. 

Third, our analysis is restricted to mutually commuting conserved charges. A complete characterization of state transformations in the presence of non-commuting charges remains open. Previous approaches have exploited a seminal result of Ogata \cite{Ogata:non-commuting-to-commuting} to approximate the collective non-commuting charges of many copies by a set of commuting observables (see \cite{KhanianBeraRieraLW2022} and \cite{Yunger-Halpern2016}), and crucially by allowing approximately charge-conserving unitaries. It remains unclear, however, how the transformation structure changes when the non-commuting charges have to be conserved exactly, i.e.~without allowing asymptotically vanishing charge-conservation violation, and how one should characterize transformations within the corresponding charge-entropy equivalence classes.

Finally, an important direction is to determine whether the state-transformation protocol underlying Theorem~1 can be implemented experimentally, thereby connecting the present asymptotic framework with realistic physical platforms.

\bigskip\noindent
\textbf{Acknowledgments.}
TB thanks Andrea Canzio and Sayantan Chakroborty for useful discussions. The authors are supported by the Alexander von Humboldt Foundation.
TB was additionally supported by the U.S. Department of Energy, Office of Science, Accelerated Research in Quantum Computing, Fundamental Algorithmic Research toward Quantum Utility (FAR-Qu). 
AW acknowledges furthermore support by the European Commission QuantERA project ExTRaQT (Spanish MICIN grant no.~PCI2022-132965), by the Spanish MICIN (project PID2022-141283NB-I00) with the support of FEDER funds, by the Spanish MICIN with funding from European Union NextGenerationEU (PRTR-C17.I1) and the Generalitat de Catalunya, by the Spanish MTDFP through the QUANTUM ENIA project: Quantum Spain, funded by the European Union NextGenerationEU within the framework of the ``Digital Spain 2026 Agenda'', and by the Institute for Advanced Study of the Technical University Munich.

\bibliography{main}


\section*{Methods}
\subsection{Proof sketch of Theorem \ref{Main_Central_Thm}}
Employing the permutation invariance of the state $\rho^{\otimes n}$ and the total charges $L_c^{\times n}$, Schur-Weyl duality \cite{goodman_wallach_representation_theory} yields the decompositions
\begin{equation}\label{Tensorstate_totalcharge_main}
\rho^{\otimes n}
=
\bigoplus_{\lambda \in Y_d^{n}}
r_{\lambda}\,
\frac{\1^{(\mathcal{P}_\lambda)}}{m_\lambda}
\otimes
\rho_{\lambda}^{(\mathcal{Q}_\lambda)},
\qquad
L_c^{\times n}
=
\bigoplus_{\lambda \in Y_d^{n}}
\1^{(\mathcal{P}_\lambda)}
\otimes
L_{\lambda,c}^{(\mathcal{Q}_\lambda)}.
\end{equation}
Here, $\mathcal{P}_{\lambda}$ and $\mathcal{Q}_{\lambda}$ denote the Hilbert spaces carrying the irreducible representations of the symmetric group $S_n$ and the unitary group $U(d)$, respectively, with 
\begin{equation}\label{mlambda}
    m_{\lambda} := \dim  (\mathcal{P}_{\lambda}).
\end{equation}
Moreover, $Y_d^n$ denotes the set of Young diagrams with $n$ boxes and at most $d$ rows, or equivalently the set of partitions $\lambda=(\lambda_1,\ldots,\lambda_d)$ satisfying
$\lambda_1\ge\cdots\ge\lambda_d$ and $\sum_{i=1}^d\lambda_i=n$. We denote $\bar\lambda:=\frac{1}{n}\left(\lambda_1,\ldots,\lambda_d\right)$ as the normalized Young diagram

Using techniques from quantum spectrum estimation \cite{Keyl_Werner2001}, we show in Theorem~\ref{thm_8_important} that, for every $\epsilon_n>0$, there exists a projector $P_{\epsilon_n}$ supported on a subspace of $\mathcal{H}_d^{\otimes n}$ satisfying
\begin{equation}\label{dtt}
\left\|
\rho^{\otimes n}
-
\frac{P_{\epsilon_n}\rho^{\otimes n}P_{\epsilon_n}}
{\Tr\!\left(P_{\epsilon_n}\rho^{\otimes n}P_{\epsilon_n}\right)}
\right\|_1
\lesssim
\poly(n)
\exp\!\left[
-\frac{n}{4}
\left(
\frac{\epsilon_n}{\log(1/\epsilon_n)}
\right)^2
\right],
\end{equation}

where
\begin{equation}\label{employment_eq}
\frac{P_{\epsilon_n}\rho^{\otimes n}P_{\epsilon_n}}
{\Tr\!\left(P_{\epsilon_n}\rho^{\otimes n}P_{\epsilon_n}\right)}
=
\bigoplus_{|H(\bar\lambda)-s|\le\epsilon_n}
\tilde r_\lambda
\frac{\1^{(\mathcal P'_\lambda)}}
{\Tr\!\left(\1^{(\mathcal P'_\lambda)}\right)}
\otimes
\tilde\rho_\lambda^{(\mathcal Q'_\lambda)},
\end{equation}
with 
\begin{equation}
    \tilde r_{\lambda}=\frac{r_{\lambda}}{\sum_{\lambda:|H(\bar\lambda)-s|\leq \epsilon_n}r_{\lambda}},
\end{equation}
and $\mathcal Q'_\lambda\subseteq\mathcal Q_\lambda$. From Eq. \eqref{mlambda}, let us define
\begin{equation}
\label{dimm_lambda_defn}
D_{\min}^{(\epsilon_n)}
:=
\min_{\lambda:\, |H(\bar{\lambda})-S(\rho)|\leq \epsilon_n}
m_{\lambda}.
\end{equation}

The subspace $\E$ is isomorphic to a subspace contained in $\mathcal{P}_{\lambda}$ with 
\begin{equation}
   \dim (\E) =\floor{D_{\min}^{(\epsilon_n)}e^{-n\epsilon_n}}.
\end{equation}
Note that $\mathcal{E}$ is chosen in such a way that $\dim (\mathcal{E})$ is independent of $\lambda$. Now, consider $\mathcal{P}'_\lambda\subseteq\mathcal{P}_\lambda$ to be the largest subspace satisfying
\begin{equation}
m'_\lambda:=
\dim (\mathcal{P}'_\lambda)
=
\dim (\mathcal{E})\,f_\lambda=\floor{D_{\min}^{(\epsilon_n)}e^{-n\epsilon_n}}f_\lambda,
\end{equation}
where $f_\lambda$ is a positive integer. Consequently, for every $\lambda$ satisfying
$|H(\bar\lambda)-s|\le\epsilon_n$,
we can take any unitary 
\begin{equation}
    U_\lambda:\mathcal P'_\lambda
\rightarrow
\mathcal E\otimes\mathcal F_\lambda,
\end{equation}
where $\mathcal F_\lambda$ is a vector space with $\dim (\mathcal F_\lambda)=f_\lambda$. Defining the block-diagonal unitary acting on $\mathrm{supp}(P_{\epsilon_n})$ as
\begin{equation}
U=
\bigoplus_{|H(\bar\lambda)-s|\le\epsilon_n}
U_\lambda
\otimes
\mathbb I^{(\mathcal Q'_\lambda)},
\end{equation}
we obtain using Eq. \eqref{employment_eq}
\begin{equation}
U
\left(
\frac{P_{\epsilon_n}\rho^{\otimes n}P_{\epsilon_n}}
{\Tr(P_{\epsilon_n}\rho^{\otimes n}P_{\epsilon_n})}
\right)
U^\dagger
=
\frac{\mathbb I^{(\mathcal E)}}{\dim (\mathcal E)}
\otimes
\eta_\rho^{(\mathcal K)},
\end{equation}
where
\begin{equation}
\eta_\rho^{(\mathcal K)}
=
\bigoplus_{|H(\bar\lambda)-s|\le\epsilon_n}
\tilde r_\lambda
\frac{\mathbb I^{(\mathcal F_\lambda)}}
{\dim (\mathcal F_\lambda)}
\otimes
\tilde\rho_\lambda^{(\mathcal Q'_\lambda)}
\end{equation}
is a state supported on the Hilbert space
\begin{equation}
\mathcal K
=
\bigoplus_{|H(\bar\lambda)-s|\le\epsilon_n}
\mathcal F_\lambda
\otimes
\mathcal Q'_\lambda.
\end{equation}

Using spectrum-estimation techniques, we show (Theorems~\ref{charge_on_anc} and \ref{dimEbd}) that the dimensions of $\mathcal E$ and $\mathcal K$ satisfy the bounds in Eq.~\eqref{dimchargebd}. We then define the conserved charges on the ancillary Hilbert space $\mathcal K$ as
\begin{equation}\label{conanc}
L_c^{(\mathcal K)}
=
\bigoplus_{|H(\bar\lambda)-S(\rho)|\le\epsilon_n}
\mathbb I^{(\mathcal F_\lambda)}
\otimes
\left(
\tilde L_{\lambda,c}^{(\mathcal Q'_\lambda)}
-
nl_c\,\mathbb I^{(\mathcal Q'_\lambda)}
\right),
\end{equation}
where
\begin{equation}
\tilde L_{\lambda,c}^{(\mathcal Q'_\lambda)}
=
\mathbb I^{(\mathcal Q'_\lambda)}
L_{\lambda,c}^{(\mathcal Q_\lambda)},
\end{equation}
with $\mathbb I^{(\mathcal Q'_\lambda)}$ denoting the projector onto the subspace $\mathcal Q'_\lambda\subseteq\mathcal Q_\lambda$. It then follows (Theorem~\ref{U_lambda_defn_thm9232}) that
\begin{equation}
L_c^{(\mathcal E)}
\otimes
\mathbb I^{(\mathcal K)}
+
\mathbb I^{(\mathcal E)}
\otimes
L_c^{(\mathcal K)}
=
U(P_{\epsilon_n}L_c^{\times n})U^\dagger.
\end{equation}
Extending $U$ to the full Hilbert space via the direct-sum structure in Eq.~\eqref{dsumstru} establishes the charge-conservation relation \eqref{conscc}, while unitary invariance of the trace norm yields Eq.~\eqref{dt} from Eq.~\eqref{dtt}. Finally, Theorem~\ref{charge_on_anc} bounds the operator norm of $L_c^{(\mathcal K})$. The complete proof is given in Appendix~\ref{app1known}; see Subsection~\ref{cc1_calc} for details.


\subsection{Proof of Theorem \ref{corr1}}
\label{proof_of_corr}
With this choice, the dimension of the ancillary system satisfies
\begin{equation}
    \dim (\mathcal{H}_{\mathrm{anc}})
    \leq
    \poly(n)\exp\!\left(3n\epsilon_n\right)
    =
    \poly(n)\exp\!\left(3\sqrt{n}(\log n)^2\right).
\end{equation}
Equivalently, $\log\left(\dim (\mathcal{H}_{\mathrm{anc}})\right)
    =
    O\left(\sqrt{n}(\log n)^2\right)
    =
    o(n)$, showing that the ancilla size is sublinear in the number of copies. Moreover, the charge associated with the ancillary system is bounded as
\begin{equation}
    \|L_{c,\mathrm{anc}}\|
    \leq
    n\epsilon_n
    =
    \sqrt{n}(\log n)^2
    =
    o(n).
\end{equation}

The transformation error is bounded by
\begin{equation}
    \poly(n)
    \exp\!\left[
    -\frac{n}{4}
    \left(
    \frac{\epsilon_n}{\log(1/\epsilon_n)}
    \right)^2
    \right].
\end{equation}
For the above choice of $\epsilon_n$, we have $    \log(1/\epsilon_n)=\Theta(\log n)$. Therefore
\begin{equation}
    \frac{n}{4}
    \left(
    \frac{\epsilon_n}{\log(1/\epsilon_n)}
    \right)^2
    =
    \Theta((\log n)^2).
\end{equation}
Here, $f(n)=\Theta(g(n))$ means that there exist constants $c_1,c_2>0$ such that $ c_1g(n)\leq f(n)\leq c_2g(n)$ for sufficiently large $n$. Consequently,
\begin{equation}
    \poly(n)
    \exp\!\left[
    -\frac{n}{4}
    \left(
    \frac{\epsilon_n}{\log(1/\epsilon_n)}
    \right)^2
    \right]
    =
    \exp[-\Theta((\log n)^2)],
\end{equation}
which vanishes faster than any inverse polynomial in the limit $n\rightarrow\infty$.

\vfill\pagebreak

\appendix
\onecolumngrid
\section{Notation}
We introduce the notation that will be used throughout this manuscript in Table \ref{tab:notation}.
\begin{table*}[htbp!]
\centering
\caption{Summary of notation used throughout the manuscript.}
\label{tab:notation}

\renewcommand{\arraystretch}{1.4}
\setlength{\tabcolsep}{12pt}

\begin{tabular}{p{0.25\textwidth} p{0.65\textwidth}}
\hline
\textbf{Notation} & \textbf{Definition} \\
\hline

$\1^{(\mathcal{V})}$ 
& Identity operator on $\mathcal{V}$. More generally, the projector supported on $\mathcal{W}\subseteq\mathcal{V}$ is denoted by $\1^{(\mathcal{W})}$. This notation is justified by the fact that the projector onto a subspace acts as the identity operator on that subspace. \\[2mm]

$\dim (\mathcal{V})$ 
& Dimension of the vector space $\mathcal{V}$. \\[2mm]

$S_n$ 
& Symmetric group of $n$ objects. \\[2mm]

$U(d)$ 
& Unitary group in dimension $d$. \\[2mm]

$A^{\times n}$ 
& Total operator :
$A^{\times n}=A\otimes\1\otimes\ldots\otimes\1+\1\otimes\1\otimes\ldots\otimes A+\ldots+\1\otimes\1\otimes\ldots\otimes A$. \\[2mm]

$S(\rho)$ 
& Von Neumann entropy:
$S(\rho)=-\Tr(\rho\ln\rho)$. \\[2mm]

$\mathrm{supp}(A)$ 
& Support of the operator $A$. \\[2mm]

$\mathcal{V}\cong\mathcal{W}$ 
& $\mathcal{V}$ is isomorphic to $\mathcal{W}$. \\[2mm]

$\|A\|_1$ 
& Trace norm:
$\|A\|_1:=\Tr(\sqrt{A^\dagger A})$. \\[2mm]
l
$\|A\|$ 
& Operator norm:
$\|A\|:=\sup_{\|x\|=1}\|Ax\|
=\sqrt{\lambda_{\max}(A^\dagger A)}$. \\[2mm]

$f(n)=\mathcal{O}(g(n))$ 
& There exist constants $C>0$ and $n_0$ such that
$f(n)\leq Cg(n)$ for all $n\geq n_0$. \\[2mm]

$f(n)=\Theta(g(n))$ 
& There exist constants $C_1, C_2>0$ and $n_0$ such that
$C_1 g(n)\le f(n)\le C_2 g(n)$ for all $n\geq n_0$. \\[2mm]

$f(n)=o(g(n))$ 
& $\displaystyle\lim_{n\rightarrow\infty}\frac{f(n)}{g(n)}=0$. \\[2mm]

$f(n)\sim h(n)$ 
& $f(n)=h(n)\pm o(h(n))$. \\[2mm]

$\poly(n)$ 
& Polynomial function:
$\poly(n)=\mathcal{O}(n^k)$ for some constant $k>0$. \\

\hline
\end{tabular}

\end{table*}

\section{State transformation from a known initial state}\label{app1known}
\subsection{Exploring the permutation invariance of the independent and identically prepared state and the total charge}
In this section, we explore the permutation invariance of the independent and identically prepared state \( \rho^{\otimes n} \) and the total charge \( L_c^{\times n} \) acting on the \( n \)-fold tensor product space using tools from Schur--Weyl duality, a well-known framework from representation theory. This structure will be used in the proof of the main theorem. We proceed by recalling the definition of permutation invariance.

\begin{defn}[Permutation invariant operator]
Let \(V_\pi\) denote the unitary representation of a permutation \(\pi \in S_n\), defined on the computational basis as
\begin{equation}
    V_\pi\left( \ket{i_1} \otimes \cdots \otimes \ket{i_n} \right)
    =
    \ket{i_{\pi^{-1}(1)}} \otimes \cdots \otimes \ket{i_{\pi^{-1}(n)}}.
\end{equation}
An operator \(X\) on \(\mathcal{H}_d^{\otimes n}\) is said to be \emph{permutation invariant} if it commutes with all $V_\pi$, in other words it satisfies
\begin{equation}
\forall\, \pi \in S_n \quad V_\pi X V_\pi^\dagger = X.
\end{equation}
\end{defn}
It is straightforward to observe that both \( \rho^{\otimes n} \) and \( L_c^{\times n} \) are permutation invariant.  We now state Schur--Weyl duality, which captures the interplay between representations of the symmetric group \( S_n \) and the unitary group \( U(d) \) on the tensor product space \( \mathcal{H}_d^{\otimes n} \).
\begin{lem}[Schur--Weyl duality \cite{goodman_wallach_representation_theory}]
\label{SWDuality}
Let \(\mathcal{H}_d\) be a \(d\)-dimensional Hilbert space. Then
\begin{equation}\label{Schur_Weyl_Hilbert_space}
    \mathcal{H}_d^{\otimes n}
    =
    \bigoplus_{\lambda \in Y_d^n}
    \mathcal{P}_{\lambda}\otimes\mathcal{Q}_{\lambda},
\end{equation}
where \(\mathcal{P}_{\lambda}\) and \(\mathcal{Q}_{\lambda}\) carry irreducible representations of \(S_n\) and \(U(d)\), respectively. Here \(Y_d^n\) denotes the set of Young diagrams with \(n\) boxes and at most \(d\) rows, or equivalently partitions \(\lambda=(\lambda_1,\ldots,\lambda_d)\) satisfying
\(\lambda_1 \ge \cdots \ge \lambda_d\) and \(\sum_{i=1}^d \lambda_i = n\). Any permutation-invariant operator \(X\) on \(\mathcal{H}_d^{\otimes n}\) admits a block-diagonal decomposition
\begin{equation}\label{Schur_Weyl_Permutation_symmetric}
    X
    =
    \bigoplus_{\lambda \in Y_d^n}
    \1^{(\mathcal{P}_{\lambda})}
    \otimes
    X^{(\mathcal{Q}_{\lambda})}.
\end{equation}
\end{lem}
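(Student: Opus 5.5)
The statement is standard Schur--Weyl duality, so the plan is to derive it from the double commutant theorem for the two commuting actions on $\mathcal{H}_d^{\otimes n}$. The first action is the permutation representation $\pi\mapsto V_\pi$ of $S_n$. The second is the diagonal action $u\mapsto u^{\otimes n}$ of $U(d)$. These commute, since permuting tensor factors does not affect the identical product $u^{\otimes n}$.

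Let $\mathcal{A}=\mathrm{span}\{V_\pi\}$ and $\mathcal{B}=\mathrm{span}\{u^{\otimes n}\}$. The key step is to show $\mathcal{B}=\mathcal{A}'$, the commutant of $\mathcal{A}$.

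\begin{enumerate}
\item The commutant $\mathcal{A}'$ consists of the permutation-invariant operators on $(\mathbb{C}^d)^{\otimes n}$. Equivalently, it is the symmetric subspace $\mathrm{Sym}^n(\mathrm{End}(\mathbb{C}^d))$.
\item The symmetric subspace is spanned by product operators $X^{\otimes n}$ with $X\in\mathrm{End}(\mathbb{C}^d)$. This follows from polarization: expand $(\sum_i t_iX_i)^{\otimes n}$ and extract coefficients.
\item Since $GL(d)$ is dense in $\mathrm{End}(\mathbb{C}^d)$, it suffices to use invertible $X$.
\item The span of $\{g^{\otimes n}: g\in GL(d)\}$ equals the span of $\{u^{\otimes n}: u\in U(d)\}$. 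The reason is that $g\mapsto g^{\otimes n}$ is polynomial and holomorphic, and $U(d)$ is Zariski dense in $GL(d)$ (Weyl's unitary trick). A linear functional vanishing on $u^{\otimes n}$ for all $u\in U(d)$ therefore vanishes on $GL(d)$.
\end{enumerate}

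I expect this density argument to be the main obstacle; the rest is bookkeeping. With $\mathcal{B}=\mathcal{A}'$ in hand, both algebras are semisimple: $\mathcal{A}$ is a quotient of $\mathbb{C}[S_n]$, and it is closed under adjoint. The double commutant theorem then gives the decomposition $\mathcal{H}_d^{\otimes n}=\bigoplus_\lambda \mathcal{P}_\lambda\otimes\mathcal{Q}_\lambda$, in which $\mathcal{A}$ acts as $\bigoplus_\lambda \mathrm{End}(\mathcal{P}_\lambda)\otimes\1$ and $\mathcal{A}'=\mathcal{B}$ acts as $\bigoplus_\lambda \1\otimes \mathrm{End}(\mathcal{Q}_\lambda)$. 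The $\mathcal{P}_\lambda$ are pairwise inequivalent $S_n$-irreps and the $\mathcal{Q}_\lambda$ are pairwise inequivalent $U(d)$-irreps.

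It remains to identify the index set as $Y_d^n$. The $S_n$-irreps are labelled by partitions of $n$. A given $\lambda$ occurs in $(\mathbb{C}^d)^{\otimes n}$ exactly when $\lambda$ has at most $d$ rows. One way to see this is to decompose $(\mathbb{C}^d)^{\otimes n}$ by weight into the permutation modules $\mathrm{Ind}_{S_\mu}^{S_n}\1$ for compositions $\mu$ with at most $d$ parts, and apply Young's rule: the irrep $\lambda$ appears in the $\mu$-module iff $\lambda\trianglerighteq\mu$. Every $\lambda$ with at most $d$ rows therefore appears (take $\mu=\lambda$). Conversely, a $\lambda$ with more than $d$ rows cannot appear, because its Young symmetrizer antisymmetrizes over more than $d$ slots and hence vanishes on $(\mathbb{C}^d)^{\otimes n}$.

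The block form in Eq.~\eqref{Schur_Weyl_Permutation_symmetric} is then immediate. A permutation-invariant $X$ lies in $\mathcal{A}'$, which was identified above as $\bigoplus_\lambda \1^{(\mathcal{P}_\lambda)}\otimes \mathrm{End}(\mathcal{Q}_\lambda)$. This also follows from Schur's lemma applied to the intertwiners between isotypic components.
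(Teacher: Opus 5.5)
Your proof is correct. The paper does not prove this lemma itself; it only cites Goodman--Wallach, and what you wrote is essentially the textbook argument behind that citation. You identify the commutant of the $S_n$ action with $\mathrm{Sym}^n(\mathrm{End}(\mathbb{C}^d))$, show by polarization and density (with Zariski density of $U(d)$ in $GL(d)$) that it is spanned by the $u^{\otimes n}$, and read off the decomposition from the double commutant structure of a finite-dimensional $*$-algebra.

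The only real difference is in how the blocks are labelled. The standard treatment in the cited reference identifies the $\mathcal{Q}_\lambda$ through highest-weight theory for $GL(d)$. You instead decompose $(\mathbb{C}^d)^{\otimes n}$ into weight spaces, which are permutation modules, and apply Young's rule; this stays entirely on the symmetric-group side.

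Young's rule already gives both directions on its own. If $\lambda$ dominates $\mu$ and $\mu$ has at most $d$ nonzero parts, then $\lambda_1+\cdots+\lambda_d\ge n$, so $\lambda$ has at most $d$ rows. Your Young-symmetrizer argument for the converse is therefore redundant, though valid.

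One structural point: the block form of a permutation-invariant $X$ follows from the structure of the $S_n$ commutant alone. That form is the part the paper uses to block-diagonalize $\rho^{\otimes n}$, $L_c^{\times n}$ and $\Pi_{\epsilon_n}$. The harder unitary-density step is needed only for the claim that each $\mathcal{Q}_\lambda$ is an irreducible $U(d)$-module.
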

 We will denote the
\begin{equation}
\label{dimplql}
    \dim(\mathcal{P}_{\lambda}) := m_{\lambda}
    \quad\text{and}\quad \dim(\mathcal{Q}_{\lambda}) := n_{\lambda}.
\end{equation}
It is well known that $|Y_d^n|=\poly(n)$ and $ \dim (\mathcal{Q}_{\lambda})=\poly(n)$~\cite{harrow2005applications}. Exploiting the permutation invariance of both the state \(\rho^{\otimes n}\) and the total charges \(L_c^{\times n}\) for \(c \in \{1,\ldots,t\}\), using Schur--Weyl duality, we can write
\begin{equation}\label{Tensorstate_totalcharge}
\rho^{\otimes n}
=
\bigoplus_{\lambda \in Y_d^{n}}
r_{\lambda}\,
\frac{\1^{(\mathcal{P}_\lambda)}}{m_\lambda}
\otimes
\rho_{\lambda}^{(\mathcal{Q}_\lambda)},
\qquad
L_c^{\times n}
=
\bigoplus_{\lambda \in Y_d^{n}}
\1^{(\mathcal{P}_\lambda)}
\otimes
L_{\lambda,c}^{(\mathcal{Q}_\lambda)}.
\end{equation}
If the charges commute pairwise, i.e.,
\begin{equation}
[L_c, L_{c'}]=0 \qquad \forall\, c,c' \in \{1,\ldots,t\},
\end{equation}
then there exists a common eigenbasis \(\{|x\rangle\}_{x=1}^d\) that simultaneously diagonalizes all the charges. In this basis, each charge can be written as
\begin{equation}
    L_c = \sum_{x=1}^{d} L_c(x)\, |x\rangle\langle x|
    \qquad \text{for } c \in \{1,\ldots,t\},
\end{equation}
where \(L_c(x)\) denotes the eigenvalue of \(L_c\) corresponding to the basis vector \(|x\rangle\). Using this eigendecomposition, the total charge can be expressed as
\begin{equation}\label{basis_states}
    L_{c}^{\times n}
    =
    \sum_{x_1, \ldots, x_n}
    \big(L_c(x_1) + \cdots + L_c(x_n)\big)
    |x_1, \ldots, x_n \rangle \langle x_1, \ldots, x_n|.
\end{equation}
Note that the spectrum of \(L_c^{\times n}\) is degenerate for every \(n>1\), since any two product basis states related by a permutation have the same eigenvalue. Since \(L_c^{\times n}\) is permutation invariant, it follows that the projector onto each degenerate eigenspace is also permutation invariant. For our purposes, we will be interested in permutation-invariant projectors whose corresponding eigenvalues are close to the expectation value \(\mathrm{Tr}(\rho L_c) = l_c\) for all $c\in\{1,\ldots,t\}$.
\begin{prop}
The projector \(\Pi_{\epsilon_n}\) on the subspace
\begin{equation}\label{Piepsilonn}
    \operatorname{span}\left\{
    |x_1,\ldots,x_n\rangle \quad\text{such that}\quad
    \forall c\in\{1,\ldots,t\},\;
    \left|\frac{1}{n}\sum_{i=1}^n L_c(x_i) - l_c\right| \le \epsilon_n
    \right\}
\end{equation}
is permutation invariant, i.e.,
\[
V_{\pi}\Pi_{\epsilon_n}V_{\pi}^{\dagger}=\Pi_{\epsilon_n}
\qquad \forall\, \pi\in S_n,
\]
where \(l_c = \mathrm{Tr}(\rho L_c)\). 
\end{prop}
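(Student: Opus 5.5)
The plan is to show that the set of basis vectors spanning the subspace in Eq.~\eqref{Piepsilonn} is mapped to itself by every permutation. A permutation of the computational basis that preserves a spanning set of orthonormal vectors automatically commutes with the orthogonal projector onto their span.

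\textbf{Step 1: describe $\Pi_{\epsilon_n}$ as a sum over a set of strings.} Define
\[
T:=\Bigl\{(x_1,\ldots,x_n)\;:\;\forall c,\;\Bigl|\tfrac1n\textstyle\sum_{i}L_c(x_i)-l_c\Bigr|\le\epsilon_n\Bigr\}.
\]
Since the product vectors $\ket{x_1,\ldots,x_n}$ are orthonormal, we have
\[
\Pi_{\epsilon_n}=\sum_{\mathbf{x}\in T}\ketbra{\mathbf{x}}.
\]

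\textbf{Step 2: act with $V_\pi$.} By the definition of $V_\pi$, it sends $\ket{\mathbf{x}}$ to $\ket{\pi\cdot\mathbf{x}}$, where $(\pi\cdot\mathbf{x})_j=x_{\pi^{-1}(j)}$. Hence
\[
V_\pi\Pi_{\epsilon_n}V_\pi^\dagger=\sum_{\mathbf{x}\in T}\ketbra{\pi\cdot\mathbf{x}}.
\]

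\textbf{Step 3: show that $T$ is invariant.} For each $c$, the quantity $\sum_i L_c(x_i)$ is a symmetric function of the string. Equivalently, $\ket{\mathbf{x}}$ and $\ket{\pi\cdot\mathbf{x}}$ have the same eigenvalue under $L_c^{\times n}$, by Eq.~\eqref{basis_states}. Therefore $\mathbf{x}\in T$ if and only if $\pi\cdot\mathbf{x}\in T$.

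Since $\mathbf{x}\mapsto\pi\cdot\mathbf{x}$ is a bijection on strings, it restricts to a bijection of $T$. Reindexing the sum in Step 2 then gives $V_\pi\Pi_{\epsilon_n}V_\pi^\dagger=\Pi_{\epsilon_n}$ for every $\pi\in S_n$.

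An alternative route is also available. One can write $\Pi_{\epsilon_n}$ as a sum of the joint spectral projectors of the commuting, permutation-invariant operators $L_c^{\times n}$. Spectral projectors of a permutation-invariant operator are themselves permutation invariant, because they are polynomials in that operator, and products of commuting such projectors inherit this property.

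There is no real obstacle here. The only point requiring care is the convention $\pi$ versus $\pi^{-1}$ in the definition of $V_\pi$, and this is irrelevant because $T$ is invariant under all permutations.
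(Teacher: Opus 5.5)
Your proof is correct, and your main argument differs slightly from the paper's. You work directly with the index set: $T$ is closed under permuting the string, so $V_\pi$ only reshuffles the orthonormal vectors spanning $\operatorname{supp}(\Pi_{\epsilon_n})$, and reindexing the sum finishes the proof.

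The paper follows what you list as the alternative route. For each $c$ it defines $\Pi_{c,\epsilon_n}$ as the spectral projector of $L_c^{\times n}$ onto eigenvalues in $[n(l_c-\epsilon_n),\,n(l_c+\epsilon_n)]$. It then uses commutativity of the charges to write $\Pi_{\epsilon_n}=\prod_{c=1}^t\Pi_{c,\epsilon_n}$. Finally, it concludes because each factor is a spectral projector of a permutation-invariant operator.

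Your version is more elementary and self-contained. It needs neither the functional-calculus fact nor the product identity, and hence not the commutativity of the $\Pi_{c,\epsilon_n}$. The paper's version identifies $\Pi_{\epsilon_n}$ as a joint spectral projector of the commuting family $\{L_c^{\times n}\}$. That is the natural viewpoint for the later uses of this projector, for instance the bound $\|L_c^{(\mathcal{K})}\|\le n\epsilon_n$, which is a spectral statement about $L_c^{\times n}-nl_c$ on its support.

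One small point to make explicit in your Step 2: $V_\pi$ is defined on the computational basis, while $|\mathbf{x}\rangle$ is a product of common eigenvectors of the charges. By multilinearity, $V_\pi$ permutes the tensor factors of any product vector, so this causes no problem.
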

\begin{proof}
For \(\epsilon_n > 0\), define the projectors \(\Pi_{c,\epsilon_n}\) from the eigen decomposition of \(L_c^{\times n}\) given in Eq. \eqref{basis_states}
\begin{equation}
    \Pi_{c,\epsilon_n}
    :=
    \sum_{x_1,\ldots,x_n}
    |x_1,\ldots,x_n\rangle\langle x_1,\ldots,x_n|
    \;\; \text{such that} \;\;
    \left|\frac{1}{n}\sum_{i=1}^n L_c(x_i) - l_c\right| \le \epsilon_n.
\end{equation}
Since all charges \(L_c\) commute, the projectors \(\Pi_{c,\epsilon_n}\) commute for all \(c,c' \in \{1,\ldots,t\}\), i.e.,
\begin{equation}
    [\Pi_{c,\epsilon_n},\Pi_{c',\epsilon_n}] = 0.
\end{equation}
We therefore construct the projector by taking product of all $\{\Pi_{c,\epsilon_n}\}_{c=1}^t$ for all \(c \in \{1,\ldots,t\}\) as follows:
\begin{equation}\label{charge_projector}
    \prod_{c=1}^{t} \Pi_{c,\epsilon_n},
\end{equation}
and by construction, this projector is supported on the subspace
\begin{equation}
    \operatorname{span}\left\{
    |x_1,\ldots,x_n\rangle \quad\text{such that}\quad
    \forall c\in\{1,\ldots,t\},\;
    \left|\frac{1}{n}\sum_{i=1}^n L_c(x_i) - l_c\right| \le \epsilon_n
    \right\}.
\end{equation}
Hence 
\begin{equation}\label{charge_projector2}
    \prod_{c=1}^{t} \Pi_{c,\epsilon_n}=\Pi_{\epsilon_n}.
\end{equation}
Moreover, each \(\Pi_{c,\epsilon_n}\) is permutation invariant since it is a spectral projector of the permutation-invariant operator \(L_c^{\times n}\). Hence \(\Pi_{\epsilon_n}\) is also permutation invariant since it is constructed as a product of the permutation-invariant projectors.
\end{proof}
As we have shown that $\Pi_{\epsilon_n}$ is permutation invariant, Schur--Weyl duality (Lemma~\ref{SWDuality}) implies that
\begin{equation}\label{energy_typ_proj}
    \Pi_{\epsilon_n}
    =
    \bigoplus_{\lambda \in Y_d^n}
    \1^{(\mathcal{P}_{\lambda})}
    \otimes
    \Pi^{(\mathcal{Q}_{\lambda})}
    =
    \bigoplus_{\lambda \in Y_d^n}
    \1^{(\mathcal{P}_{\lambda})}
    \otimes
    \1^{(\mathcal{Q}'_{\lambda})},
\end{equation}
where
\begin{equation}\label{Qprimelambda_defn}
\mathcal{Q}'_{\lambda}
:=
\operatorname{supp}\!\left(\Pi^{(\mathcal{Q}_{\lambda})}\right)
\subseteq
\mathcal{Q}_{\lambda}.
\end{equation}
The second equality follows from the fact that every projector acts as the identity on its support.

\subsection{Approximating the independent and identically prepared state via projection} 

We now turn to the consequences of the permutation invariance of the product state \(\rho^{\otimes n}\), which allows us to approximate it as a tensor product of a maximally mixed state, which does not depend explicitly on \(\rho\), and a non-trivial component that retains the dependence on \(\rho\). To proceed, we rely on a central result originally established in \cite{Keyl_Werner2001} in the context of spectrum estimation of an unknown quantum state. This result has since been further developed in quantum tomography \cite{WrightOdonnel,Haah} and in variable-length quantum source coding \cite{Hayashi2002}. For completeness, we present the relevant formulation of the result together with its proof.

\begin{lem}\label{Keyl_Werner}
   For any $\epsilon_n>0$, there exists $\eta_{\epsilon_n}>0$ such that, for every state $\rho$,
\begin{equation}\label{L1}
    \Tr\left(\tilde{\Pi}_{\epsilon_n}\rho^{\otimes n}\right)
    \geq
    1-\poly(n)e^{-n\eta_{\epsilon_n}},
\end{equation}
where
\begin{equation}\label{entropy_proj}
    \tilde{\Pi}_{\epsilon_n}
    :=
    \bigoplus_{\lambda:\,|H(\bar{\lambda})-S(\rho)|\le\epsilon_n}
    \1^{(\mathcal P_\lambda)}
    \otimes
    \1^{(\mathcal Q_\lambda)}\qquad\text{and}\qquad \bar{\lambda}=\frac{1}{n} (\lambda_1,\ldots,\lambda_d).
\end{equation}
Consequently,
\begin{equation}\label{GOL}
    \left\|
        \rho^{\otimes n}
        -
        \frac{\tilde{\Pi}_{\epsilon_n}
        \rho^{\otimes n}
        \tilde{\Pi}_{\epsilon_n}}{\Tr\left(\tilde{\Pi}_{\epsilon_n}
        \rho^{\otimes n}
        \tilde{\Pi}_{\epsilon_n}\right)}
    \right\|_{1}
    \le
    2\sqrt{\poly(n)e^{-n\eta_{\epsilon_n}}}.
\end{equation}
\end{lem}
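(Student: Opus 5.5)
The plan is to use the Keyl--Werner estimate for the Schur--Weyl weights $r_\lambda$. By Lemma~\ref{SWDuality}, $\rho^{\otimes n}$ is block diagonal and $r_\lambda=\Tr(\1^{(\mathcal P_\lambda)}\otimes\1^{(\mathcal Q_\lambda)}\rho^{\otimes n})$. So $1-\Tr(\tilde\Pi_{\epsilon_n}\rho^{\otimes n})=\sum_{\lambda:|H(\bar\lambda)-S(\rho)|>\epsilon_n} r_\lambda$. Since $|Y_d^n|=\poly(n)$, it suffices to bound each $r_\lambda$ in this tail uniformly by $\poly(n)e^{-n\eta_{\epsilon_n}}$.

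For the single-block bound I will use the standard estimate $r_\lambda\le m_\lambda n_\lambda\max_{\mu}\langle\mu|\rho_\lambda|\mu\rangle$. Writing $\rho$ in its eigenbasis with spectrum $p$ (sorted decreasingly), the highest-weight argument gives $\Tr(\rho_\lambda^{(\mathcal Q_\lambda)})\le n_\lambda\prod_i p_i^{\lambda_i}$. It also gives $m_\lambda\le \binom{n}{\lambda}\le e^{nH(\bar\lambda)}$. Together these yield
\begin{equation}
r_\lambda\le \poly(n)\exp\!\bigl[-n D(\bar\lambda\|p)\bigr],
\end{equation}
where $D$ is the classical relative entropy. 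By Pinsker, $D(\bar\lambda\|p)\ge\frac12\|\bar\lambda-p\|_1^2$. Hence every $\lambda$ in the tail either has $\|\bar\lambda-p\|_1$ bounded below, or else contradicts continuity of the Shannon entropy.

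The continuity step fixes $\eta_{\epsilon_n}$. By the Fannes--Audenaert inequality, $|H(\bar\lambda)-H(p)|\le \frac{t}{2}\log(d-1)+h(t/2)$ with $t=\|\bar\lambda-p\|_1$. For small $t$ this right-hand side is $O(t\log(1/t))$. So $|H(\bar\lambda)-S(\rho)|>\epsilon_n$ forces $t\gtrsim \epsilon_n/\log(1/\epsilon_n)$, and therefore $D\ge \frac12(\epsilon_n/\log(1/\epsilon_n))^2$ up to constants. We then take $\eta_{\epsilon_n}$ of this order, which matches the exponent claimed in Eq.~\eqref{deltaepsilon_n}. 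Summing over the polynomially many $\lambda$ gives Eq.~\eqref{L1}. This bound is independent of $\rho$, because only $d$ and $n$ enter the polynomial prefactors.

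For Eq.~\eqref{GOL} I will apply the gentle measurement lemma. Since $\tilde\Pi_{\epsilon_n}$ commutes with $\rho^{\otimes n}$ (both are block diagonal in $\lambda$), it is even simpler to compute directly. The difference between $\rho^{\otimes n}$ and the normalized projected state is $(1-q)^{-1}$ times a combination whose trace norm is $2(1-q)$-ish, where $q$ is the tail weight. This gives $\le 2q\le 2\sqrt q$ once $q\le 1$. The general gentle-measurement bound $2\sqrt{q}$ already suffices as stated, so I will simply cite it.

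The main obstacle is the single-block bound $r_\lambda\le\poly(n)e^{-nD(\bar\lambda\|p)}$. It requires care with the ordering of $p$ versus $\bar\lambda$: the dominance argument bounds the $\mathcal Q_\lambda$ trace by the Schur polynomial $s_\lambda(p)\le n_\lambda\, p^{\lambda}$, with $p$ sorted decreasingly. I will take this from Keyl--Werner/Hayashi rather than rederive it. A secondary subtlety is making the $\epsilon\log(1/\epsilon)$ continuity inversion explicit, so that the constant $\frac14$ in the exponent is attained.
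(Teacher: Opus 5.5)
Your proposal is correct and follows essentially the paper's route. The paper likewise combines the Fannes--Audenaert inclusion $\{\|\bar\lambda-p\|_1\le\xi^{-1}(\epsilon_n)\}\subseteq\{|H(\bar\lambda)-S(\rho)|\le\epsilon_n\}$ with the Keyl--Werner concentration bound, cited as Harrow's Eq.~(6.23), to get $\eta_{\epsilon_n}=\tfrac12(\xi^{-1}(\epsilon_n))^2$, and then applies the gentle operator lemma; Harrow's bound is exactly your per-block estimate $r_\lambda\le\poly(n)e^{-nD(\bar\lambda\|p)}$ plus Pinsker and a sum over polynomially many $\lambda$, so you have simply opened the cited black box (and your observation that $\tilde\Pi_{\epsilon_n}$ commutes with $\rho^{\otimes n}$, giving error $2q$ rather than $2\sqrt{q}$, is a valid sharpening).
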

\begin{proof}
We begin by recalling the Fannes--Audenaert inequality. It states that for any two quantum states $\rho$ and $\sigma$ on a $d$-dimensional Hilbert space satisfying
\begin{equation}
    \frac{1}{2}\|\rho-\sigma\|_{1}\leq\delta\leq1,
\end{equation}
their von Neumann entropies obey
\begin{equation}\label{Alicki_Fannes_Aud_ineq}
    |S(\rho)-S(\sigma)|
    \leq
    \delta\log d+H_2(\delta)
    =:
    \xi(\delta),
\end{equation}
where $H_2(\delta)$ denotes the binary entropy,
\begin{equation}
    H_2(\delta)
    =
    -\delta\log\delta
    -(1-\delta)\log(1-\delta).
\end{equation}
Applying the Fannes--Audenaert inequality from Eq. \eqref{Alicki_Fannes_Aud_ineq} to the pair of states $\rho$ and $\text{Diag}(\bar{\lambda})$, we obtain that, for every $\lambda\in Y_d^n$,
\begin{equation}
    \|\text{Diag}(\bar{\lambda})-\rho\|_1
    \le
    \xi^{-1}(\epsilon_n)
    \quad\Longrightarrow\quad
    |H(\bar{\lambda})-S(\rho)|
    \le
    \epsilon_n,
\end{equation}
where $\text{Diag}(\bar{\lambda})$ denotes a diagonal density with spectrum $\bar{\lambda} = \frac{1}{n}(\lambda_1,\ldots,\lambda_d)$. Consequently,
\begin{equation}\label{inclusion}
    \left\{
    \lambda\in Y_d^n:
    |H(\bar{\lambda})-S(\rho)|
    \le
    \epsilon_n
    \right\}\supseteq \left\{
    \lambda\in Y_d^n:
    \|\text{Diag}(\bar{\lambda})-\rho\|_1
    \le
    \xi^{-1}(\epsilon_n)
    \right\}.
\end{equation}
Using the inclusion from Eq. \eqref{inclusion} we can write
\begin{align}
    \Tr\left(\tilde{\Pi}_{\epsilon_n}\rho^{\otimes n}\right)
    =
    \Tr\left[
    \left(
    \bigoplus_{\lambda:\,
    |H(\bar{\lambda})-S(\rho)|\le\epsilon_n}
    \1^{(\mathcal P_\lambda)}
    \otimes
    \1^{(\mathcal Q_\lambda)}
    \right)
    \rho^{\otimes n}
    \right]
    &\overset{(1)}{\geq}
    \Tr\left[
    \left(
    \bigoplus_{\lambda:\,
    \|\text{Diag}(\bar{\lambda})-\rho\|_1
    \le
    \xi^{-1}(\epsilon_n)}
    \1^{(\mathcal P_\lambda)}
    \otimes
    \1^{(\mathcal Q_\lambda)}
    \right)
    \rho^{\otimes n}
    \right]
    \nonumber\\
    &\overset{(2)}{\geq}
    1-\poly(n)e^{-n\eta_{\epsilon_n}}\label{fi},
\end{align}
where
\(
\eta_{\epsilon_n}
=
\tfrac12\bigl(\xi^{-1}(\epsilon_n)\bigr)^2.
\)
Here, the inequality $(\overset{(1)}{\geq})$ follows from Eq.~\eqref{inclusion}. The inequality $(\overset{(2)}{\geq})$ in Eq.~\eqref{fi} is obtained using \cite[Eq.~(6.23)]{harrow2005applications}, which is itself derived from the result of Keyl and Werner \cite{Keyl_Werner2001}. Finally, using the Gentle Operator Lemma \ref{Gentle_op_lem}, we can obtain Eq.~\eqref{GOL}.
\end{proof}

\begin{lem}
  \label{amc}
    For any $\epsilon_n>0$, there exists $\alpha({\epsilon_n})>0$ such that for any states $\rho^{\otimes n}$,
    \begin{equation}\label{L2}
        \Tr\left(\Pi_{\epsilon_n}\rho^{\otimes n}\right) \geq 1-2\left(\sum_{c=1}^t\mathrm{exp}\left(-\frac{2n\epsilon_n^2}{(\Delta L_c)^2}\right)\right) \geq 1-2te^{-2n\alpha({\epsilon_n}) },
    \end{equation}
    where $\Pi_{\epsilon_n}$ given in Eq. \eqref{energy_typ_proj}, $t$ is the total number of commuting charges, $\Delta L_c$ is the difference between largest and the smallest eigenvalue of $L_c$. Consequently,
\begin{equation}\label{GOL2}
    \left\|
        \rho^{\otimes n}
        -
        \frac{\Pi_{\epsilon_n}
        \rho^{\otimes n}
        \Pi_{\epsilon_n}}{\Tr\left(\Pi_{\epsilon_n}
        \rho^{\otimes n}
        \Pi_{\epsilon_n}\right)}
    \right\|_{1}
    \le
    2\sqrt{2te^{-2n\alpha({\epsilon_n})}}.
\end{equation}
\end{lem}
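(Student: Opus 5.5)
The plan is to view $\Tr(\Pi_{\epsilon_n}\rho^{\otimes n})$ as a probability for a classical i.i.d.\ process and then apply Hoeffding's inequality together with a union bound. Because $\Pi_{\epsilon_n}$ is diagonal in the product eigenbasis $\{|x_1,\ldots,x_n\rangle\}$, only the diagonal of $\rho^{\otimes n}$ in that basis contributes. Hence
\begin{equation}
\Tr\left(\Pi_{\epsilon_n}\rho^{\otimes n}\right)=\Pr\left[\forall c:\ \left|\tfrac1n\sum_{i=1}^n L_c(X_i)-l_c\right|\le\epsilon_n\right],
\end{equation}
where $X_1,\ldots,X_n$ are i.i.d.\ with distribution $\alpha_x=\langle x|\rho|x\rangle$. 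For each $c$, the variables $L_c(X_i)$ are bounded in an interval of length $\Delta L_c$. Their mean is $\sum_x\alpha_x L_c(x)=\Tr(\rho L_c)=l_c$.

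By two-sided Hoeffding,
\begin{equation}
\Pr\left[\left|\tfrac1n\sum_i L_c(X_i)-l_c\right|>\epsilon_n\right]\le 2\exp\left(-\frac{2n\epsilon_n^2}{(\Delta L_c)^2}\right).
\end{equation}
The complement of the event defining $\Pi_{\epsilon_n}$ is the union over $c$ of these failure events, so a union bound gives the first inequality in Eq.~\eqref{L2}. For the second inequality, set $\alpha(\epsilon_n):=\epsilon_n^2/\max_c(\Delta L_c)^2$. This bounds every summand by $e^{-2n\alpha(\epsilon_n)}$, and summing the $t$ terms gives the factor $2t$. The only degenerate case is $\Delta L_c=0$. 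There the charge is a multiple of the identity, the event holds with certainty, and that term can simply be dropped.

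For Eq.~\eqref{GOL2}, I will invoke the Gentle Operator Lemma~\ref{Gentle_op_lem}, exactly as at the end of the proof of Lemma~\ref{Keyl_Werner}. The projector $\Pi_{\epsilon_n}$ satisfies $\Tr(\Pi_{\epsilon_n}\rho^{\otimes n})\ge 1-\varepsilon$ with $\varepsilon=2te^{-2n\alpha(\epsilon_n)}$. Then $\|\rho^{\otimes n}-\Pi\rho^{\otimes n}\Pi/\Tr(\Pi\rho^{\otimes n})\|_1\le 2\sqrt{\varepsilon}$, which is the claimed bound.

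The only mildly delicate step is the reduction to the classical distribution, namely checking that off-diagonal coherences of $\rho$ do not contribute. This follows because $\Pi_{\epsilon_n}$ is a sum of product-basis projectors, so $\Tr(\Pi\rho^{\otimes n})=\sum_{x^n\in T}\prod_i\langle x_i|\rho|x_i\rangle$, where $T$ is the set of basis strings $x^n$ satisfying the constraints in Eq.~\eqref{Piepsilonn}. Everything else is standard.
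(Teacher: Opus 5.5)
Your proposal is correct and follows the paper's proof: two-sided Hoeffding for the i.i.d.\ classical variables $L_c(X_i)$ with $X_i\sim\langle x|\rho|x\rangle$, a union bound over the $t$ charges, the choice $\alpha(\epsilon_n)=\epsilon_n^2/\max_c(\Delta L_c)^2$, and then the Gentle Operator Lemma with $\varepsilon=2te^{-2n\alpha(\epsilon_n)}$. You also spell out two points the paper leaves implicit: why $\Tr(\Pi_{\epsilon_n}\rho^{\otimes n})$ equals a classical probability, since off-diagonal terms drop out, and how to treat the degenerate case $\Delta L_c=0$.
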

\begin{proof}
We proceed by introducing
\begin{equation}\label{Defn_Delta_Lk_max}
    \Delta L_{\max}
    :=
    \max_{c\in\{1,\ldots,t\}}
    \Delta L_c.
\end{equation}
Using the union bound together with Hoeffding's inequality (Theorem~\ref{Hoeffding_ineq}), we obtain
\begin{align}
1-\Tr(\Pi_{\epsilon_n}\rho^{\otimes n})
&=
\Pr\!\left[
\bigcup_{c=1}^{t}
\left|\frac{1}{n}\sum_{i=1}^n L_c(x_i) - l_c\right| >\epsilon_n
\right]
\nonumber\\
&\overset{(1)}{\leq}
\sum_{c=1}^{t}
\Pr\!\left[
\left|\frac{1}{n}\sum_{i=1}^n L_c(x_i) - l_c\right| >\epsilon_n
\right]
\nonumber\\
&\overset{(2)}{\leq}
2\sum_{c=1}^{t}
\exp\!\left(
-\frac{2n\epsilon_n^2}{(\Delta L_c)^2}
\right)
\nonumber\\
&\overset{(3)}{\leq}
2t
\exp\!\left(
-\frac{2n\epsilon_n^2}{(\Delta L_{\max})^2}
\right).
\label{AMC_error}
\end{align}
Here, the first inequality is the union bound, the second follows from Hoeffding's inequality, and the third uses the definition of $\Delta L_{\max}$ in Eq.~\eqref{Defn_Delta_Lk_max}. Finally, defining
\begin{equation}
    \alpha({\epsilon_n})
    :=
    \frac{\epsilon_n^2}{(\Delta L_{\max})^2},
\end{equation}
Eq.~\eqref{AMC_error} can be rewritten as
\begin{equation}
    \Tr(\Pi_{\epsilon_n}\rho^{\otimes n})
    \ge
    1
    -
    2t\,e^{-2n\alpha({\epsilon_n})}.
\end{equation}
Finally, using the Gentle Operator Lemma \ref{Gentle_op_lem}, we can obtain Eq.~\eqref{GOL2}.
\end{proof}

\begin{thm}
\label{ref100}
   Let $\Pi_{\epsilon_n}$ and $\tilde{\Pi}_{\epsilon_n}$ be the projectors introduced in Eq.~\eqref{charge_projector2} and Eq.~\eqref{entropy_proj}, respectively. Then
\begin{equation}\label{ref200}
    \left\|
    \rho^{\otimes n}
    -
    \frac{
    \Pi_{\epsilon_n}\tilde{\Pi}_{\epsilon_n}
    \rho^{\otimes n}
    \tilde{\Pi}_{\epsilon_n}\Pi_{\epsilon_n}
    }{
    \Tr\left(
    \Pi_{\epsilon_n}\tilde{\Pi}_{\epsilon_n}
    \rho^{\otimes n}
    \tilde{\Pi}_{\epsilon_n}\Pi_{\epsilon_n}
    \right)
    }
    \right\|_1
    \leq
    2\sqrt{
    \poly(n)e^{-n\eta_{\epsilon_n}}
    +
    2t\,e^{-2n\alpha({\epsilon_n})}
    } ,
\end{equation}
and
\begin{equation}
    \Tr\left(
    \Pi_{\epsilon_n}\tilde{\Pi}_{\epsilon_n}
    \rho^{\otimes n}
    \tilde{\Pi}_{\epsilon_n}\Pi_{\epsilon_n}
    \right)
    \geq
    1-
    4\left(
    \poly(n)e^{-n\eta_{\epsilon_n}}
    +
    2t\,e^{-2n\alpha({\epsilon_n})}
    \right).
\end{equation}
\end{thm}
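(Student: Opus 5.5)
The plan is to merge the two typical projectors into one projector. I will bound its weight under $\rho^{\otimes n}$ by a union bound, and then apply the Gentle Operator Lemma~\ref{Gentle_op_lem} once, in the same way it was used for Lemmas~\ref{Keyl_Werner} and~\ref{amc}.

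\textbf{Step 1 (commutation).} First I check that $\Pi_{\epsilon_n}$ and $\tilde{\Pi}_{\epsilon_n}$ commute, so that $P:=\Pi_{\epsilon_n}\tilde{\Pi}_{\epsilon_n}$ is an orthogonal projector. By Eq.~\eqref{energy_typ_proj}, $\Pi_{\epsilon_n}=\bigoplus_\lambda \1^{(\mathcal P_\lambda)}\otimes\1^{(\mathcal Q'_\lambda)}$ is block diagonal in the Schur--Weyl decomposition. By Eq.~\eqref{entropy_proj}, $\tilde{\Pi}_{\epsilon_n}$ is a sum of full isotypic projectors $\1^{(\mathcal P_\lambda)}\otimes\1^{(\mathcal Q_\lambda)}$. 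Each such block projector commutes with every block-diagonal operator. Hence
\begin{equation}
P=\bigoplus_{\lambda:\,|H(\bar\lambda)-S(\rho)|\le\epsilon_n}\1^{(\mathcal P_\lambda)}\otimes\1^{(\mathcal Q'_\lambda)} ,
\end{equation}
and $P\rho^{\otimes n}P=\Pi_{\epsilon_n}\tilde{\Pi}_{\epsilon_n}\rho^{\otimes n}\tilde{\Pi}_{\epsilon_n}\Pi_{\epsilon_n}$. This also gives the block form needed later in Eq.~\eqref{employment_eq}.

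\textbf{Step 2 (union bound).} For commuting projectors $A,B$ the operator inequality $\1-AB\le(\1-A)+(\1-B)$ holds. Indeed, in a joint eigenbasis this reduces to the scalar inequality $1-ab\le(1-a)+(1-b)$ for $a,b\in\{0,1\}$. Taking the expectation in $\rho^{\otimes n}$ and inserting Eqs.~\eqref{L1} and~\eqref{L2} gives
\begin{equation}
1-\Tr\bigl(P\rho^{\otimes n}\bigr)\le \poly(n)e^{-n\eta_{\epsilon_n}}+2t\,e^{-2n\alpha(\epsilon_n)}=:\varepsilon .
\end{equation}
This bound is stronger than the stated trace bound, since $1-\varepsilon\ge1-4\varepsilon$. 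Using $\Tr(P\rho^{\otimes n}P)=\Tr(P\rho^{\otimes n})$, the second claim follows directly. I keep the factor $4$ only to match the statement, for instance if one prefers to route the estimate through the gentle-measurement form of Lemma~\ref{Gentle_op_lem}.

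\textbf{Step 3 (gentle operator lemma).} Applying Lemma~\ref{Gentle_op_lem} to the projector $P$ with $\Tr(P\rho^{\otimes n})\ge1-\varepsilon$ yields the normalized bound $2\sqrt{\varepsilon}$, which is Eq.~\eqref{ref200}. This is exactly the route used for Eqs.~\eqref{GOL} and~\eqref{GOL2}.

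\textbf{Main obstacle.} The only genuinely non-routine point is Step 1. Without commutation, $\Pi_{\epsilon_n}\tilde{\Pi}_{\epsilon_n}$ would not be a projector, and I would need a noncommutative union bound (Gao or Sen type), which would worsen the constants. So I will argue the commutation carefully from Schur--Weyl duality, using that $\Pi_{\epsilon_n}$ is permutation invariant (the preceding proposition). After that, Steps 2 and 3 are routine.
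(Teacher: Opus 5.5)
Your proof is correct, but it takes a different route from the paper. The paper's proof feeds the two tail bounds \eqref{L1} and \eqref{L2} directly into the non-commutative union bound, Theorem~\ref{NCUB}, with $\Pi_1=\tilde\Pi_{\epsilon_n}$ and $\Pi_2=\Pi_{\epsilon_n}$. That single step gives both the trace-distance bound $2\sqrt{\varepsilon}$ and the weight bound $1-4\varepsilon$, without ever asking whether the two projectors commute; the factor $4$ in the statement comes from there.

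You instead note that $\tilde\Pi_{\epsilon_n}$ is a sum of full isotypic projectors. These commute with every permutation-invariant operator, so $[\Pi_{\epsilon_n},\tilde\Pi_{\epsilon_n}]=0$ and $\Pi_{\epsilon_n}\tilde\Pi_{\epsilon_n}$ is itself a projector. The elementary inequality $\1-AB\le(\1-A)+(\1-B)$, i.e.\ $(\1-A)(\1-B)\ge 0$, then gives the weight bound, and one application of Lemma~\ref{Gentle_op_lem} gives the trace-distance bound.

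Your route buys three things:
\begin{itemize}
\item a sharper weight bound, $1-\varepsilon$ rather than $1-4\varepsilon$;
\item reliance only on the tools already used in Lemmas~\ref{Keyl_Werner} and~\ref{amc};
\item an explicit derivation of the block form $\Pi_{\epsilon_n}\tilde\Pi_{\epsilon_n}=\bigoplus_{\lambda}\1^{(\mathcal P_\lambda)}\otimes\1^{(\mathcal Q'_\lambda)}$, which the paper later asserts without comment in Eq.~\eqref{supp_eqn}.
\end{itemize}
The paper's route is shorter and would still work if the two typical projectors failed to commute. The cost is the worse constant and dependence on a heavier external result.
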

\begin{proof}
   Starting from the inequalities in Eqs.~\eqref{L1} and \eqref{L2}, established in Lemmas~\ref{Keyl_Werner} and \ref{amc}, respectively, the proof follows directly by applying the non-commutative union bound stated in Theorem~\ref{NCUB}.
\end{proof}

\begin{thm}
\label{thm_8_important}
For any \(\epsilon_n>0\), recall the definition [Eq.~\eqref{dimm_lambda_defn}]
\begin{equation}
D_{\min}^{(\epsilon_n)}
:=
\min_{\lambda:\, |H(\bar{\lambda})-S(\rho)|\leq \epsilon_n}
m_{\lambda},
\end{equation}
where \(m_{\lambda}=\dim(\mathcal{P}_{\lambda})\) is defined in Eq.~\eqref{dimplql}. Then, any projector of the form
\begin{equation}\label{Pen}
P_{\epsilon_n}
=
\bigoplus_{\lambda:\,|H(\bar\lambda)-S(\rho)|\leq\epsilon_n}
\1^{(\P'_{\lambda})}\otimes\1^{(\Q'_{\lambda})},
\end{equation}
where \(\mathcal Q'_{\lambda}\) defined in Eq. \eqref{Qprimelambda_defn} and  \(\mathcal P'_{\lambda}\) is the largest-dimensional subspace of $\mathcal P_{\lambda}$ whose dimension is an integer multiple of
\(\left\lfloor D_{\min}^{(\epsilon_n)}e^{-n\epsilon_n}\right\rfloor\),
satisfies the following
\begin{enumerate}
\item Dimension bound:
\begin{equation}\label{Pen2}
\operatorname{supp}(P_{\epsilon_n})
\subseteq
\operatorname{supp}\left(\Pi_{\epsilon_n}\tilde{\Pi}_{\epsilon_n}\right),
\quad\text{and}\quad
\operatorname{rank}(P_{\epsilon_n})
=
\operatorname{poly}(n)
\left\lfloor
D_{\min}^{(\epsilon_n)}e^{-n\epsilon_n}
\right\rfloor
\left\lfloor
\frac{e^{n\epsilon_n}m_{\lambda}}
{D_{\min}^{(\epsilon_n)}}
\right\rfloor;
\end{equation}
\item Approximation:
\begin{equation}\label{bdtrace}
\left\|
\frac{
\Pi_{\epsilon_n}\tilde{\Pi}_{\epsilon_n}
\rho^{\otimes n}
\tilde{\Pi}_{\epsilon_n}\Pi_{\epsilon_n}
}{
\Tr\left(
\Pi_{\epsilon_n}\tilde{\Pi}_{\epsilon_n}
\rho^{\otimes n}
\tilde{\Pi}_{\epsilon_n}\Pi_{\epsilon_n}
\right)
}
-
\frac{
P_{\epsilon_n}
\rho^{\otimes n}
P_{\epsilon_n}
}{
\Tr\left(
P_{\epsilon_n}
\rho^{\otimes n}
P_{\epsilon_n}
\right)
}
\right\|_1
\leq
2e^{-n\epsilon_n/2}.
\end{equation}
\end{enumerate}
\end{thm}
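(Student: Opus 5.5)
The proof rests on the block structure shared by the three projectors. Since $\tilde\Pi_{\epsilon_n}=\bigoplus_{\lambda:|H(\bar\lambda)-S(\rho)|\le\epsilon_n}\1^{(\mathcal P_\lambda)}\otimes\1^{(\mathcal Q_\lambda)}$ and $\Pi_{\epsilon_n}=\bigoplus_\lambda \1^{(\mathcal P_\lambda)}\otimes\1^{(\mathcal Q'_\lambda)}$ by Eq.~\eqref{energy_typ_proj}, both are block diagonal in the Schur--Weyl decomposition. They therefore commute, and their product is
\begin{equation}
\Pi_{\epsilon_n}\tilde\Pi_{\epsilon_n}=\bigoplus_{\lambda:|H(\bar\lambda)-S(\rho)|\le\epsilon_n}\1^{(\mathcal P_\lambda)}\otimes\1^{(\mathcal Q'_\lambda)} .
\end{equation}
Because $\mathcal P'_\lambda\subseteq\mathcal P_\lambda$, the inclusion $\operatorname{supp}(P_{\epsilon_n})\subseteq\operatorname{supp}(\Pi_{\epsilon_n}\tilde\Pi_{\epsilon_n})$ follows immediately.

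For the rank, set $D:=\lfloor D^{(\epsilon_n)}_{\min}e^{-n\epsilon_n}\rfloor$. Taking $\mathcal P'_\lambda$ to be the largest subspace whose dimension is a multiple of $D$ gives $m'_\lambda=D\lfloor m_\lambda/D\rfloor$. Then $\operatorname{rank}P_{\epsilon_n}=\sum_\lambda m'_\lambda\dim\mathcal Q'_\lambda$. Since $|Y^n_d|$ and $\dim\mathcal Q_\lambda$ are both $\poly(n)$, this yields the stated form up to the usual $\poly(n)$ bookkeeping, using $\lfloor m_\lambda/D\rfloor\approx\lfloor e^{n\epsilon_n}m_\lambda/D^{(\epsilon_n)}_{\min}\rfloor$. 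This part is routine.

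For the approximation, the plan is to compare two states that differ only by cutting each $\mathcal P_\lambda$ down to $\mathcal P'_\lambda$. By Eq.~\eqref{Tensorstate_totalcharge} and the block structure,
\begin{equation}
\Pi_{\epsilon_n}\tilde\Pi_{\epsilon_n}\rho^{\otimes n}\tilde\Pi_{\epsilon_n}\Pi_{\epsilon_n}=\bigoplus_\lambda r_\lambda\frac{\1^{(\mathcal P_\lambda)}}{m_\lambda}\otimes\1^{(\mathcal Q'_\lambda)}\rho_\lambda\1^{(\mathcal Q'_\lambda)} ,
\end{equation}
and $P_{\epsilon_n}\rho^{\otimes n}P_{\epsilon_n}$ is the same expression with $\1^{(\mathcal P_\lambda)}$ replaced by $\1^{(\mathcal P'_\lambda)}$. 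Write $A:=\Pi_{\epsilon_n}\tilde\Pi_{\epsilon_n}\rho^{\otimes n}\tilde\Pi_{\epsilon_n}\Pi_{\epsilon_n}$ and $B:=P_{\epsilon_n}\rho^{\otimes n}P_{\epsilon_n}$. Then $B=P_{\epsilon_n}AP_{\epsilon_n}$, and the weight lost in block $\lambda$ is the fraction $1-m'_\lambda/m_\lambda<D/m_\lambda\le D^{(\epsilon_n)}_{\min}e^{-n\epsilon_n}/m_\lambda\le e^{-n\epsilon_n}$, where the last inequality uses the definition of $D^{(\epsilon_n)}_{\min}$ as the minimum of $m_\lambda$. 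Hence $\Tr B\ge(1-e^{-n\epsilon_n})\Tr A$.

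Two routes then give the bound. The direct one works blockwise: $A/\Tr A-B/\Tr B$ is a direct sum of terms of the form $c_\lambda\big(\tfrac{\1^{(\mathcal P_\lambda)}}{m_\lambda}-\kappa\tfrac{\1^{(\mathcal P'_\lambda)}}{m_\lambda}\big)\otimes\omega_\lambda$ with $\kappa=\Tr A/\Tr B$. Its trace norm can be computed exactly and is at most $2(1-\Tr B/\Tr A)\le 2e^{-n\epsilon_n}$, which is even stronger than the claimed bound. The alternative, which matches the stated $2e^{-n\epsilon_n/2}$, applies the gentle operator lemma (Lemma~\ref{Gentle_op_lem}) to the normalized state $A/\Tr A$ and the projector $P_{\epsilon_n}$, which has success probability $\ge 1-e^{-n\epsilon_n}$. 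This gives $2\sqrt{e^{-n\epsilon_n}}=2e^{-n\epsilon_n/2}$ directly, and I would use that route.

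The main obstacle is a degenerate edge case. If $D^{(\epsilon_n)}_{\min}e^{-n\epsilon_n}<1$, then $D=0$ and the construction is vacuous. I would handle this by noting that for the relevant $\lambda$ the dimension $m_\lambda$ is $e^{n(s\pm\epsilon_n)}/\poly(n)$, so $D\ge1$ for large $n$ whenever $s>2\epsilon_n$. I would state this as a standing assumption, and otherwise take $D=1$ and $\mathcal P'_\lambda=\mathcal P_\lambda$, in which case the error is zero.
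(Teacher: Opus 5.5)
Your proposal is correct and follows the paper's proof essentially step for step. It uses the same block form of $\Pi_{\epsilon_n}\tilde\Pi_{\epsilon_n}$ to get the support inclusion, the same per-block loss bound $\dim(\mathcal{P}_{\lambda}^{'\perp})/m_{\lambda}<e^{-n\epsilon_n}$ via $m_\lambda\ge D^{(\epsilon_n)}_{\min}$ (the paper's Proposition~\ref{prop_for_thm}), and the gentle operator lemma applied with $P_{\epsilon_n}$ to reach $2e^{-n\epsilon_n/2}$. Your side remarks are correct refinements that the paper passes over: $m'_\lambda=D\lfloor m_\lambda/D\rfloor$ is the exact dimension (the paper's product of two floors is only approximate), the orthogonal-support computation already gives the stronger bound $2e^{-n\epsilon_n}$, and the case $\lfloor D^{(\epsilon_n)}_{\min}e^{-n\epsilon_n}\rfloor=0$ (e.g.\ pure $\rho$) must indeed be excluded or patched for the construction to make sense.
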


\begin{proof}
We begin by expressing the projector \(\Pi_{\epsilon_n}\tilde{\Pi}_{\epsilon_n}\) using the Schur--Weyl decomposition from Eqs.~\eqref{energy_typ_proj} and \eqref{entropy_proj} as
\begin{equation}\label{supp_eqn}
\Pi_{\epsilon_n}\tilde{\Pi}_{\epsilon_n}
=
\bigoplus_{\lambda:\,|H(\bar\lambda)-S(\rho)|\leq \epsilon_n}
\1^{(\P_{\lambda})}\otimes \1^{(\Q'_{\lambda})}.
\end{equation}
Consequently, using Eq.~\eqref{Pen}, we obtain
\begin{equation}\label{supp_eqn2}
\operatorname{supp}\left(\Pi_{\epsilon_n}\tilde{\Pi}_{\epsilon_n}\right)
=
\bigoplus_{\lambda:\,|H(\bar\lambda)-S(\rho)|\leq \epsilon_n}
\P_{\lambda}\otimes\Q'_{\lambda}.
\end{equation}
For each \(\lambda\) satisfying
\(\left|H(\bar{\lambda})-S(\rho)\right|\leq\epsilon_n\), since
\(\P'_{\lambda}\subseteq\P_{\lambda}\), Eq.~\eqref{Pen} directly implies
\begin{equation}
    \operatorname{supp}(P_{\epsilon_n})
    \subseteq
    \operatorname{supp}\left(\Pi_{\epsilon_n}\tilde{\Pi}_{\epsilon_n}\right).
\end{equation}
As \(\mathcal{P}'_{\lambda}\) is the largest-dimensional subspace of \(\mathcal{P}_{\lambda}\) whose dimension is an integer multiple of \(\left\lfloor D_{\min}^{(\epsilon_n)}e^{-n\epsilon_n}\right\rfloor\), the dimension \(\dim(\mathcal{P}'_{\lambda})\) can be decomposed into a \(\lambda\)-independent factor and a \(\lambda\)-dependent factor as follows:
\begin{equation}\label{dimm_prime_lambda}
\dim (\mathcal{P}'_{\lambda})=\Tr\left(\1^{\left(\P'_{\lambda}\right)}\right)
=
\left\lfloor
D_{\min}^{(\epsilon_n)}e^{-n\epsilon_n}
\right\rfloor
\left\lfloor
\frac{e^{n\epsilon_n}m_{\lambda}}
{D_{\min}^{(\epsilon_n)}}
\right\rfloor
=:m'_{\lambda}.
\end{equation}
We also define the orthogonal complement of \(\mathcal{P}'_{\lambda}\) in
\(\mathcal{P}_{\lambda}\) as \(\mathcal{P}_{\lambda}^{'\perp}\) for each
\(\lambda\) satisfying \(|H(\bar\lambda)-S(\rho)|\leq\epsilon_n\). Hence,
\begin{equation}\label{ocdim}
    \P_{\lambda}
    =
    \mathcal{P}'_{\lambda}\oplus\mathcal{P}_{\lambda}^{'\perp},
    \quad\text{which implies}\quad
    m_{\lambda}
    =
    m'_{\lambda}
    +
    \dim\left(\mathcal{P}_{\lambda}^{'\perp}\right).
\end{equation}
As $\dim (\Q'_{\lambda})=\mathrm{poly}(n)$ and the cardinality of the set $\left\{\lambda:|H(\bar\lambda)-S(\rho)|\leq \epsilon_n\right\}$ is $\mathrm{poly}(n)$, too, we can obtain the $\mathrm{rank}\left(P_{\epsilon_n}\right)$ by taking the trace of $P_{\epsilon_n}$ given in Eq. \eqref{Pen},
\begin{equation}
    \mathrm{rank}\left(P_{\epsilon_n}\right) = \Tr\left(P_{\epsilon_n}\right)=\sum_{\lambda:|H(\bar\lambda)-S(\rho)|\leq\epsilon_n}\dim (\P'_{\lambda})\dim (\Q'_{\lambda})=\sum_{\lambda:|H(\bar\lambda)-S(\rho)|\leq\epsilon_n}m'_{\lambda}\dim (\Q'_{\lambda})=\operatorname{poly}(n)
\left\lfloor
D_{\min}^{(\epsilon_n)}e^{-n\epsilon_n}
\right\rfloor
\left\lfloor
\frac{e^{n\epsilon_n}D_{\max}^{(\epsilon_n)}}
{D_{\min}^{(\epsilon_n)}}
\right\rfloor,
\end{equation}
where in the last equality we use Eq. \eqref{dimm_prime_lambda}. Finally, to prove the inequality in Eq. \eqref{bdtrace}, we proceed by using Eqs. \eqref{supp_eqn} and  \eqref{Tensorstate_totalcharge} to calculate
\begin{equation}
\label{calculate}
    \frac{\Pi_{\epsilon_n}\tilde\Pi_{\epsilon_n}\rho^{\otimes n}\tilde\Pi_{\epsilon_n}\Pi_{\epsilon_n}}{\Tr\left(\Pi_{\epsilon_n}\tilde\Pi_{\epsilon_n}\rho^{\otimes n}\tilde\Pi_{\epsilon_n}\Pi_{\epsilon_n}\right)} =  \bigoplus_{\lambda:\,|H(\bar\lambda)-S(\rho)|\leq \epsilon_n} 
    \tilde{r}_{\lambda} \frac{\1^{(\mathcal{P}_{\lambda})}}{m_{\lambda}}\otimes 
    \tilde{\rho}_{\lambda}^{(\Q'_{\lambda})},
\end{equation}
where 
\begin{equation}\label{normalization}
    \frac{r_{\lambda}}{\left(\sum_{\lambda:\,|H(\bar\lambda)-S(\rho)|\leq \epsilon_n} r_{\lambda}\right)}:=\tilde{r}_{\lambda}\quad\mathrm{and}\quad\tilde{\rho}_{\lambda}^{(\Q'_{\lambda})} := \frac{\1^{\left(\mathcal{Q}^{'}_{\lambda}\right)}\rho_{\lambda}^{(\mathcal{Q}_{\lambda})}\1^{\left(\mathcal{Q}^{'}_{\lambda}\right)}}{\Tr\left(\1^{\left(\mathcal{Q}^{'}_{\lambda}\right)}\rho_{\lambda}^{(\mathcal{Q}_{\lambda})}\1^{\left(\mathcal{Q}^{'}_{\lambda}\right)}\right)}.
\end{equation}
Using Eqs. \eqref{Pen} and \eqref{Tensorstate_totalcharge}, we obtain the following:
\begin{align}
    \Tr\left(P_{\epsilon_n}\frac{\Pi_{\epsilon_n}\tilde\Pi_{\epsilon_n}\rho^{\otimes n}\tilde\Pi_{\epsilon_n}\Pi_{\epsilon_n}}{\Tr\left(\Pi_{\epsilon_n}\tilde\Pi_{\epsilon_n}\rho^{\otimes n}\tilde\Pi_{\epsilon_n}\Pi_{\epsilon_n}\right)}\right) &= 
    \Tr\Bigg(\Bigg[\bigoplus_{\lambda:\,|H(\bar\lambda)-S(\rho)|\leq \epsilon_n} 
    \1^{\left(\mathcal{P}'_{\lambda}\right)}\otimes \1^{\left(\mathcal{Q}'_{\lambda}\right)}\Bigg]
    \Bigg[\bigoplus_{\lambda:\,|H(\bar\lambda)-S(\rho)|\leq \epsilon_n} 
    \tilde{r}_{\lambda} \frac{\1^{(\mathcal{P}_{\lambda})}}{m_{\lambda}}\otimes 
    \tilde{\rho}_{\lambda}^{(\Q'_{\lambda})}\Bigg]\Bigg)\nonumber\\
    &\overset{(1)}{=} 
    \Tr\Bigg(\bigoplus_{\lambda:\,|H(\bar\lambda)-S(\rho)|\leq \epsilon_n} 
    \tilde{r}_{\lambda} \frac{\1^{(\mathcal{P}'_{\lambda})}}{m_{\lambda}}\otimes \tilde{\rho}_{\lambda}^{(\Q'_{\lambda})}\Bigg)\overset{(2)}{=} \sum_{\lambda:\,|H(\bar\lambda)-S(\rho)|\leq \epsilon_n}
    \tilde{r}_{\lambda}
    \frac{m'_{\lambda}}{m_{\lambda}}\nonumber\\
    &\overset{(3)}{=} 
    \sum_{\lambda:\,|H(\bar\lambda)-S(\rho)|\leq \epsilon_n}
    \tilde{r}_{\lambda}
    \left(1-\frac{\dim \left(\mathcal{P}_{\lambda}^{'\perp}\right)}{m_{\lambda}}\right)\overset{(4)}{=} 1-\sum_{\lambda:\,|H(\bar\lambda)-S(\rho)|\leq \epsilon_n}
    \tilde{r}_{\lambda}\times \frac{\dim \left(\mathcal{P}_{\lambda}^{'\perp}\right)}{m_{\lambda}}\nonumber\\
    &\overset{(5)}{\geq} 1-e^{-n\epsilon_n }\left(\sum_{\lambda:\,|H(\bar\lambda)-S(\rho)|\leq \epsilon_n}\tilde{r}_{\lambda}\right)
    \overset{(6)}{=} 1-e^{-n\epsilon_n },\label{last_inequality_term2}
\end{align}
where the first equality follows from the inclusion
\(\mathcal{P}'_{\lambda}\subseteq\mathcal{P}_{\lambda}\), the second equality follows from
\(\operatorname{Tr}\!\left(\1^{(\P'_{\lambda})}\right)=m'_{\lambda}\) as given in Eq.~\eqref{dimm_prime_lambda}, together with
\(\operatorname{Tr}\!\left(\tilde{\rho}_{\lambda}^{(\Q'_{\lambda})}\right)=1\) which follows from Eq.~\eqref{normalization}, the third equality follows from Eq.~\eqref{ocdim}, and the fourth equality follows from the normalization condition in Eq.~\eqref{normalization}. To write the inequality in Eq.~\eqref{last_inequality_term2}, we use Proposition \ref{prop_for_thm}. As $\operatorname{supp}(P_{\epsilon_n})
\subseteq
\operatorname{supp}\left(\Pi_{\epsilon_n}\tilde{\Pi}_{\epsilon_n}\right)$ we have $P_{\epsilon_n}\Pi_{\epsilon_n}\tilde{\Pi}_{\epsilon_n}=P_{\epsilon_n}$ which allows us to write
\begin{equation}\label{up}
     \frac{
P_{\epsilon_n}\Pi_{\epsilon_n}\tilde{\Pi}_{\epsilon_n}
\rho^{\otimes n}
\tilde{\Pi}_{\epsilon_n}\Pi_{\epsilon_n}P_{\epsilon_n}
}{
\Tr\left(
P_{\epsilon_n}\Pi_{\epsilon_n}\tilde{\Pi}_{\epsilon_n}
\rho^{\otimes n}
\tilde{\Pi}_{\epsilon_n}\Pi_{\epsilon_n}P_{\epsilon_n}
\right)
}=\frac{
P_{\epsilon_n}
\rho^{\otimes n}P_{\epsilon_n}
}{
\Tr\left(
P_{\epsilon_n}
\rho^{\otimes n}P_{\epsilon_n}\right)
}.
\end{equation}
Now using Eq. \eqref{last_inequality_term2}, one can employ the Gentle Operator Lemma \ref{Gentle_op_lem} to obtain 
\begin{equation}
    \left\|
\frac{
\Pi_{\epsilon_n}\tilde{\Pi}_{\epsilon_n}
\rho^{\otimes n}
\tilde{\Pi}_{\epsilon_n}\Pi_{\epsilon_n}
}{
\Tr\left(
\Pi_{\epsilon_n}\tilde{\Pi}_{\epsilon_n}
\rho^{\otimes n}
\tilde{\Pi}_{\epsilon_n}\Pi_{\epsilon_n}
\right)
}
-\frac{
P_{\epsilon_n}\Pi_{\epsilon_n}\tilde{\Pi}_{\epsilon_n}
\rho^{\otimes n}
\tilde{\Pi}_{\epsilon_n}\Pi_{\epsilon_n}P_{\epsilon_n}
}{
\Tr\left(
P_{\epsilon_n}\Pi_{\epsilon_n}\tilde{\Pi}_{\epsilon_n}
\rho^{\otimes n}
\tilde{\Pi}_{\epsilon_n}\Pi_{\epsilon_n}P_{\epsilon_n}
\right)
}\right\|_1=\left\|
\frac{
\Pi_{\epsilon_n}\tilde{\Pi}_{\epsilon_n}
\rho^{\otimes n}
\tilde{\Pi}_{\epsilon_n}\Pi_{\epsilon_n}
}{
\Tr\left(
\Pi_{\epsilon_n}\tilde{\Pi}_{\epsilon_n}
\rho^{\otimes n}
\tilde{\Pi}_{\epsilon_n}\Pi_{\epsilon_n}
\right)
}
-
\frac{
P_{\epsilon_n}
\rho^{\otimes n}
P_{\epsilon_n}
}{
\Tr\left(
P_{\epsilon_n}
\rho^{\otimes n}
P_{\epsilon_n}
\right)
}
\right\|_1
\leq
2e^{-n\epsilon_n/2},
\end{equation}
where the equality follows from Eq. \eqref{up}. This completes the proof.
\end{proof}
\begin{prop}\label{prop_for_thm}
For any $\epsilon_n>0$
    \begin{equation}\label{trace_norm_remainder_small}
    \forall\, \lambda:\, |H(\bar{\lambda}) - S(\rho)| \leq \epsilon_n, \quad
    \frac{\dim \left(\mathcal{P}_{\lambda}^{'\perp}\right)}{m_{\lambda}}
    <\frac{\left\lfloor
D_{\min}^{(\epsilon_n)}e^{-n\epsilon_n}
\right\rfloor}{D^{(\epsilon_n)}_{\min}}
    \leq e^{-n\epsilon_n }.
\end{equation}
\end{prop}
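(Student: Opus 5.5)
The plan is to work with a single $\lambda$ in the typical set and treat the two inequalities of \eqref{trace_norm_remainder_small} one at a time. Set $k:=\lfloor D_{\min}^{(\epsilon_n)}e^{-n\epsilon_n}\rfloor$. By the definition in Theorem~\ref{thm_8_important}, $\mathcal P'_\lambda$ is the largest subspace of $\mathcal P_\lambda$ whose dimension is an integer multiple of $k$. Hence $m'_\lambda=k\lfloor m_\lambda/k\rfloor$, and Euclidean division of $m_\lambda$ by $k$ gives
\[
\dim\bigl(\mathcal P_\lambda^{'\perp}\bigr)=m_\lambda-m'_\lambda=m_\lambda \bmod k<k .
\]
I will assume $k\ge1$, i.e.\ $D_{\min}^{(\epsilon_n)}\ge e^{n\epsilon_n}$, so that the construction is nondegenerate. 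This is the regime of interest, because $D_{\min}^{(\epsilon_n)}$ grows like $e^{n(s-\epsilon_n)}/\poly(n)$. If $k=0$, the construction of $\mathcal E$ breaks down, and I would simply state $k\geq 1$ as a standing assumption.

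For the first inequality, I divide the bound above by $m_\lambda$. Since $\lambda$ lies in the typical set $|H(\bar\lambda)-S(\rho)|\le\epsilon_n$, the definition of $D_{\min}^{(\epsilon_n)}$ as a minimum over that set gives $m_\lambda\ge D_{\min}^{(\epsilon_n)}$. Therefore
\[
\frac{\dim(\mathcal P_\lambda^{'\perp})}{m_\lambda}<\frac{k}{m_\lambda}\le\frac{k}{D_{\min}^{(\epsilon_n)}} .
\]
For the second inequality, I use $\lfloor x\rfloor\le x$ with $x=D_{\min}^{(\epsilon_n)}e^{-n\epsilon_n}$. This gives $k/D_{\min}^{(\epsilon_n)}\le e^{-n\epsilon_n}$.

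The only real subtlety is identifying the remainder correctly. I have to check that the ``largest subspace with dimension a multiple of $k$'' really has dimension $k\lfloor m_\lambda/k\rfloor$. This holds because any integer between $0$ and $m_\lambda$ is attainable as the dimension of a subspace of $\mathcal P_\lambda$, so the largest admissible multiple of $k$ not exceeding $m_\lambda$ is achieved. I also have to keep the nondegeneracy condition $k\ge1$ explicit. Everything else is elementary.
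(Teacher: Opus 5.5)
Your argument is correct and is essentially the paper's proof. The paper gets $\dim(\mathcal{P}_\lambda^{'\perp})<\lfloor D_{\min}^{(\epsilon_n)}e^{-n\epsilon_n}\rfloor$ by contradiction: otherwise $\mathcal{P}'_\lambda$ could be enlarged by a further $\lfloor D_{\min}^{(\epsilon_n)}e^{-n\epsilon_n}\rfloor$ dimensions, violating its maximality, which is your Euclidean-division remainder bound; it then uses $m_\lambda\ge D_{\min}^{(\epsilon_n)}$ and $\lfloor x\rfloor\le x$ exactly as you do.

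Your explicit assumption $k\ge 1$ is a genuine improvement, since the paper leaves it implicit. For $k=0$ one has $\mathcal{P}'_\lambda=\{0\}$, so the claimed strict inequality reduces to $1<0$ and fails; this happens, e.g., for pure states, where $D_{\min}^{(\epsilon_n)}=1$.
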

\begin{proof}
Recall \(\mathcal{P}'_{\lambda}\) is the largest-dimensional subspace of
\(\mathcal{P}_{\lambda}\) whose dimension is an integer multiple of
\(\left\lfloor D_{\min}^{(\epsilon_n)}e^{-n\epsilon_n}\right\rfloor\). We will do the proof by contradiction. If
\[
\dim\left(\mathcal{P}_{\lambda}^{'\perp}\right)
\ge
\left\lfloor D_{\min}^{(\epsilon_n)}e^{-n\epsilon_n}\right\rfloor,
\]
then one could enlarge \(\mathcal{P}'_{\lambda}\) by a subspace of
\(\mathcal{P}_{\lambda}^{'\perp}\) of dimension
\(\left\lfloor D_{\min}^{(\epsilon_n)}e^{-n\epsilon_n}\right\rfloor\), contradicting the maximality of \(\mathcal{P}'_{\lambda}\). Therefore,
\begin{equation}\label{remainderdimbound}
\dim\left(\mathcal{P}_{\lambda}^{'\perp}\right)
<
\left\lfloor D_{\min}^{(\epsilon_n)}e^{-n\epsilon_n}\right\rfloor
\le
D_{\min}^{(\epsilon_n)}e^{-n\epsilon_n}.
\end{equation}
Since by definition \(m_{\lambda}\geq D_{\min}^{(\epsilon_n)}\) for every
\(\lambda\) satisfying \(|H(\bar{\lambda})-S(\rho)|\leq\epsilon_n\) [see Eq.~\eqref{dimm_lambda_defn}], it follows that
\begin{equation}
\forall\,\lambda:\,|H(\bar{\lambda})-S(\rho)|\leq\epsilon_n,\qquad
\frac{\dim\left(\mathcal{P}_{\lambda}^{'\perp}\right)}{m_{\lambda}}
\leq
\frac{\dim\left(\mathcal{P}_{\lambda}^{'\perp}\right)}{D_{\min}^{(\epsilon_n)}}
\leq
e^{-n\epsilon_n},
\end{equation}
where the final inequality follows from Eq.~\eqref{remainderdimbound}.
\end{proof}
\begin{thm}\label{Deutsche_bahn}
    For any $\epsilon_n>0$ we have
    \begin{equation}
        \left\|
\rho^{\otimes n}
-
\frac{
P_{\epsilon_n}
\rho^{\otimes n}
P_{\epsilon_n}
}{
\Tr\left(
P_{\epsilon_n}
\rho^{\otimes n}
P_{\epsilon_n}
\right)
}
\right\|_1
\leq
\poly(n)e^{-n\eta_{\epsilon_n}/2}
    +
    2\sqrt{2}t\,e^{-n\alpha({\epsilon_n})}
    +2e^{-n\epsilon_n/2}\leq\poly(n)\exp\!\left(
-n\kappa_{\epsilon_n}
\right),
\end{equation}
where we denote
\begin{equation}\label{kappadefn}
    \kappa_{\epsilon_n} :=\min\!\left\{\frac{\eta_{\epsilon_n}}{2},\;\alpha({\epsilon_n}),\;\frac{\epsilon_n}{2}\right\}.
\end{equation}

\end{thm}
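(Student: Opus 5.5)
The bound follows from the triangle inequality, splitting through the intermediate state
\[
\omega:=\frac{\Pi_{\epsilon_n}\tilde{\Pi}_{\epsilon_n}\rho^{\otimes n}\tilde{\Pi}_{\epsilon_n}\Pi_{\epsilon_n}}{\Tr\left(\Pi_{\epsilon_n}\tilde{\Pi}_{\epsilon_n}\rho^{\otimes n}\tilde{\Pi}_{\epsilon_n}\Pi_{\epsilon_n}\right)},
\]
so that
\[
\left\|\rho^{\otimes n}-\frac{P_{\epsilon_n}\rho^{\otimes n}P_{\epsilon_n}}{\Tr(P_{\epsilon_n}\rho^{\otimes n}P_{\epsilon_n})}\right\|_1\le\|\rho^{\otimes n}-\omega\|_1+\left\|\omega-\frac{P_{\epsilon_n}\rho^{\otimes n}P_{\epsilon_n}}{\Tr(P_{\epsilon_n}\rho^{\otimes n}P_{\epsilon_n})}\right\|_1 .
\]

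The second term is bounded by $2e^{-n\epsilon_n/2}$, directly from Eq.~\eqref{bdtrace} of Theorem~\ref{thm_8_important}. The first term is bounded by Theorem~\ref{ref100}, which gives
\[
2\sqrt{\poly(n)e^{-n\eta_{\epsilon_n}}+2t\,e^{-2n\alpha(\epsilon_n)}} .
\]
I would use subadditivity of the square root, $\sqrt{a+b}\le\sqrt a+\sqrt b$. This gives $2\sqrt{\poly(n)}\,e^{-n\eta_{\epsilon_n}/2}+2\sqrt{2t}\,e^{-n\alpha(\epsilon_n)}$. The first piece is absorbed into $\poly(n)e^{-n\eta_{\epsilon_n}/2}$, since the square root of a polynomial is again a polynomial. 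The second is at most $2\sqrt2\,t\,e^{-n\alpha(\epsilon_n)}$, because $\sqrt t\le t$ for $t\ge1$. Adding these to $2e^{-n\epsilon_n/2}$ yields the first inequality of the theorem.

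For the second inequality, I would bound each exponent below by $\kappa_{\epsilon_n}$ from Eq.~\eqref{kappadefn}. This gives $e^{-n\eta_{\epsilon_n}/2}\le e^{-n\kappa_{\epsilon_n}}$, $e^{-n\alpha(\epsilon_n)}\le e^{-n\kappa_{\epsilon_n}}$ and $e^{-n\epsilon_n/2}\le e^{-n\kappa_{\epsilon_n}}$. The constant prefactors $2\sqrt2\,t$ and $2$ do not depend on $n$, since $t$ is fixed, so they are absorbed into $\poly(n)$. Hence the sum is at most $\poly(n)\exp(-n\kappa_{\epsilon_n})$.

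There is no real obstacle; the argument only combines earlier bounds. The one point needing care is the bookkeeping of the square roots and the convention that $\poly(n)$ absorbs constants and square roots of polynomials. One should also note that Theorem~\ref{thm_8_important} applies with the same $\epsilon_n$ used in $\Pi_{\epsilon_n}$ and $\tilde\Pi_{\epsilon_n}$, so the intermediate state $\omega$ is the same in both terms.
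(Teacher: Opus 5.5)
Your proposal is correct and follows essentially the same route as the paper. Like the paper, you insert the intermediate state $\Pi_{\epsilon_n}\tilde\Pi_{\epsilon_n}\rho^{\otimes n}\tilde\Pi_{\epsilon_n}\Pi_{\epsilon_n}/\Tr(\cdot)$, apply the triangle inequality, bound the two pieces by Theorem~\ref{ref100} and Eq.~\eqref{bdtrace}, and then use $\sqrt{a+b}\le\sqrt a+\sqrt b$ and absorb constants into $\poly(n)$; your remark that $2\sqrt{2t}\le 2\sqrt2\,t$ for $t\ge1$ also reconciles the stated constant with the $2\sqrt{2t}$ that appears in the paper's own derivation.
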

\begin{proof}
We can proceed as follows:
\begin{align}
        \left\|
\rho^{\otimes n}
-
\frac{
P_{\epsilon_n}
\rho^{\otimes n}
P_{\epsilon_n}
}{
\Tr\left(
P_{\epsilon_n}
\rho^{\otimes n}
P_{\epsilon_n}
\right)
}
\right\|_1&\overset{(1)}{=}\left\|
    \rho^{\otimes n}
    -
    \frac{
    \Pi_{\epsilon_n}\tilde{\Pi}_{\epsilon_n}
    \rho^{\otimes n}
    \tilde{\Pi}_{\epsilon_n}\Pi_{\epsilon_n}
    }{
    \Tr\left(
    \Pi_{\epsilon_n}\tilde{\Pi}_{\epsilon_n}
    \rho^{\otimes n}
    \tilde{\Pi}_{\epsilon_n}\Pi_{\epsilon_n}
    \right)
    }+\frac{
    \Pi_{\epsilon_n}\tilde{\Pi}_{\epsilon_n}
    \rho^{\otimes n}
    \tilde{\Pi}_{\epsilon_n}\Pi_{\epsilon_n}
    }{
    \Tr\left(
    \Pi_{\epsilon_n}\tilde{\Pi}_{\epsilon_n}
    \rho^{\otimes n}
    \tilde{\Pi}_{\epsilon_n}\Pi_{\epsilon_n}
    \right)
    }-\frac{
P_{\epsilon_n}
\rho^{\otimes n}
P_{\epsilon_n}
}{
\Tr\left(
P_{\epsilon_n}
\rho^{\otimes n}
P_{\epsilon_n}
\right)
}
    \right\|_1\nonumber\\
    &\overset{(2)}{\leq} \left\|
    \rho^{\otimes n}
    -
    \frac{
    \Pi_{\epsilon_n}\tilde{\Pi}_{\epsilon_n}
    \rho^{\otimes n}
    \tilde{\Pi}_{\epsilon_n}\Pi_{\epsilon_n}
    }{
    \Tr\left(
    \Pi_{\epsilon_n}\tilde{\Pi}_{\epsilon_n}
    \rho^{\otimes n}
    \tilde{\Pi}_{\epsilon_n}\Pi_{\epsilon_n}
    \right)
    }\right\|_1+\left\|\frac{
    \Pi_{\epsilon_n}\tilde{\Pi}_{\epsilon_n}
    \rho^{\otimes n}
    \tilde{\Pi}_{\epsilon_n}\Pi_{\epsilon_n}
    }{
    \Tr\left(
    \Pi_{\epsilon_n}\tilde{\Pi}_{\epsilon_n}
    \rho^{\otimes n}
    \tilde{\Pi}_{\epsilon_n}\Pi_{\epsilon_n}
    \right)
    }-\frac{
P_{\epsilon_n}
\rho^{\otimes n}
P_{\epsilon_n}
}{
\Tr\left(
P_{\epsilon_n}
\rho^{\otimes n}
P_{\epsilon_n}
\right)
}
    \right\|_1\nonumber\\
    &\overset{(3)}{\leq} 2\sqrt{
    \poly(n)e^{-n\eta_{\epsilon_n}}
    +
    2t\,e^{-2n\alpha({\epsilon_n})}
    }+2e^{-n\epsilon_n/2}\nonumber\\
    & \overset{(4)}{\leq} \poly(n)e^{-n\eta_{\epsilon_n}/2}
+
2\sqrt{2t}\,e^{-n\alpha({\epsilon_n})}
+
2e^{-n\epsilon_n/2}\nonumber\\
&\overset{(5)}{\leq} \poly(n)\exp\!\left(
-n\kappa_{\epsilon_n}
\right),
    \end{align}
where to write first equality $(\overset{(1)}{\leq})$, we add and substract $\frac{
P_{\epsilon_n}
\rho^{\otimes n}
P_{\epsilon_n}
}{
\Tr\left(
P_{\epsilon_n}
\rho^{\otimes n}
P_{\epsilon_n}
\right)
}$.
We obtain the inequality in the second line $(\overset{(2)}{\leq})$ by using the triangle inequality. 
The inequality in the third line $(\overset{(3)}{\leq})$ follows from Eqs.~\eqref{ref200} and \eqref{bdtrace}. 
Finally, the inequality in the fourth line $(\overset{(4)}{\leq})$ is obtained by using the relation 
$\sqrt{a+b}\leq \sqrt{a}+\sqrt{b}$.
\end{proof}
\subsection{Finding order of the transformation error}

\begin{prop}\label{order_of_error_transformation}
Let $\epsilon_n$ be any positive decreasing function of $n$ with 
\begin{equation}
    \lim_{n\to\infty}\epsilon_n=0.
\end{equation}
Then 
\begin{equation}\label{kex}
    \kappa_{\epsilon_n} :=\min\!\left\{\frac{\eta_{\epsilon_n}}{2},\;\alpha({\epsilon_n}),\;\frac{\epsilon_n}{2}\right\},
\end{equation}
introduced in Eq. \eqref{kappadefn} where
\begin{align}
    \eta_{\epsilon_n}
    &
    =
    \frac{\bigl(\xi^{-1}(\epsilon_n)\bigr)^2}{2},
    \qquad\mathrm{with}\qquad
    \xi(x)
    =
    x\log d + H_2(x),\qquad\mathrm{where}\qquad
    H_2(x)
    =
    -\bigl[x\log x +(1-x)\log(1-x)\bigr],\label{Binary_entropy}\\
    \alpha({\epsilon_n})
    &=
    \frac{2\,\epsilon_n^2}{\Delta L_{\max}^2},
    \qquad\mathrm{where}\qquad
    \Delta L_{\max}
    =
    \max_{c\in\{1,\ldots,t\}}
    \Bigl[
        \operatorname{Eig}_{\max}(L_c)
        -
        \operatorname{Eig}_{\min}(L_c)
    \Bigr],
\end{align}
satisfy 
\begin{equation}
    \kappa_{\epsilon_n}
    \sim
    \frac{1}{4}\left(\frac{\epsilon_n}{\log(1/\epsilon_n)}\right)^2.
\end{equation}
\end{prop}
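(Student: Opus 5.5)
The plan is to work out the small-$x$ asymptotics of the inverse function $\xi^{-1}$, and then show that $\eta_{\epsilon_n}/2$ is the smallest of the three terms in the minimum \eqref{kex}. The leading behaviour comes from the binary entropy. As $x\to0$,
\begin{equation}
H_2(x) = x\log(1/x) + x + O(x^2), \qquad \xi(x)=x\log(1/x)+x(1+\log d)+O(x^2).
\end{equation}
So $\xi(x)\sim x\log(1/x)$. Let $x=\xi^{-1}(\epsilon)$; this is well defined for small $\epsilon$ because $\xi$ is continuous, increasing near $0$, and $\xi(0)=0$. We want to invert $\epsilon \sim x\log(1/x)$. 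The standard ansatz is $x=\epsilon/\log(1/\epsilon)\cdot(1+o(1))$. Substituting it gives
\begin{equation}
\log(1/x)=\log(1/\epsilon)+\log\log(1/\epsilon)+o(1),
\end{equation}
hence $x\log(1/x)=\epsilon\,(1+o(1))$, which confirms the ansatz.

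To make this rigorous, I would evaluate $\xi$ at the two trial points $x_\pm=(1\pm\delta)\,\epsilon/\log(1/\epsilon)$, for any fixed $\delta>0$. For small $\epsilon$ this gives $\xi(x_-)<\epsilon<\xi(x_+)$. Monotonicity of $\xi$ then yields
\begin{equation}
\xi^{-1}(\epsilon)=\frac{\epsilon}{\log(1/\epsilon)}\,(1+o(1)).
\end{equation}
This sandwich step is where the only real care is needed. The correction terms $x\log\log(1/\epsilon)$ and $x(1+\log d)$ must be shown to be $o(\epsilon)$. They are, because each is of order $\epsilon\log\log(1/\epsilon)/\log(1/\epsilon)$ or smaller. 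Squaring and halving the result gives
\begin{equation}
\frac{\eta_{\epsilon_n}}{2}=\frac14\bigl(\xi^{-1}(\epsilon_n)\bigr)^2=\frac14\left(\frac{\epsilon_n}{\log(1/\epsilon_n)}\right)^2(1+o(1)).
\end{equation}

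Finally, I would compare this with the other two entries of the minimum. The charge term is $\alpha(\epsilon_n)=\Theta(\epsilon_n^2)$ with a constant independent of $n$, and the third term is $\epsilon_n/2$. Dividing each by the $\eta$ term gives ratios of order $(\log(1/\epsilon_n))^2$ and $\log^2(1/\epsilon_n)/\epsilon_n$, both of which diverge because $\epsilon_n\to0$. So for all sufficiently large $n$ the minimum is attained by $\eta_{\epsilon_n}/2$. That gives
\begin{equation}
\kappa_{\epsilon_n}=\frac{\eta_{\epsilon_n}}{2}\sim\frac14\left(\frac{\epsilon_n}{\log(1/\epsilon_n)}\right)^2,
\end{equation}
in the sense $f=h\pm o(h)$ of Table~\ref{tab:notation}. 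The remaining work is routine bookkeeping. The two conventions for $\alpha$ in the paper (a prefactor of $1$ or $2$) do not matter for this argument, since only the order $\epsilon_n^2$ is used.
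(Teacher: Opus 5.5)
Your proposal is correct and follows the same outline as the paper: show $\xi(x)\sim x\log(1/x)$ as $x\to 0$, invert to get $\xi^{-1}(\epsilon_n)\sim \epsilon_n/\log(1/\epsilon_n)$, square, and note that $\alpha(\epsilon_n)=\Theta(\epsilon_n^2)$ and $\epsilon_n/2$ eventually exceed $\eta_{\epsilon_n}/2$. The only difference is the inversion step: you trap $\xi^{-1}(\epsilon_n)$ between the trial points $x_\pm=(1\pm\delta)\epsilon_n/\log(1/\epsilon_n)$ using monotonicity of $\xi$. The paper instead bounds $\epsilon_n$ between $x\log(1/x)+x(\log d+\tfrac12)$ and $x\log(1/x)+x(\log d+1)$, with $x=\xi^{-1}(\epsilon_n)$, and then takes logarithms to replace $\log(1/x)$ by $\log(1/\epsilon_n)$; your version makes that step rigorous slightly more directly.
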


\begin{proof}
 We proceed by showing that 
\begin{equation}
\xi^{-1}(\epsilon_n)
\sim\left(\frac{\epsilon_n}{\log(1/\epsilon_n)}\right),
\end{equation}
where $\xi^{-1}(\epsilon_n)$ denotes the solution of
\begin{equation}\label{above_eqn_xix}
    \epsilon_n=x\log d+H_2(x)=\xi(x). 
\end{equation}
This equivalently means 
\begin{equation}\label{defining_reln}
   \epsilon_n=\xi^{-1}(\epsilon_n)\log d
    +H_2\!\left(\xi^{-1}(\epsilon_n)\right).
\end{equation}
Now, observe that $\xi^{-1}(\epsilon_n)\to 0$ whenever
$\epsilon_n\to 0$. Indeed, both terms on the right-hand side of
Eqs.~\eqref{above_eqn_xix} are non-negative. Therefore, if
$\epsilon_n\to 0$, then according to Eq. \eqref{defining_reln} necessarily
\[
\xi^{-1}(\epsilon_n)\log d\to 0
\qquad\text{and}\qquad
H_2\left(\xi^{-1}(\epsilon_n)\right)\to 0.
\]
Since $\log d$ is a constant, the first relation implies $\xi^{-1}(\epsilon_n)\to 0$.
Hence, the solution $\xi^{-1}(\epsilon_n)$ vanishes in the limit
$\epsilon_n\to 0$. 
Substituting the definition of binary entropy from Eq. \eqref{Binary_entropy} we can write Eq. \eqref{defining_reln} as
\begin{equation}
    \epsilon_n=\xi^{-1}(\epsilon_n)\log\left(\frac{1}{\xi^{-1}(\epsilon_n)}\right)+\xi^{-1}(\epsilon_n)\log d+\left(1-\xi^{-1}(\epsilon_n)\right)\log\left(\frac{1}{1-\xi^{-1}(\epsilon_n)}\right)\label{ubseq2}.
\end{equation}

Next, we observe that for $0<\xi^{-1}(\epsilon_n)<1/2$, using identity $y\leq \ln\left(\frac{1}{1-y}\right)\leq \frac{y}{1-y}$ for $0<y<1$ we have
\begin{equation}\label{inequality_useful}
    \frac{1}{2}\left(\frac{\xi^{-1}(\epsilon_n)}{1-\xi^{-1}(\epsilon_n)}\right)\overset{(1)}{\leq} \xi^{-1}(\epsilon_n)\leq \log\left(\frac{1}{1-\xi^{-1}(\epsilon_n)}\right) \leq \frac{\xi^{-1}(\epsilon_n)}{1-\xi^{-1}(\epsilon_n)},
\end{equation}
where to write $(\overset{(1)}{\leq})$, we have used 
\begin{equation}\label{eqn_ab}
    \mathrm{for}\qquad 0<\xi^{-1}(\epsilon_n)<1/2\qquad \mathrm{we\;\;have} \qquad\xi^{-1}(\epsilon_n)= \left(1-\xi^{-1}(\epsilon_n)\right)\left(\frac{\xi^{-1}(\epsilon_n)}{1-\xi^{-1}(\epsilon_n)}\right)\geq \frac{1}{2}\left(\frac{\xi^{-1}(\epsilon_n)}{1-\xi^{-1}(\epsilon_n)}\right).
\end{equation}
Using the above Eq. \eqref{eqn_ab}, we obtain the following from Eq. \eqref{ubseq2}:
\begin{align}
    \xi^{-1}(\epsilon_n)\log\left(\frac{1}{\xi^{-1}(\epsilon_n)}\right)+\xi^{-1}(\epsilon_n)\left[\log d+\frac{1}{2}\right]\overset{(1)}{\leq}\epsilon_n&\overset{(2)}{\leq} \xi^{-1}(\epsilon_n)\log\left(\frac{1}{\xi^{-1}(\epsilon_n)}\right)+\xi^{-1}(\epsilon_n)\left[\log d+1\right].\label{ubseq},
\end{align}
To obtain the inequality $(\overset{(1)}{\leq})$ and $(\overset{(2)}{\leq})$ we use inequality from Eq.~\eqref{inequality_useful}. 

Note that $\xi^{-1}(\epsilon_n)\to 0$ as $n\to\infty$, since $\epsilon_n$ is a decreasing function of $n$.
Consequently, with $\xi^{-1}(\epsilon_n)\to 0$ we have
\begin{equation}
    \lim_{n\rightarrow\infty}\frac{1}{\log\!\left(\frac{1}{\xi^{-1}(\epsilon_n)}\right)}=0,
\end{equation}
which allows us to write 
\begin{equation}\label{95v}
   \xi^{-1}(\epsilon_n)\left[\log d+a\right] = o\left( \xi^{-1}(\epsilon_n)\log\left(\frac{1}{\xi^{-1}(\epsilon_n)}\right)\right),\quad\text{where $a$ is an arbitrary constant.}
\end{equation}
This implies that we can express Eq. \eqref{ubseq} using Eq. \eqref{95v},
\begin{equation}\label{95we}
    \xi^{-1}(\epsilon_n)\log\left(\frac{1}{\xi^{-1}(\epsilon_n)}\right)+o\left( \xi^{-1}(\epsilon_n)\log\left(\frac{1}{\xi^{-1}(\epsilon_n)}\right)\right)\leq\epsilon_n\leq \xi^{-1}(\epsilon_n)\log\left(\frac{1}{\xi^{-1}(\epsilon_n)}\right)+o\left( \xi^{-1}(\epsilon_n)\log\left(\frac{1}{\xi^{-1}(\epsilon_n)}\right)\right).
\end{equation}
Therefore, from Eq. \eqref{95we} we can have 
\begin{equation}\label{neg_log}
   \epsilon_n=\xi^{-1}(\epsilon_n)\log\left(\frac{1}{\xi^{-1}(\epsilon_n)}\right)+o\left( \xi^{-1}(\epsilon_n)\log\left(\frac{1}{\xi^{-1}(\epsilon_n)}\right)\right),
   \quad\text{in other words}quad
   \epsilon_n\sim \xi^{-1}(\epsilon_n)\log\left(\frac{1}{\xi^{-1}(\epsilon_n)}\right).
\end{equation}
Taking logarithm on the both side of Eq. \eqref{neg_log}, we have 
\begin{equation}\label{neg_log2}
    \log\left(\frac{1}{\epsilon_n}\right)\sim \log \left(\frac{1}{\xi^{-1}(\epsilon_n)}\right)-\log\left(\log\left(\frac{1}{\xi^{-1}(\epsilon_n)}\right)\right).
\end{equation}
As $\xi^{-1}(\epsilon_n)\to 0$ as $n\to\infty$, we see that the second term in Eq. \eqref{neg_log2} diverges much slower than the first term with $n\to\infty$, we have
\begin{equation}
    \log\left(\log\left(\frac{1}{\xi^{-1}(\epsilon_n)}\right)\right)=o\left(\log \left(\frac{1}{\xi^{-1}(\epsilon_n)}\right)\right),
\end{equation}
which equivalently means
\begin{equation}\label{equivalently_means}
    \log\left(\frac{1}{\epsilon_n}\right)\sim \log\left(\frac{1}{\xi^{-1}\left(\epsilon_n\right)}\right).
\end{equation}
Using Eq. \eqref{neg_log} we can write
\begin{equation}
    \xi^{-1}\left(\epsilon_n\right) \sim \frac{\epsilon_n}{\log\left(\frac{1}{\xi^{-1}\left(\epsilon_n\right)}\right)}\sim\left(\frac{\epsilon_n}{\log(1/\epsilon_n)}\right).
\end{equation}
where to write final equality we use Eq. \eqref{equivalently_means}. This further gives 
\begin{equation}
   \frac{\eta_{\epsilon_n}}{2} = \frac{\bigl(\xi^{-1}(\epsilon_n)\bigr)^2}{4} \sim \frac{1}{4}\left(\frac{\epsilon_n}{\log(1/\epsilon_n)}\right)^2,
\end{equation}
since the other terms appearing in Eq.~\eqref{kex},
\begin{equation}
      \alpha({\epsilon_n})= \frac{2\,\epsilon_n^2}{\Delta L_{\max}^2} \qquad\text{and}\qquad \frac{\epsilon_n}{2},
\end{equation}
are greater than $\frac{1}{4}\left(\frac{\epsilon_n}{\log(1/\epsilon_n)}\right)^2$ for sufficiently large $n$. This means 
  \begin{equation}
    \kappa_{\epsilon_n}
    =
    \min\left\{
        \frac{\eta_{\epsilon_n}}{2},
        \alpha({\epsilon_n}),
        \frac{\epsilon_n}{2}
    \right\} \sim \frac{1}{4}\left(\frac{\epsilon_n}{\log(1/\epsilon_n)}\right)^2,
\end{equation}
which completes the proof.
\end{proof}

\subsection{Constructing the charge-conserving unitary}
\begin{defn}[Charge conserving unitary on a subspace]
Let
\(\{Q_{\mathrm{in}}^{(i)}\}_{i=1}^{m}\) and
\(\{Q_{\mathrm{out}}^{(i)}\}_{i=1}^{m}\)
be sets of mutually commuting conserved charges associated with the Hilbert spaces
\(\mathcal{H}_{\mathrm{in}}\) and
\(\mathcal{H}_{\mathrm{out}}\), respectively.
Let \(P\) be the projector onto a subspace of
\(\mathcal{H}_{\mathrm{in}}\).
A unitary
\[
V:\operatorname{supp}(P)\rightarrow\mathcal{H}_{\mathrm{out}}
\]
is said to be \emph{charge conserving} if
\begin{equation}
    V^{\dagger} Q_{\mathrm{out}}^{(i)} V
    =
    P Q_{\mathrm{in}}^{(i)} P,
    \qquad
    i=1,\ldots,m .
\end{equation}
If furthermore the projector \(P\) commutes with every input charge,
i.e.,
\[
[P,Q_{\mathrm{in}}^{(i)}]=0,
\qquad
i=1,\ldots,m,
\]
then the charge-conservation condition can equivalently be written as
\begin{equation}\label{definition_charge_conserving_unitary_on_subspace}
    V^{\dagger} Q_{\mathrm{out}}^{(i)} V
    =
    P Q_{\mathrm{in}}^{(i)},
    \qquad
    i=1,\ldots,m .
\end{equation}
\end{defn}

\begin{thm}
\label{U_lambda_defn_thm9232}
For any \(\epsilon_n>0\), consider, for each \(\lambda\) satisfying 
\(|H(\bar\lambda)-S(\rho)|\leq\epsilon_n\), a unitary
\begin{equation}\label{U_lambda_defn_thm}
U_\lambda:\mathcal{P}'_\lambda\rightarrow\mathcal{E}\otimes\mathcal{F}_\lambda,
\end{equation}
where \(\mathcal{E}\) and \(\mathcal{F}_\lambda\) are chosen such that
\(\mathcal{P}'_\lambda\simeq\mathcal{E}\otimes\mathcal{F}_\lambda\) [see Eq.~\eqref{dimm_prime_lambda}], with
\begin{equation}\label{EdimFdim}
\dim (\mathcal{E})=\left\lfloor D_{\min}^{(\epsilon_n)}e^{-n\epsilon_n}\right\rfloor,\qquad 
\dim (\mathcal{F}_\lambda)=\left\lfloor\frac{e^{n\epsilon_n}m_\lambda}{D_{\min}^{(\epsilon_n)}}\right\rfloor .
\end{equation}
The above unitaries can be extended to a charge-conserving unitary over the support of \(P_{\epsilon_n}\) introduced in Eq. \eqref{Pen} as
\begin{equation}\label{seq_eqn}
U=
\bigoplus_{\lambda:\,|H(\bar\lambda)-S(\rho)|\leq\epsilon_n}
U_\lambda\otimes\1^{(\mathcal{Q}'_\lambda)}
:
\operatorname{supp}(P_{\epsilon_n})
=
\bigoplus_{\lambda:\,|H(\bar\lambda)-S(\rho)|\leq\epsilon_n}
\mathcal{P}'_\lambda\otimes\mathcal{Q}'_\lambda
\rightarrow
\mathcal{E}\otimes\mathcal{K},
\quad\mathrm{where}\quad
\mathcal{K}
:=
\bigoplus_{\lambda:\,|H(\bar\lambda)-S(\rho)|\leq\epsilon_n}
\mathcal{F}_\lambda\otimes\mathcal{Q}'_\lambda .
\end{equation}
with the following choice of charges on \(\mathcal{E}\) and \(\mathcal{K}\):
\begin{equation}\label{hypothetical_charge}
L^{(\mathcal{E})}_{c}=nl_c\,\1^{(\mathcal{E})},\qquad 
L^{(\mathcal{K})}_{c}
=
\bigoplus_{\lambda:\,|H(\bar\lambda)-S(\rho)|\leq\epsilon_n}
\1^{(\mathcal{F}_\lambda)}\otimes
\left(
\tilde L^{(\mathcal{Q}'_\lambda)}_{\lambda,c}
-nl_c\,\1^{(\mathcal{Q}'_\lambda)}
\right),
\qquad
\tilde L^{(\mathcal{Q}'_\lambda)}_{\lambda,c}
=
\1^{(\mathcal{Q}'_\lambda)}_{\lambda}
L^{(\mathcal{Q}_\lambda)}_{\lambda,c}.
\end{equation}
Furthermore, the action of the unitary on the projected state is given by
\begin{equation}\label{eqn_projected_state}
    U \left(
    \frac{P_{\epsilon_n}\rho^{\otimes n}P_{\epsilon_n}}
    {\Tr(P_{\epsilon_n}\rho^{\otimes n}P_{\epsilon_n})}
    \right) U^{\dagger}
    =
    \frac{\1^{(\mathcal{E})}}{\dim (\mathcal{E})}
    \otimes
    \eta_{\rho}^{(\mathcal{K})},
\end{equation}
where $\eta_{\rho}^{(\mathcal{K})}$ is a state supported on Hilbert space $\mathcal{K}$
\end{thm}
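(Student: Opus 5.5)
The proof has three parts: show that $U$ is a well-defined unitary, verify charge conservation block by block, and compute its action on the projected state.

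\textbf{Unitarity.} By Eq.~\eqref{dimm_prime_lambda}, $\dim(\mathcal P'_\lambda)=\dim(\mathcal E)\dim(\mathcal F_\lambda)$, with the dimensions fixed in Eq.~\eqref{EdimFdim}. Hence for each admissible $\lambda$ an isometric isomorphism $U_\lambda:\mathcal P'_\lambda\to\mathcal E\otimes\mathcal F_\lambda$ exists; any choice of bases works. Each $U_\lambda\otimes\1^{(\mathcal Q'_\lambda)}$ then maps $\mathcal P'_\lambda\otimes\mathcal Q'_\lambda$ unitarily onto $\mathcal E\otimes\mathcal F_\lambda\otimes\mathcal Q'_\lambda$. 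Summing over $\lambda$ gives the direct sum $\bigoplus_\lambda\mathcal E\otimes(\mathcal F_\lambda\otimes\mathcal Q'_\lambda)=\mathcal E\otimes\mathcal K$, since the tensor product distributes over direct sums. So $U$ is a unitary from $\operatorname{supp}(P_{\epsilon_n})$, decomposed as in Eq.~\eqref{Pen}, onto $\mathcal E\otimes\mathcal K$.

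\textbf{Charge conservation.} First I check that $P_{\epsilon_n}$ commutes with $L_c^{\times n}$. By Eq.~\eqref{Tensorstate_totalcharge}, $L_c^{\times n}=\bigoplus_\lambda\1^{(\mathcal P_\lambda)}\otimes L^{(\mathcal Q_\lambda)}_{\lambda,c}$. The factor $\1^{(\mathcal P'_\lambda)}$ trivially commutes with $\1^{(\mathcal P_\lambda)}$. The factor $\1^{(\mathcal Q'_\lambda)}$ commutes with $L^{(\mathcal Q_\lambda)}_{\lambda,c}$: $\Pi_{\epsilon_n}$ is a product of spectral projectors of the commuting operators $L_c^{\times n}$, so it commutes with each of them, and the Schur–Weyl blocks inherit this commutation. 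Therefore the simplified condition \eqref{definition_charge_conserving_unitary_on_subspace} applies, and it suffices to show
\[
U\,P_{\epsilon_n}L_c^{\times n}\,U^\dagger=L_c^{(\mathcal E)}\otimes\1^{(\mathcal K)}+\1^{(\mathcal E)}\otimes L_c^{(\mathcal K)}.
\]
Blockwise, $P_{\epsilon_n}L_c^{\times n}=\bigoplus_\lambda\1^{(\mathcal P'_\lambda)}\otimes\tilde L^{(\mathcal Q'_\lambda)}_{\lambda,c}$. Conjugating by $U$ and using $U_\lambda\1^{(\mathcal P'_\lambda)}U_\lambda^\dagger=\1^{(\mathcal E)}\otimes\1^{(\mathcal F_\lambda)}$ gives
\[
\bigoplus_\lambda\1^{(\mathcal E)}\otimes\1^{(\mathcal F_\lambda)}\otimes\tilde L^{(\mathcal Q'_\lambda)}_{\lambda,c}.
\]
On the other side, add and subtract $nl_c$ inside each block. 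The constant term $nl_c\1^{(\mathcal E)}\otimes\bigoplus_\lambda\1^{(\mathcal F_\lambda)\otimes(\mathcal Q'_\lambda)}$ equals $L_c^{(\mathcal E)}\otimes\1^{(\mathcal K)}$. The remainder is exactly $\1^{(\mathcal E)}\otimes L_c^{(\mathcal K)}$ with $L_c^{(\mathcal K)}$ as defined in Eq.~\eqref{hypothetical_charge}. This proves the identity.

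\textbf{Action on the state.} From Eqs.~\eqref{Tensorstate_totalcharge} and \eqref{Pen}, and the same computation as in Eq.~\eqref{calculate}, the normalized projected state is
\[
\bigoplus_\lambda \tilde r'_\lambda\,\frac{\1^{(\mathcal P'_\lambda)}}{m'_\lambda}\otimes\tilde\rho^{(\mathcal Q'_\lambda)}_\lambda.
\]
Here $\tilde r'_\lambda\propto r_\lambda m'_\lambda/m_\lambda$, renormalized to sum to one. Applying $U_\lambda$ maps $\1^{(\mathcal P'_\lambda)}/m'_\lambda$ to $\frac{\1^{(\mathcal E)}}{\dim\mathcal E}\otimes\frac{\1^{(\mathcal F_\lambda)}}{\dim\mathcal F_\lambda}$. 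The $\mathcal E$ factor does not depend on $\lambda$, so it factors out of the direct sum. This yields $\tau^{(\mathcal E)}\otimes\eta_\rho^{(\mathcal K)}$ with
\[
\eta_\rho^{(\mathcal K)}=\bigoplus_\lambda\tilde r'_\lambda\frac{\1^{(\mathcal F_\lambda)}}{\dim\mathcal F_\lambda}\otimes\tilde\rho_\lambda^{(\mathcal Q'_\lambda)},
\]
which is a density matrix on $\mathcal K$.

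The main obstacle is bookkeeping rather than analysis. One must keep the weights straight: $r_\lambda m'_\lambda/m_\lambda$ renormalized, rather than $\tilde r_\lambda$. One must also verify that $\1^{(\mathcal Q'_\lambda)}$ commutes with $L^{(\mathcal Q_\lambda)}_{\lambda,c}$, so that $\tilde L^{(\mathcal Q'_\lambda)}_{\lambda,c}$ is a Hermitian operator on $\mathcal Q'_\lambda$ and the charge identity holds exactly rather than after compression.
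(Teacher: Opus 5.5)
Your proposal is correct and follows the same route as the paper. The charge identity is checked block by block using $U_\lambda\1^{(\mathcal P'_\lambda)}U_\lambda^\dagger=\1^{(\mathcal E)}\otimes\1^{(\mathcal F_\lambda)}$, with the $nl_c$ terms cancelling, and the $\lambda$-independent factor $\1^{(\mathcal E)}/\dim(\mathcal E)$ is then pulled out of the direct sum.

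Two of your steps are more careful than the paper's. You derive $[P_{\epsilon_n},L_c^{\times n}]=0$ from the block form of $P_{\epsilon_n}$, and you reweight the blocks by $m'_\lambda/m_\lambda$. Strictly, since $\tilde\rho^{(\mathcal Q'_\lambda)}_\lambda$ is normalized, the weight also carries the factor $\Tr(\1^{(\mathcal Q'_\lambda)}\rho_\lambda^{(\mathcal Q_\lambda)})$. None of this affects the statement, which only asserts that $\eta_\rho^{(\mathcal K)}$ is some state on $\mathcal K$.
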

\begin{proof}
    As the projector $P_{\epsilon_n}$ is supported on the subspace spanned by the common eigenvectors associated with the commuting charges [see Eq.~\eqref{Pen2}], it commutes with each conserved charge, i.e.,
\begin{equation}
[P_{\epsilon_n},L_c^{\times n}] = 0,\qquad \mathrm{for}\qquad c\in\{1,\ldots,t\}.
\end{equation} 
We first show that $U$ is charge-conserving unitary over the support of $P_{\epsilon_n}$ via checking Eq. \eqref{definition_charge_conserving_unitary_on_subspace} i.e.,
    \begin{equation}
        U^{\dagger}\left(L^{(\E)}_{c}\otimes\1^{(\mathcal{K})}+\1^{(\mathcal{E})}\otimes L^{(\mathcal{K})}_{c}\right) =P_{\epsilon_n} L_{c,n}\qquad \mathrm{for}\qquad c\in\{1,\ldots,t\}.
    \end{equation}
     To do so, we proceed by substituting $L^{(\E)}_{c}$ and $L^{(\mathcal{K})}_{c}$ from Eq. \eqref{hypothetical_charge} to calculate
\begin{align}\label{Energy_conservation_semi_app}
&U^{\dagger}\left(L^{(\E)}_{c}\otimes\1^{(\mathcal{K})}+\1^{(\mathcal{E})}\otimes L^{(\mathcal{K})}_{c}\right) U=\left[\bigoplus_{\lambda:|H(\bar\lambda)-S(\rho)|\leq \epsilon_n} U_{\lambda}^{\dagger} \otimes \1_{\lambda}^{\left(\mathcal{Q}'_{\lambda}\right)} \right]\left(nl_c\1^{\left(\mathcal{E}\right)}\otimes\1^{(\mathcal{K})}+\1^{(\mathcal{E})}\otimes L^{(\mathcal{K})}_{c}\right) \left[\bigoplus_{\lambda:|H(\bar\lambda)-S(\rho)|\leq \epsilon_n} U_{\lambda} \otimes \1_{\lambda}^{\left(\mathcal{Q}'_{\lambda}\right)} \right]\nonumber\\
    \overset{(1)}{=}& \left[\bigoplus_{\lambda:|H(\bar\lambda)-S(\rho)|\leq \epsilon_n} U_{\lambda}^{\dagger} \otimes \1_{\lambda}^{\left(\mathcal{Q}'_{\lambda}\right)} \right]\left(nl_c\1^{\left(\mathcal{E}\right)}\otimes\left[\bigoplus_{\lambda:|H(\bar\lambda)-S(\rho)|\leq \epsilon_n}\1_{\lambda}^{\left(\mathcal{F}_{\lambda}\right)} \otimes \1^{\left(\mathcal{Q}'_{\lambda}\right)}_{\lambda}\right]+\1^{\left(\mathcal{E}\right)}\otimes \left[\bigoplus_{\lambda:|H(\bar\lambda)-S(\rho)|\leq \epsilon_n}\1_{\lambda}^{\left(\mathcal{F}_{\lambda}\right)} \otimes \left(\tilde{L}^{\left(\mathcal{Q}'_{\lambda}\right)}_{\lambda, c}-nl_c\1^{\left(\mathcal{Q}'_{\lambda}\right)}_{\lambda}\right) \right] \right)\nonumber\\\times&\left[\bigoplus_{\lambda:|H(\bar\lambda)-S(\rho)|\leq \epsilon_n} U_{\lambda}\otimes \1_{\lambda}^{\left(\mathcal{Q}'_{\lambda}\right)} \right]\nonumber\\
     \overset{(2)}{=}& \bigoplus_{\lambda:|H(\bar\lambda)-S(\rho)|\leq \epsilon_n} U_{\lambda}^{\dagger}\left(\1^{(\E)}\otimes \1_{\lambda}^{\left(\mathcal{F}_{\lambda}\right)}\right) U_{\lambda}\otimes \tilde{L}^{\left(\mathcal{Q}'_{\lambda}\right)}_{\lambda, c} \overset{(3)}{=} \bigoplus_{\lambda:|H(\bar\lambda)-S(\rho)|\leq \epsilon_n} \1^{\left(\mathcal{P}'_{\lambda}\right)}_{\lambda}\otimes \tilde{L}^{\left(\mathcal{Q}'_{\lambda}\right)}_{\lambda, c}\nonumber\\\overset{(4)}{=} &\left[\bigoplus_{\lambda:|H(\bar\lambda)-S(\rho)|\leq \epsilon_n} \1^{\left(\mathcal{P}'_{\lambda}\right)}_{\lambda}\otimes \1^{\left(\mathcal{Q}'_{\lambda}\right)}_{\lambda}\right]\left[\bigoplus_{\lambda \in Y^n_d} 
    \1^{(\mathcal{P}_{\lambda})} \otimes 
L_{\lambda,c}^{(\mathcal{Q}_{\lambda})}\right]\nonumber\\\overset{(5)}{=}&P_{\epsilon_n} L_{c}^{\times n},
\end{align}
We obtain the first equality $(\overset{(1)}{=})$ by substituting [see Eq. \eqref{seq_eqn}]
\begin{equation}
\1^{(\mathcal{K})}
=
\bigoplus_{\lambda:|H(\bar\lambda)-S(\rho)|\leq \epsilon_n}
\1^{\left(\mathcal{F}_{\lambda}\right)}_{\lambda}
\otimes
\1^{\left(\mathcal{Q}'_{\lambda}\right)}_{\lambda}.
\end{equation}
The second equality $(\overset{(2)}{=})$ follows by substituting the definition of
$\tilde L^{(\mathcal{Q}'_\lambda)}_{\lambda,c}$ from
Eq.~\eqref{hypothetical_charge}, while the third equality
$(\overset{(3)}{=})$ follows from Eq. \eqref{U_lambda_defn_thm} which implies
\begin{equation}\label{UEKPen}
U_{\lambda}^{\dagger}
\left(
\1^{(\mathcal{E})}
\otimes
\1_{\lambda}^{\left(\mathcal{F}_{\lambda}\right)}
\right)
U_{\lambda}
=
\1^{\left(\mathcal{P}'_{\lambda}\right)}_{\lambda}.
\end{equation}
Next, we will prove Eq. \eqref{eqn_projected_state}. From Eq.~\eqref{Pen}, Eq.~\eqref{calculate}, together with Eq.~\eqref{up}, we obtain
\begin{equation}\label{projected_useful_state}
    \frac{
    P_{\epsilon_n}
    \rho^{\otimes n}
    P_{\epsilon_n}
    }{
    \Tr\left(
    P_{\epsilon_n}
    \rho^{\otimes n}
    P_{\epsilon_n}
    \right)
    }
    =
    \bigoplus_{\lambda:|H(\bar\lambda)-S(\rho)|\leq \epsilon_n}
    \tilde{r}_{\lambda}
    \frac{\1^{(\P'_{\lambda})}}{m'_{\lambda}}
    \otimes
    \tilde{\rho}_{\lambda}^{(\Q'_{\lambda})},
\end{equation}
where $\tilde{r}_{\lambda}$ is defined in Eq.~\eqref{normalization} and [see Eq.~\eqref{dimm_prime_lambda}]
\begin{equation}
\Tr\!\left(\1^{(\mathcal{P}'_{\lambda})}\right)
=
\dim (\mathcal{P}'_{\lambda})
=
m'_{\lambda}.
\end{equation}
Furthermore, the unitarity of $U_{\lambda}$ implies, upon taking the trace on both sides of Eq.~\eqref{UEKPen}, that
\begin{equation}\label{factor_dimension}
    \dim (\P'_{\lambda})
    =
    m'_{\lambda}
    =
    \dim (\E)\,
    \dim (\mathcal{F}_{\lambda}).
\end{equation}
Hence,
\begin{align}
     U \left(
    \frac{P_{\epsilon_n}\rho^{\otimes n}P_{\epsilon_n}}
    {\Tr(P_{\epsilon_n}\rho^{\otimes n}P_{\epsilon_n})}
    \right)
    U^{\dagger}
    &\overset{(1)}{=}
    \left(
    \bigoplus_{\lambda:\,|H(\bar\lambda)-S(\rho)|\leq\epsilon_n}
    U_\lambda\otimes\1^{(\mathcal{Q}'_\lambda)}
    \right)
    \left(
    \bigoplus_{\lambda:|H(\bar\lambda)-S(\rho)|\leq \epsilon_n}
    \tilde{r}_{\lambda}
    \frac{\1^{(\P'_{\lambda})}}{m'_{\lambda}}
    \otimes
    \tilde{\rho}_{\lambda}^{(\Q'_{\lambda})}
    \right)
    \left(
    \bigoplus_{\lambda:\,|H(\bar\lambda)-S(\rho)|\leq\epsilon_n}
    U_\lambda^{\dagger}\otimes\1^{(\mathcal{Q}'_\lambda)}
    \right)\nonumber\\
    &\overset{(2)}{=}
    \bigoplus_{\lambda:|H(\bar\lambda)-S(\rho)|\leq \epsilon_n}
    \tilde{r}_{\lambda}
    \frac{
    U_{\lambda}\1^{(\P'_{\lambda})}U_{\lambda}^{\dagger}
    }{
    \dim (\E)\dim (\mathcal{F}_{\lambda})
    }
    \otimes
    \tilde{\rho}_{\lambda}^{(\Q'_{\lambda})}\nonumber\\
    &\overset{(3)}{=}
    \frac{\1^{(\mathcal{E})}}{\dim (\E)}
    \otimes
    \underbrace{\left(
    \bigoplus_{\lambda:|H(\bar\lambda)-S(\rho)|\leq \epsilon_n}
    \tilde{r}_{\lambda}
    \frac{\1_{\lambda}^{(\mathcal{F}_{\lambda})}}
    {\dim (\mathcal{F}_{\lambda})}
    \otimes
    \tilde{\rho}_{\lambda}^{(\Q'_{\lambda})}
    \right)}_{:=\eta^{(\mathcal{K})}_{\rho}}
    \overset{(4)}{=}
    \frac{\1^{(\mathcal{E})}}{\dim (\E)}
    \otimes
    \eta^{(\mathcal{K})}_{\rho},
\end{align}
where the first equality $(\overset{(1)}{=})$ follows by substituting the expression for
$U$ from Eq.~\eqref{seq_eqn} and the projected state from
Eq.~\eqref{projected_useful_state}. The second equality
$(\overset{(2)}{=})$ follows from Eq.
\eqref{factor_dimension}. The third equality
$(\overset{(3)}{=})$ follows from Eq.~\eqref{UEKPen}, namely,
\[
U_{\lambda}\1^{(\mathcal{P}'_{\lambda})}U_{\lambda}^{\dagger}
=
\1^{(\mathcal{E})}
\otimes
\1^{(\mathcal{F}_{\lambda})},
\]
and the final equality $(\overset{(4)}{=})$ follows by taking
\begin{equation}
    \eta^{(\mathcal{K})}_{\rho}
    =
    \bigoplus_{\lambda:|H(\bar\lambda)-S(\rho)|\leq \epsilon_n}
    \tilde{r}_{\lambda}
    \frac{\1_{\lambda}^{(\mathcal{F}_{\lambda})}}
    {\dim (\mathcal{F}_{\lambda})}
    \otimes
    \tilde{\rho}_{\lambda}^{(\Q'_{\lambda})}.
\end{equation}
\end{proof}

\begin{thm}\label{charge_on_anc}
For any \(\epsilon_n>0\), the dimension of the Hilbert space \(\mathcal{K}\) and the corresponding charges \(L_c^{(\mathcal{K})}\) satisfy
\begin{equation}
    \dim (\mathcal{K})\leq \poly(n)e^{3n\epsilon_n},
    \qquad
    \|L_c^{(\mathcal{K})}\|\leq n\epsilon_n,
    \qquad c\in\{1,\ldots,t\},
\end{equation}
where \(\|\cdot\|\) denotes the operator norm.
\end{thm}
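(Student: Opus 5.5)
We prove the two bounds separately, working throughout from the explicit construction in Theorem~\ref{U_lambda_defn_thm9232}.

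\textbf{Charge bound.} This part is structural. By Eq.~\eqref{hypothetical_charge}, $L_c^{(\mathcal K)}$ is block diagonal, with blocks $\1^{(\mathcal F_\lambda)}\otimes(\tilde L^{(\mathcal Q'_\lambda)}_{\lambda,c}-nl_c\1^{(\mathcal Q'_\lambda)})$. Its operator norm is therefore the maximum over admissible $\lambda$ of $\|\tilde L^{(\mathcal Q'_\lambda)}_{\lambda,c}-nl_c\1^{(\mathcal Q'_\lambda)}\|$.

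Since $\Pi_{\epsilon_n}$ commutes with $L_c^{\times n}$, the restriction $\1^{(\mathcal Q'_\lambda)}L^{(\mathcal Q_\lambda)}_{\lambda,c}$ is a Hermitian operator on $\mathcal Q'_\lambda$. Its eigenvalues are eigenvalues of $L_c^{\times n}$ restricted to $\operatorname{supp}(\Pi_{\epsilon_n})$, by Eq.~\eqref{energy_typ_proj}.

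By the definition of $\Pi_{\epsilon_n}$ in Eq.~\eqref{Piepsilonn}, every such eigenvalue has the form $\sum_i L_c(x_i)$ with $|\frac1n\sum_i L_c(x_i)-l_c|\le\epsilon_n$. Hence the shifted operator has spectrum in $[-n\epsilon_n,n\epsilon_n]$. This gives $\|L_c^{(\mathcal K)}\|\le n\epsilon_n$.

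\textbf{Dimension bound.} We start from $\dim\mathcal K=\sum_\lambda \dim(\mathcal F_\lambda)\dim(\mathcal Q'_\lambda)$. The number of admissible $\lambda$ is at most $|Y_d^n|=\poly(n)$, and $\dim\mathcal Q'_\lambda\le n_\lambda=\poly(n)$. It therefore suffices to show
\[
\dim\mathcal F_\lambda\le e^{n\epsilon_n}\,\frac{m_\lambda}{D_{\min}^{(\epsilon_n)}}\le \poly(n)\,e^{3n\epsilon_n},
\]
that is, $m_\lambda/m_{\lambda'}\le\poly(n)e^{2n\epsilon_n}$ for any two admissible $\lambda,\lambda'$.

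For this we use the standard estimate on the dimension of $S_n$ irreps (e.g.\ \cite{harrow2005applications}):
\[
\frac{1}{\poly(n)}\,e^{nH(\bar\lambda)}\le m_\lambda\le e^{nH(\bar\lambda)}.
\]
On the admissible set we have $|H(\bar\lambda)-s|\le\epsilon_n$. Hence $m_\lambda\le e^{n(s+\epsilon_n)}$, and $D_{\min}^{(\epsilon_n)}\ge e^{n(s-\epsilon_n)}/\poly(n)$. Their ratio is at most $\poly(n)e^{2n\epsilon_n}$, which together with the prefactor $e^{n\epsilon_n}$ from Eq.~\eqref{EdimFdim} gives $e^{3n\epsilon_n}$ up to polynomial factors. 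Multiplying by the polynomial number of blocks and the polynomial factor $\dim\mathcal Q'_\lambda$ yields $\dim\mathcal K\le\poly(n)e^{3n\epsilon_n}$.

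The main point requiring care is quoting the irrep-dimension bound in the right form, with $H$ measured in the same log base as $S(\rho)$ (natural log, as in the paper). A secondary point is that the floor in Eq.~\eqref{EdimFdim} only helps the upper bound. If $D_{\min}^{(\epsilon_n)}e^{-n\epsilon_n}<1$, the construction is degenerate, and we would note that in that regime the bound is trivial or excluded.
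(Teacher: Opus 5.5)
Your proposal is correct and follows essentially the same route as the paper. The dimension bound comes from three ingredients: the polynomially many admissible $\lambda$, $\dim(\mathcal{Q}'_\lambda)=\poly(n)$, and $D_{\max}^{(\epsilon_n)}/D_{\min}^{(\epsilon_n)}\leq \poly(n)e^{2n\epsilon_n}$, which follows from the irrep-dimension bounds $\frac{1}{\poly(n)}e^{nH(\bar\lambda)}\leq m_\lambda\leq e^{nH(\bar\lambda)}$. The charge bound comes from the compressed blocks $\tilde L^{(\mathcal{Q}'_\lambda)}_{\lambda,c}$ having spectrum in $[n(l_c-\epsilon_n),\,n(l_c+\epsilon_n)]$ by the definition of $\Pi_{\epsilon_n}$; you justify this step explicitly via $[\Pi_{\epsilon_n},L_c^{\times n}]=0$, where the paper states it in one line.
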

\begin{proof}
From the definition
\begin{equation}
    \mathcal{K}
    :=
    \bigoplus_{\lambda:\,|H(\bar\lambda)-S(\rho)|\leq\epsilon_n}
    \mathcal{F}_\lambda\otimes\mathcal{Q}'_\lambda,
\end{equation}
we obtain
\begin{equation}
    \dim (\mathcal{K})
    =
    \sum_{\lambda:\,|H(\bar\lambda)-S(\rho)|\leq\epsilon_n}
    \dim (\mathcal{F}_{\lambda})
    \dim (\Q'_{\lambda})
    =
    \sum_{\lambda:\,|H(\bar\lambda)-S(\rho)|\leq\epsilon_n}
    \left\lfloor\frac{e^{n\epsilon_n}m_\lambda}{D_{\min}^{(\epsilon_n)}}\right\rfloor
    \dim (\Q'_{\lambda})
    \leq
    \mathrm{poly}(n)e^{n\epsilon_n}
    \frac{D_{\max}^{(\epsilon_n)}}{D_{\min}^{(\epsilon_n)}},
\end{equation}
where we use the facts that $\dim (\Q'_{\lambda})=\mathrm{poly}(n)$ and that the cardinality of the set
$\{\lambda:\, |H(\bar{\lambda})-S(\rho)|\leq \epsilon_n\}$ is also $\mathrm{poly}(n)$. Here, using Eq. \eqref{dimplql}, we define 
\begin{equation}\label{rate_eqn_lambda}
    D_{\max}^{(\epsilon_n)}
    :=
    \max_{\lambda:\, |H(\bar{\lambda})-S(\rho)|\leq \epsilon_n}
    m_{\lambda},
    \qquad\text{where}\qquad
    m_{\lambda}
    =
    \dim (\P_{\lambda}).
\end{equation}

Next, we employ the following bounds from \cite[{Eqs.~(6.15)-(6.16)}]{harrow2005applications}:
\begin{align}
  \label{ineq_Harrow}
    \forall \lambda:\, |H(\bar\lambda)-S(\rho)|\leq \epsilon_n, \quad
    \frac{1}{\poly(n)}\,e^{n(S(\rho)-\epsilon_n)}
    \leq
    \frac{1}{\poly(n)}\,e^{nH(\bar{\lambda})}
    \leq
    \dim (\mathcal{P}_{\lambda}) = m_{\lambda}
    \leq
    e^{nH(\bar{\lambda})}
    \leq
    e^{n(S(\rho)+\epsilon_n)}.
\end{align}
It follows immediately that
\begin{equation}
    \frac{D_{\max}^{(\epsilon_n)}}{D_{\min}^{(\epsilon_n)}}
    \leq
    \mathrm{poly}(n)e^{2n\epsilon_n},
\end{equation}
and therefore
\begin{equation}
    \dim (\mathcal{K})
    \leq
    \mathrm{poly}(n)e^{n\epsilon_n}
    \frac{D_{\max}^{(\epsilon_n)}}{D_{\min}^{(\epsilon_n)}}
    \leq
    \mathrm{poly}(n)e^{3n\epsilon_n}.
\end{equation}
Next, we show that
\begin{equation}
    \|L_c^{(\mathcal{K})}\|
    \leq
    n\epsilon_n,
    \qquad
    c\in\{1,\ldots,t\}.
\end{equation}
Indeed,
\begin{align}
    L^{(\mathcal{K})}_{c}
    &=
    \bigoplus_{\lambda:\,|H(\bar\lambda)-S(\rho)|\leq\epsilon_n}
    \1^{(\mathcal{F}_\lambda)}
    \otimes
    \left(
    \tilde L^{(\mathcal{Q}'_\lambda)}_{\lambda,c}
    -
    nl_c\,\1^{(\mathcal{Q}'_\lambda)}
    \right)\qquad\mathrm{where}\qquad \tilde L^{(\mathcal{Q}'_\lambda)}_{\lambda,c}
    =
    \1^{(\mathcal{Q}'_\lambda)}_{\lambda}
    L^{(\mathcal{Q}_\lambda)}_{\lambda,c}.
\end{align}
Equivalently,
\begin{align}
    L^{(\mathcal{K})}_{c}
    =
    \bigoplus_{\lambda:\,|H(\bar\lambda)-S(\rho)|\leq\epsilon_n}
    \left(
    \1^{(\mathcal{F}_\lambda)}
    \otimes
    \1^{(\mathcal{Q}'_\lambda)}
    \right)
    \left(
    L^{(\mathcal{Q}_\lambda)}_{\lambda,c}
    -
    nl_c\,\1^{(\mathcal{Q}_\lambda)}
    \right).
\end{align}
Since the projection is onto the typical charge subspace defined in
Eq.~\eqref{energy_typ_proj}, Eq.~\eqref{Piepsilonn} implies that
\begin{equation}
    \|L_c^{(\mathcal{K})}\|
    \leq
    n\epsilon_n,
    \qquad
    c\in\{1,\ldots,t\},
\end{equation}
which completes the proof.
\end{proof}
\begin{thm}\label{dimEbd}
 For any \(\epsilon_n>0\), the dimension of the Hilbert space $\E$ introduced in Eq.~\eqref{U_lambda_defn_thm9232} satisfies
\begin{equation}
    \frac{1}{\poly(n)}e^{n(S(\rho)-2\epsilon_n)}-1
    \leq
    \dim (\mathcal{E})
    \leq
    e^{nS(\rho)}.
\end{equation}
\end{thm}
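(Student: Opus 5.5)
The plan is to work directly from the definition in Eq.~\eqref{EdimFdim}, namely $\dim(\mathcal E)=\lfloor D_{\min}^{(\epsilon_n)}e^{-n\epsilon_n}\rfloor$. Both sides then reduce to bounds on $D_{\min}^{(\epsilon_n)}=\min_{\lambda:|H(\bar\lambda)-S(\rho)|\le\epsilon_n} m_\lambda$. These come from the Harrow-type bounds on $m_\lambda=\dim(\mathcal P_\lambda)$ already quoted in Eq.~\eqref{ineq_Harrow}, namely $\frac{1}{\poly(n)}e^{nH(\bar\lambda)}\le m_\lambda\le e^{nH(\bar\lambda)}$.

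\emph{Lower bound.} For every $\lambda$ with $|H(\bar\lambda)-S(\rho)|\le\epsilon_n$ we have $H(\bar\lambda)\ge S(\rho)-\epsilon_n$. Hence $m_\lambda\ge\frac{1}{\poly(n)}e^{n(S(\rho)-\epsilon_n)}$, and taking the minimum gives the same bound for $D_{\min}^{(\epsilon_n)}$. Multiplying by $e^{-n\epsilon_n}$ and using $\lfloor x\rfloor\ge x-1$ yields $\dim(\mathcal E)\ge\frac{1}{\poly(n)}e^{n(S(\rho)-2\epsilon_n)}-1$, as claimed. The $\poly(n)$ factor is uniform in $\lambda$, since it depends only on $n$ and $d$; this is what lets the bound survive the minimum.

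\emph{Upper bound.} Since $\lfloor x\rfloor\le x$, we have $\dim(\mathcal E)\le D_{\min}^{(\epsilon_n)}e^{-n\epsilon_n}\le m_{\lambda}e^{-n\epsilon_n}$ for any admissible $\lambda$. Applying $m_\lambda\le e^{nH(\bar\lambda)}\le e^{n(S(\rho)+\epsilon_n)}$ then gives $\dim(\mathcal E)\le e^{nS(\rho)}$, because the factor $e^{-n\epsilon_n}$ exactly cancels the $+\epsilon_n$ slack.

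The main obstacle is the nonemptiness of the admissible set of $\lambda$. The minimum and the choice of a witness $\lambda$ in the upper bound both require at least one Young diagram with $|H(\bar\lambda)-S(\rho)|\le\epsilon_n$. I would first try to extract this from Lemma~\ref{Keyl_Werner}: $\Tr(\tilde\Pi_{\epsilon_n}\rho^{\otimes n})>0$ for $n$ large, so $\tilde\Pi_{\epsilon_n}\neq0$ and the set is nonempty. As a fallback, one can argue directly from the density of normalized Young diagrams in the simplex, whose spacing is $1/n$, together with continuity of $H$.

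One further point needs a check rather than an argument. If $e^{nS(\rho)}$ were replaced by the tighter $e^{nH(\bar\lambda)-n\epsilon_n}$, one might worry about choosing $\lambda$ so that $H(\bar\lambda)\le S(\rho)$. No such choice is needed, because the crude bound $H(\bar\lambda)\le S(\rho)+\epsilon_n$ already suffices. The remaining steps are routine.
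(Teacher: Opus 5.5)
Your proposal is correct and matches the paper's proof. Both sandwich $\dim(\mathcal E)=\lfloor D_{\min}^{(\epsilon_n)}e^{-n\epsilon_n}\rfloor$ using $x-1\le\lfloor x\rfloor\le x$ together with the bounds $\frac{1}{\poly(n)}e^{n(S(\rho)-\epsilon_n)}\le m_\lambda\le e^{n(S(\rho)+\epsilon_n)}$ from Eq.~\eqref{ineq_Harrow}; your added remark that the admissible set of $\lambda$ must be nonempty (so $D_{\min}^{(\epsilon_n)}$ is defined, which in general holds only for $n$ large enough) is a point the paper leaves implicit.
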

\begin{proof}
From Eq.~\eqref{EdimFdim}, we have
\begin{equation}
    \frac{1}{\poly(n)}e^{n(S(\rho)-2\epsilon_n)}-1
    \leq
    D_{\min}^{(\epsilon_n)}e^{-n\epsilon_n}-1
    \overset{(1)}{\leq}
    \dim (\mathcal{E})
    =
    \left\lfloor D_{\min}^{(\epsilon_n)}e^{-n\epsilon_n}\right\rfloor
    \overset{(2)}{\leq}
    D_{\min}^{(\epsilon_n)}e^{-n\epsilon_n}
    \leq
    e^{nS(\rho)},
\end{equation}
where the inequalities \(\overset{(1)}{\leq}\) and \(\overset{(2)}{\leq}\) follow from the floor-function bound
\(x-1\leq \lfloor x\rfloor\leq x\), and the remaining inequalities follow from Eq.~\eqref{ineq_Harrow}.
\end{proof}
\subsection{Proof of Theorem \ref{Main_Central_Thm}}\label{cc1_calc}

    Consider any $\rho\in\mathcal{C}(s,l_1,\ldots,l_t)$. Using the projector $P_{\epsilon_n}$ introduced in Eq.~\eqref{Pen} together with the Hilbert spaces $\mathcal{E}$ and $\mathcal{K}$ defined in Theorem~\ref{U_lambda_defn_thm9232} [see Eqs.~\eqref{U_lambda_defn_thm}--\eqref{EdimFdim}], we obtain, for any unitary $V$,
    \begin{align}
    \left\|
    V\rho^{\otimes n}V^\dagger
    -
    \frac{\1^{(\mathcal{E})}}{\dim (\mathcal{E})}
    \otimes
    \eta_{\rho}^{(\mathcal{K})}
    \right\|_1&=\left\|
    V\rho^{\otimes n}V^\dagger
    -V\left(
    \frac{P_{\epsilon_n}\rho^{\otimes n}P_{\epsilon_n}}
    {\Tr(P_{\epsilon_n}\rho^{\otimes n}P_{\epsilon_n})}
    \right)V^\dagger+V\left(
    \frac{P_{\epsilon_n}\rho^{\otimes n}P_{\epsilon_n}}
    {\Tr(P_{\epsilon_n}\rho^{\otimes n}P_{\epsilon_n})}
    \right)V^\dagger-
    \frac{\1^{(\mathcal{E})}}{\dim (\mathcal{E})}
    \otimes
    \eta_{\rho}^{(\mathcal{K})}
    \right\|_1\nonumber\\&\overset{(1)}{\leq} \left\|
    V\rho^{\otimes n}V^\dagger
    -V\left(
    \frac{P_{\epsilon_n}\rho^{\otimes n}P_{\epsilon_n}}
    {\Tr(P_{\epsilon_n}\rho^{\otimes n}P_{\epsilon_n})}
    \right)V^\dagger\right\|_{1}+\left\|V\left(
    \frac{P_{\epsilon_n}\rho^{\otimes n}P_{\epsilon_n}}
    {\Tr(P_{\epsilon_n}\rho^{\otimes n}P_{\epsilon_n})}
    \right)V^\dagger-
    \frac{\1^{(\mathcal{E})}}{\dim (\mathcal{E})}
    \otimes
    \eta_{\rho}^{(\mathcal{K})}
    \right\|_1\nonumber\\ &\overset{(2)}{=}\left\|
    \rho^{\otimes n}
    -\left(
    \frac{P_{\epsilon_n}\rho^{\otimes n}P_{\epsilon_n}}
    {\Tr(P_{\epsilon_n}\rho^{\otimes n}P_{\epsilon_n})}
    \right)\right\|_{1}+\left\|V\left(
    \frac{P_{\epsilon_n}\rho^{\otimes n}P_{\epsilon_n}}
    {\Tr(P_{\epsilon_n}\rho^{\otimes n}P_{\epsilon_n})}
    \right)V^\dagger-
    \frac{\1^{(\mathcal{E})}}{\dim (\mathcal{E})}
    \otimes
    \eta_{\rho}^{(\mathcal{K})}
    \right\|_1\\
    &\overset{(3)}{\leq} \poly(n)\exp\!\left(
-n\kappa_{\epsilon_n}
\right)+ \left\|V\left(
    \frac{P_{\epsilon_n}\rho^{\otimes n}P_{\epsilon_n}}
    {\Tr(P_{\epsilon_n}\rho^{\otimes n}P_{\epsilon_n})}
    \right)V^\dagger-
    \frac{\1^{(\mathcal{E})}}{\dim (\mathcal{E})}
    \otimes
    \eta_{\rho}^{(\mathcal{K})}
    \right\|_1\nonumber\\
    &\overset{(4)}{\lesssim} \poly(n)\exp\!\left(
-\frac{n}{4}
\left(
\frac{\epsilon_n}{\log(1/\epsilon_n)}
\right)^2
\right)+\left\|V\left(
    \frac{P_{\epsilon_n}\rho^{\otimes n}P_{\epsilon_n}}
    {\Tr(P_{\epsilon_n}\rho^{\otimes n}P_{\epsilon_n})}
    \right)V^\dagger-
    \frac{\1^{(\mathcal{E})}}{\dim (\mathcal{E})}
    \otimes
    \eta_{\rho}^{(\mathcal{K})}
    \right\|_1\label{sec_term},
    \end{align}
where the inequality $(\overset{(1)}{\leq})$ follows from the triangle inequality, the equality $(\overset{(2)}{=})$ follows from the unitary invariance of the trace norm, the inequality $(\overset{(3)}{\leq})$ follows from Theorem \ref{Deutsche_bahn} and the inequality $(\overset{(4)}{\lesssim})$ follows from Proposition \ref{order_of_error_transformation}. Since from Theorem \ref{thm_8_important} we have
\begin{equation}
    \operatorname{supp}(P_{\epsilon_n})
\subseteq
\operatorname{supp}\left(\Pi_{\epsilon_n}\tilde{\Pi}_{\epsilon_n}\right)\subseteq \mathcal{H}_d^{\otimes n}\qquad\text{which implies}\qquad \mathcal{H}_d^{\otimes n} = \operatorname{supp}(P_{\epsilon_n})+\left(\operatorname{supp}(P_{\epsilon_n})\right)^{\perp}.
\end{equation}
Choose
\begin{equation}\label{Vdef}
    V:\mathcal{H}_d^{\otimes n}
    =
    \operatorname{supp}(P_{\epsilon_n})
    \oplus
    \left(\operatorname{supp}(P_{\epsilon_n})\right)^{\perp}
    \longrightarrow
    (\mathcal{E}\otimes\mathcal{K})
    \oplus
    \left(\operatorname{supp}(P_{\epsilon_n})\right)^{\perp},
    \qquad
    V:=U\oplus\1^{(\operatorname{supp}(P_{\epsilon_n})^\perp)},
\end{equation}
where $U$ is the unitary of the form given in Eq.~\eqref{seq_eqn} of Theorem~\ref{U_lambda_defn_thm9232}, acting on $\operatorname{supp}(P_{\epsilon_n})$, while $\1^{(\operatorname{supp}(P_{\epsilon_n})^\perp)}$ denotes the identity on the orthogonal complement $\operatorname{supp}(P_{\epsilon_n})^{\perp}$. Thus, \(V\) extends \(U\) to the entire space by acting as the identity on \(\left(\operatorname{supp}(P_{\epsilon_n})\right)^{\perp}\). This implies the second term in Eq. \eqref{sec_term} upon substituting $V$ from Eq. \eqref{Vdef} simplifies to
\begin{align}\label{mix}
    \left\|V\left(
    \frac{P_{\epsilon_n}\rho^{\otimes n}P_{\epsilon_n}}
    {\Tr(P_{\epsilon_n}\rho^{\otimes n}P_{\epsilon_n})}
    \right)V^\dagger-
    \frac{\1^{(\mathcal{E})}}{\dim (\mathcal{E})}
    \otimes
    \eta_{\rho}^{(\mathcal{K})}
    \right\|_1(\overset{(1)}{=}) \left\|U\left(
    \frac{P_{\epsilon_n}\rho^{\otimes n}P_{\epsilon_n}}
    {\Tr(P_{\epsilon_n}\rho^{\otimes n}P_{\epsilon_n})}
    \right)U^\dagger-
    \frac{\1^{(\mathcal{E})}}{\dim (\mathcal{E})}
    \otimes
    \eta_{\rho}^{(\mathcal{K})}
    \right\|_1\overset{(2)}{=}\;0
    \end{align}
   where $(\overset{(1)}{=})$ follows from the fact that 
$P_{\epsilon_n}\rho^{\otimes n}P_{\epsilon_n}$ is supported on 
$\operatorname{supp}(P_{\epsilon_n})$, together with the block-diagonal structure of 
$V$ given in Eq.~\eqref{Vdef}. The equality $(\overset{(2)}{=})$ of Eq. \eqref{mix} follows from 
Eq.~\eqref{eqn_projected_state} of Theorem~\ref{U_lambda_defn_thm9232}. 
Next, we choose the charges on $\mathcal{E}$ and $\mathcal{K}$ to be 
$L_c^{(\mathcal{E})}$ and $L_c^{(\mathcal{K})}$, respectively, as defined in 
Eq.~\eqref{hypothetical_charge} (see Theorem~\ref{U_lambda_defn_thm9232}). 
From Theorems~\ref{U_lambda_defn_thm9232}, \ref{charge_on_anc} and \ref{dimEbd}, we have already established that
\begin{align}
    \frac{1}{\poly(n)}e^{n(s-2\epsilon_n)}-1
    &\leq
    \dim (\mathcal{E})
    \leq e^{ns},
    &
    L_c^{(\mathcal{E})}
    &=nl_c\,\1^{(\mathcal{E})},
    \\
    \dim (\mathcal{K})
    &\leq \poly(n)e^{3n\epsilon_n},
    &
    \|L_c^{(\mathcal{K})}\|
    &\leq n\epsilon_n ,
\end{align}
Finally, to check charge conservation, we aim to show
\begin{equation}\label{B114}
    V^{\dagger}
\left[
\left(
L_c^{(\mathcal{E})}\otimes\1^{(\mathcal{K})}
+
\1^{(\mathcal{E})}\otimes L_c^{(\mathcal{K})}
\right)
\oplus
P_{\epsilon_n}^{\perp}L_c^{\times n}
\right]
V
=
(U^{\dagger}\oplus\1)
\left[
\left(
L_c^{(\mathcal{E})}\otimes\1^{(\mathcal{K})}
+
\1^{(\mathcal{E})}\otimes L_c^{(\mathcal{K})}
\right)
\oplus
P_{\epsilon_n}^{\perp}L_c^{\times n}
\right]
(U\oplus\1)
=
L_c^{\times n},
\end{equation}
where first equality follows from the choice of $V$ as taken in Eq. \eqref{Vdef}. As $V$ acts trivially on $\left(\operatorname{supp}(P_{\epsilon_n})\right)^{\perp}$ [see Eq.~\eqref{Vdef}], to check charge conservation, it is sufficient to verify
\begin{equation}
\label{verify1}
    U^{\dagger}
\left(
L_c^{(\mathcal{E})}\otimes\1^{(\mathcal{K})}
+
\1^{(\mathcal{E})}\otimes L_c^{(\mathcal{K})}
\right)
U
=
P_{\epsilon_n}L_c^{\times n}.
\end{equation}
This relation has in fact already been established in Eq.~\eqref{Energy_conservation_semi_app} in the proof of Theorem~\ref{U_lambda_defn_thm9232}.

\section{State transformation from an unknown initial state}
Recall from the main text that the asymptotic transformation between known initial states in a $d$-dimensional system, prepared independently and identically, is completely characterized by the entropy and the average conserved charges of the states. Specifically, a transformation from $\rho$ to $\sigma$ is achievable asymptotically if and only if
\begin{equation}
    S(\rho)=S(\sigma)=s
    \quad \text{and} \quad
    \Tr(L_c\rho)=\Tr(L_c\sigma) = l_c
    \qquad \forall\, c\in\{1,\ldots,t\}.
\end{equation}
Thus, to achieve universal state transformation from an unknown initial state $\rho$ to a known target state $\sigma$, our objective is to efficiently estimate the von Neumann entropy $S(\rho)$ and the average charges $\Tr(L_c\rho)$ for all $c\in\{1,\ldots,t\}$ using $n$ independently and identically prepared copies of $\rho$, while ensuring that the disturbance caused to the state vanishes asymptotically. We emphasize that the only promise about the unknown state is that it is supported on a $d$-dimensional Hilbert space, i.e., $\rho \in \mathcal{B}(\mathcal{H}_d)$, and that we are given $n$ independently and identically prepared copies of $\rho$.

\subsection{Estimating the entropy of an unknown state}
We begin by estimating the entropy of the unknown state $\rho \in \mathcal{B}(\mathcal{H}_d)$ given $n$ identical and independently prepared copies of it. Recall from the Schur--Weyl decomposition in Eq.~\eqref{Schur_Weyl_Hilbert_space} that the $n$-copy Hilbert space $\mathcal{H}_d^{\otimes n}$ admits the decomposition
\begin{equation}
    \mathcal{H}_d^{\otimes n}
    =
    \bigoplus_{\lambda \in Y_d^n}
    \mathcal{P}_{\lambda} \otimes \mathcal{Q}_{\lambda},
\end{equation}
where the direct sum runs over all Young diagrams $\lambda \in Y_d^n$, which correspond to ordered partitions of the integer $n$ into at most $d$ non-increasing parts. Explicitly, each partition is of the form $\lambda = (\lambda_1,\ldots,\lambda_d) \;\text{with}\; \lambda_1 \ge \lambda_2 \ge \cdots \ge \lambda_d,\;\text{and}\;
    \sum_{i=1}^d \lambda_i = n.$ Here, $\mathcal{P}_{\lambda}$ and $\mathcal{Q}_{\lambda}$ denote the irreducible representation spaces associated with the symmetric group and the unitary group, respectively. Now, for each $\lambda \in Y^n_d$, we consider the projector onto the subspace 
$\mathcal{P}_{\lambda} \otimes \mathcal{Q}_{\lambda}$ given by
\begin{equation}
    \1^{(\mathcal{P}_{\lambda})}
    \otimes
    \1^{(\mathcal{Q}_{\lambda})},
\end{equation}
which makes $\{\1^{(\mathcal{P}_{\lambda})}
    \otimes
    \1^{(\mathcal{Q}_{\lambda})}\}_{\lambda \in Y^n_d}$
forms a valid projective measurement, since
\begin{equation} \forall \lambda \in Y^n_d \quad \1^{(\mathcal{P}_{\lambda})} \otimes \1^{(\mathcal{Q}_{\lambda})}\geq 0\quad\text{and}\quad\sum_{\lambda \in Y^n_d}\1^{(\mathcal{P}_{\lambda})} \otimes \1^{(\mathcal{Q}_{\lambda})} = \1. \end{equation}

Note that for each normalized partition $\bar{\lambda}=\frac{1}{n}(\lambda_1,\ldots,\lambda_d)$ arising from $\lambda$, the entropy of $\bar{\lambda}$ can be bounded as follows:
\begin{equation}
    0 \leq H(\bar{\lambda}) \leq \log d.
\end{equation}
We now divide the interval $[0,\log d]$ into $m$ bins with boundaries
\begin{equation}
    0 = h_0 \leq h_1 \leq \cdots \leq h_{m-1} \leq h_{m}=\log d,
\end{equation}
where the points $\{h_i\}_{i=1}^{m-1}$ are chosen independently according to a uniform distribution over the interval $[0, \log d]$.
Using such bins we introduce the following coarse-grained measurement %
\begin{equation}\label{cgm}
    M_j 
    :=
    \sum_{h_{j-1} \leq H(\bar\lambda) \leq h_j}
    \1^{(\mathcal{P}_{\lambda})}
    \otimes
    \1^{(\mathcal{Q}_{\lambda})} \quad\text{for}\quad j\in\{1,\ldots,m\}.
\end{equation}
Here, the measurement outcome directly determines the corresponding bin index. Then the entropy estimation protocol goes as follows:
\begin{enumerate}
    \item Measure $\rho^{\otimes n}$ using the measurement $\{M_j\}_{j=1}^m$.
    \item Upon obtaining an outcome $k$ (corresponding to the $k^{\text{th}}$ bin), we estimate the entropy of $\rho$ by choosing any value in the interval $[h_{k-1}, h_k]$.
\end{enumerate}
The above protocol enables estimation of the entropy of the unknown state $\rho$, while inducing only an asymptotically vanishing disturbance to the state. We rigorously establish this claim in the following theorem.

\begin{thm}[Error-disturbance trade-off for entropy estimation]
For $m = n^z$ with $0 < z < \tfrac{1}{2}$, the entropy estimation protocol fails with probability at most $\mathcal{O}\!\left(n^{\,z - \frac{1}{2}} (\log n)^{3/2}\right)$. Upon successful execution, the estimated entropy deviates from the true entropy $S(\rho)$ by at most $\mathcal{O}\!\left(n^{-z} \log n\right)$, while the disturbance induced on the state is at most $\mathcal{O}\!\left(n^{-p}\right)$.
\end{thm}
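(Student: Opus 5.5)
The plan follows the gentle-tomography strategy of \cite{BHL2006_PRA}. First I show that the outcome distribution of the Young-diagram measurement concentrates $H(\bar\lambda)$ in a window of width roughly $n^{-1/2}$ around $S(\rho)$. Then I use the random placement of the bin boundaries to argue that, with high probability, this window lies inside a single bin. When it does, the coarse-grained measurement $\{M_j\}$ is nearly deterministic, and the Gentle Operator Lemma converts this into a small disturbance. Charge preservation is automatic: each $M_j$ is a sum of Schur--Weyl isotypic projectors $\1^{(\mathcal P_\lambda)}\otimes\1^{(\mathcal Q_\lambda)}$. These commute with every permutation-invariant operator, and in particular with every $L_c^{\times n}$ by Lemma~\ref{SWDuality}.

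\textbf{Step 1 (concentration).} Fix a window half-width $\theta_n:=C\sqrt{\log n/n}\,\log n$. I invoke Lemma~\ref{Keyl_Werner}, i.e.\ the Keyl--Werner bound $\Pr[\|\bar\lambda-\mathrm{spec}(\rho)\|_1>x]\le\poly(n)e^{-nx^2/2}$, and combine it with the Fannes--Audenaert continuity $|H(\bar\lambda)-S(\rho)|\le\xi(x)$. Choosing $x=\sqrt{c\log n/n}$ with $c$ large gives $\xi(x)=O(x\log(1/x))\le\theta_n$. It also gives
\[
\Tr\bigl(\tilde\Pi_{\theta_n}\rho^{\otimes n}\bigr)\ge 1-n^{-r}
\]
for any prescribed $r>0$, where $\tilde\Pi_{\theta_n}$ is the projector of Eq.~\eqref{entropy_proj} with $\epsilon_n$ replaced by $\theta_n$.

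\textbf{Step 2 (random bins).} The $m-1=n^z-1$ boundaries are i.i.d.\ uniform on $[0,\log d]$. The probability that some boundary falls inside $[S(\rho)-\theta_n,S(\rho)+\theta_n]$ is at most $(m-1)\,2\theta_n/\log d=O\bigl(n^{z-1/2}(\log n)^{3/2}\bigr)$; this is the failure event. On its complement, the whole typical window sits inside one bin $[h_{k^*-1},h_{k^*}]$. Then $M_{k^*}\ge\tilde\Pi_{\theta_n}$, so $\Tr(M_{k^*}\rho^{\otimes n})\ge1-n^{-r}$.

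\textbf{Step 3 (accuracy).} With probability at least $1-n^{-r}$ the observed outcome is $k^*$. Any point of that bin is then within $h_{k^*}-h_{k^*-1}$ of $S(\rho)$. The claimed deviation $O(n^{-z}\log n)$ requires a bound on the maximal spacing of $m$ uniform points. Standard order-statistics estimates give $\max_j(h_j-h_{j-1})\le C'\log m/m=O(n^{-z}\log n)$ except with probability $O(m^{-1})$. I absorb this exception into the failure probability; it is dominated for $z$ large, and one can alternatively add it separately. If a cleaner statement is wanted, I could instead use equispaced bins randomly shifted by a uniform offset. That variant gives width $\log d/m$ exactly and the same failure bound.

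\textbf{Step 4 (disturbance).} The post-measurement state is $\sum_j M_j\rho^{\otimes n}M_j$. By the Gentle Operator Lemma~\ref{Gentle_op_lem}, $\|\rho^{\otimes n}-M_{k^*}\rho^{\otimes n}M_{k^*}/\Tr(\cdot)\|_1\le2\sqrt{n^{-r}}$. A triangle inequality over the remaining outcomes, whose total weight is $\le n^{-r}$, yields a total disturbance $O(n^{-r/2})$, i.e.\ $O(n^{-p})$ with $p=r/2>0$.

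The main obstacle is Step 1. There I must carefully trade the $\poly(n)$ prefactor in the Keyl--Werner tail against $e^{-nx^2/2}$ while keeping the entropy window as small as $\tilde O(n^{-1/2})$. The logarithmic loss from $\xi^{-1}$ (cf.\ Proposition~\ref{order_of_error_transformation}) is exactly what produces the extra $(\log n)^{3/2}$ factor in the failure probability, so the constants must be tracked to confirm that exponent.
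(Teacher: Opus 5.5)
Your argument follows the paper's proof step for step. Your Step~2 is its Case~1: a random boundary lands in the $\mathcal O(n^{-1/2}(\log n)^{3/2})$ window around $S(\rho)$. The event that the outcome differs from $k^*$ is its Case~3, your spacing bound plays the role of its Case~2, and in both the disturbance comes from the Gentle Operator Lemma. Using the Keyl--Werner tail in place of the Wright--O'Donnell tail makes no difference.

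The one step that does not deliver the theorem as stated is Step~3. An exception probability $\mathcal O(m^{-1})=\mathcal O(n^{-z})$ is \emph{not} $\mathcal O(n^{z-1/2}(\log n)^{3/2})$ when $z\le\frac14$. On that range neither absorbing it nor adding it separately gives the claimed failure probability. Your randomly shifted equispaced bins avoid the problem, but they change the protocol the theorem is about.

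The fix is to not freeze the constant. Each of the $m$ spacings of the $m-1$ i.i.d.\ uniform boundaries exceeds a fraction $t$ of $[0,\log d]$ with probability $(1-t)^{m-1}$. With $t=C'\log m/m$, a union bound gives an exception probability $\lesssim m^{1-C'}=n^{z(1-C')}$, which is $\mathcal O(n^{z-1/2})$ once $C'\ge 1/(2z)$. The bin width stays $C'\log d\,\log m/m=\mathcal O(n^{-z}\log n)$.

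Simpler still, you only need the bin containing $S(\rho)$. It extends more than $\delta$ to the left or to the right of $S(\rho)$ with probability at most $2(1-\delta/\log d)^{m-1}$. For $\delta=\kappa\log d\,n^{-z}\log n$ this is $\mathcal O(n^{-\kappa})$, with $\kappa$ at your disposal.

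The paper's own Case~2 has the same defect in a worse form. Its window of order $n^{-z}\sqrt{\log n}$ yields only the bound $2e^{-\sqrt{C\log n}}$. That decays more slowly than every inverse power of $n$, so it is not dominated by Case~1 as the paper claims. Widening the window to order $n^{-z}\log n$, as above, repairs both arguments.
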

\begin{proof}
We proceed by identifying three possible cases that casues failure for the entropy estimation
\begin{enumerate}
\item \emph{Case 1: }Some bin boundary $h_k$ may lie close to the true entropy satisfying
\begin{equation}\label{ineq_entropy}
         |h_k-S(\rho)|<\frac{\sqrt{\log \left(n^{2p}\left(n+1\right)^{d(d+1)/2}\right)}}{\sqrt{n}}\log d+H_2\left(         \frac{\sqrt{\log \left(n^{2p}\left(n+1\right)^{d(d+1)/2}\right)}}{\sqrt{n}}\right):=a_n, 
\end{equation}
where $p$ is positive constant and $H_2(q)$ denotes the binary entropy defined as $H_2(q)=-\Big[q\log q+(1-q)\log(1-q)\Big]$. The occurrence of such a bin as the outcome of the measurement in Eq.~\eqref{cgm} may significantly disturb the state, since most of the weight of the state $\rho^{\otimes n}$ is concentrated in the support of the projector
\begin{equation}\label{interval1}
    \Pi_{a_n}:=\sum_{\lambda: |\bar\lambda-S(\rho)|\le a_n} \1^{(\mathcal{P}_{\lambda})}
    \otimes
    \1^{(\mathcal{Q}_{\lambda})},
\end{equation}
where $a_n$ given in Eq. \eqref{ineq_entropy}. This can be seen as follows:
\begin{align}
   \sum_{\lambda:|\bar\lambda-S(\rho)| \le a_n} &\!\!\!\Tr(\1^{(\mathcal{P}_{\lambda})}
    \otimes
    \1^{(\mathcal{Q}_{\lambda})}\rho^{\otimes n}) = \Pr\left\{\bar\lambda: |\bar\lambda-S(\rho)|\le a_n\right\} \nonumber\\
    &= \Pr\left\{\bar\lambda: |\bar\lambda-S(\rho)|\le \frac{\sqrt{\log \left(n^{2p}\left(n+1\right)^{d(d+1)/2}\right)}}{\sqrt{n}}\log d+H_2\left(         \frac{\sqrt{\log \left(n^{2p}\left(n+1\right)^{d(d+1)/2}\right)}}{\sqrt{n}}\right)\right\}\nonumber\\
    &\overset{(1)}{\geq} \Pr\left\{d_{\mathrm{TV}}(\bar\lambda,r)\le\frac{\sqrt{\log \left(n^{2p}\left(n+1\right)^{d(d+1)/2}\right)}}{\sqrt{n}}\right\}
    \overset{(2)}{\geq} 1-\left(n+1\right)^{d(d+1)/2}\exp\left(-n\left(\frac{\sqrt{\log \left(n^{2p}\left(n+1\right)^{d(d+1)/2}\right)}}{\sqrt{n}}\right)^2\right)\label{Fannes}\\=&1-n^{-2p},\label{Fannes2}        
\end{align}
where \(r\) denotes the decreasingly ordered spectrum of the state \(\rho\) which means $S(\rho)=H(r)$, and $d_{\text{TV}}(x,y)=\frac{1}{2}\sum_{i=1}^{d}|x_i-y_i|$ denotes the total variation distance. Note that any \(\bar{\lambda}\) satisfies
\begin{equation}
    d_{\mathrm{TV}}(\bar\lambda,r)
    \leq
    \frac{
        \sqrt{
            \log\!\left(
                n^{2p}(n+1)^{d(d+1)/2}
            \right)
        }
    }{\sqrt{n}}\quad\text{also satisfy}\quad|\bar\lambda-S(\rho)|\leq a_n,
\end{equation}
via Fannes-Audenaert inequality given in Proposition \ref{Fannes_Audenart_ineq_strict} which allows us to write first inequality in Eq.~\eqref{Fannes}. The second inequality in Eq.~\eqref{Fannes} follows from \cite{WrightOdonnel} that says:
For a mixed state $\rho$ with ordered spectrum $r = (r_1,\ldots,r_d)$, we have:
\begin{equation}
\forall\epsilon > 0,\quad \Pr\left\{\bar\lambda: d_{\mathrm{TV}}(\bar\lambda, r) > \epsilon\right\}
\le (n+1)^{\frac{d(d+1)}{2}} \exp(-2n\epsilon^2).
\end{equation}
The bin boundaries $\{h_i\}_{i=1}^{m-1}$ are chosen according to a uniform distribution over the interval $[0,\log d]$, therefore the probability of having a bin boundary in the interval given in Eq.~\eqref{interval1} is given as 
\begin{equation}
    \frac{1}{\log d
    } \left(2a_n\right)=2\left(\frac{\sqrt{\log \left(n^{2p}\left(n+1\right)^{d(d+1)/2}\right)}}{\sqrt{n}}+\frac{1}{\log d}\;H_2\left(         \frac{\sqrt{\log \left(n^{2p}\left(n+1\right)^{d(d+1)/2}\right)}}{\sqrt{n}}\right)\right).
\end{equation}
Therefore, the probability that one or more bin boundaries lie inside the interval $[S(\rho)-a_n, S(\rho)+a_n]$ used in Eq.~\eqref{interval1} can be upper bounded using the union bound:
\begin{align}
    &2m\left(
    \frac{\sqrt{\log \left(n^{2p}\left(n+1\right)^{d(d+1)/2}\right)}}{\sqrt{n}}
    +\frac{1}{\log d}\,
    H_2\!\left(
    \frac{\sqrt{\log \left(n^{2p}\left(n+1\right)^{d(d+1)/2}\right)}}{\sqrt{n}}
    \right)
    \right)\nonumber\\
    =&\;
    2n^{z-\frac{1}{2}}
    \left(
    \sqrt{\log \left(n^{2p}\left(n+1\right)^{d(d+1)/2}\right)}
    +\frac{\sqrt{n}}{\log d}
    H_2\!\left(
    \frac{\sqrt{\log \left(n^{2p}\left(n+1\right)^{d(d+1)/2}\right)}}{\sqrt{n}}
    \right)
    \right)=
    \mathcal{O}\!\left(
    n^{z-\frac{1}{2}}(\log n)^{3/2}
    \right),
\end{align}
where to write the first equality we use $m=n^z$. Therefore, the probability that at least one bin boundary lies within the interval in Eq.~\eqref{interval1} vanishes asymptotically whenever \(z<\tfrac12\).
    \item \emph{Case 2: } It may also happen that there exists no bin boundary \(h_i\) sufficiently close to \(S(\rho)\), i.e.,
\begin{equation}\label{entropy_ineq_case_2}
   \forall i, \quad |S(\rho)-h_i|
    >
    n^{-z}\log d
    \sqrt{
    \log \left(
    n^{2p}\left(n+1\right)^{d(d+1)/2}
    \right)
    }
    =
    \mathcal{O}\!\left(
    n^{-z}\log n
    \right).
\end{equation}
In such a situation, the estimated value of entropy may lie outside the allowed error tolerance. Since the bin boundaries $\{h_i\}_{i=1}^{m-1}$ are chosen independently and uniformly from $[0,\log d]$, the probability that all $h_i$ satisfies Eq.~\eqref{entropy_ineq_case_2} can be bounded from above as
\begin{align}
    &2\left(1-\frac{n^{-z}\log d\sqrt{\log \left(n^{2p}(n+1)^{d(d+1)/2}\right)}}{\log d}\right)^m = 2\left(1-n^{-z}\sqrt{\log \left(n^{2p}(n+1)^{d(d+1)/2}\right)}\right)^m \nonumber\\
    &\le 2\left(1-n^{-z}\sqrt{\left(2p+\frac{d(d+1)}{2}\right)\log n}\right)^m \le 2\exp\!\left(-m n^{-z}\sqrt{C\log n}\right)
    = \mathcal{O}\!\left(\exp\!\left(-\sqrt{\log n}\right)\right),
\end{align}
where we define $C := 2p + \frac{d(d+1)}{2}$. Here, the first inequality by using $n+1 \ge n$, the second from the bound $(1-x)^m \le e^{-mx}$, and the final step uses $m = n^z$.
    \item \emph{Case 3: } The coarse-grained measurement $\{M_j\}_{j=1}^m$ may yield an outcome corresponding to a bin that does not contain $S(\rho)$, resulting in an incorrect estimate of the entropy. Suppose that no bin lies within a distance $a_n$ of $S(\rho)$, where $a_n$ is defined in Eq.~\eqref{ineq_entropy} (i.e., case 1 did not happen). Then, by Eq.~\eqref{Fannes2}, the probability of such an event can be upper bounded as
\begin{equation}
    \Pr\left\{|\bar\lambda - S(\rho)| \ge a_n\right\}
    \le n^{-2p}.
\end{equation}
Therefore, the total failure probability is dominated by Case 1, and hence the overall failure probability scales as 
\begin{equation}\label{entropy_estimation_avg_bd}
    \mathcal{O}\!\left(n^{\,z-\frac{1}{2}}(\log n)^{3/2}\right).
\end{equation}

\end{enumerate}
We say that the entropy estimation is successful whenever none of Cases~1, 2, or~3 occurs. In this situation there exist an index $i$ such that
\begin{equation}\label{above_qen}
|S(\rho)-h_i|
\leq n^{-z}\log d
\sqrt{
\log \left(
n^{2p}(n+1)^{d(d+1)/2}
\right)
}
= \mathcal{O}\left(n^{-z}\log n\right),
\end{equation}
 follows from the fact that Case 2 does not occur hence we obtain Eq. \eqref{above_qen} by taking negation of the statement given in Eq.~\eqref{entropy_ineq_case_2}.
 Suppose we perform the measurement described in Eq.~\eqref{cgm} and that the entropy estimation is successful. This means that the bin containing $S(\rho)$ occurs (i.e., Case~3 does not occur) and, moreover, that there is no bin boundary within the interval specified in Eq.~\eqref{interval1} (i.e., Case~1 does not occur). From Eq.~\eqref{Fannes2}, it follows that the bin containing $S(\rho)$ occurs with probability at least $1-n^{-2p}$. Therefore, by the Gentle Operator Lemma, the disturbance caused by the measurement is small, and the post-measurement state $\tilde{\rho}$ satisfies
\begin{equation}
    \|\rho^{\otimes n}-\tilde{\rho}\|_{1} \leq \sqrt{2}\,n^{-p}\qquad\text{where}\qquad \tilde{\rho}=\frac{\Pi_{a_n}\rho^{\otimes n}\Pi_{a_n}}{\Tr\left(\Pi_{a_n}\rho^{\otimes n}\Pi_{a_n}\right)}.
\end{equation}
\end{proof}

\subsection{Estimation of the average charges of an unknown state}
In this section, we aim to efficiently estimate the expectation values of the charges $\Tr(L_c\rho)$, for $c \in \{1,\ldots,t\}$ while causing an asymptotically vanishing disturbance to the state. Since all the charges commute, there exists a common eigenbasis $\{ \ket{x} \}_{x=1}^d$ such that
\begin{equation}
    \forall c \in \{1,\ldots,t\}, \qquad
    L_c = \sum_{x=1}^{d} L_c(x)\, \ket{x}\!\bra{x}.
\end{equation}
Thus, for any $c\in\{1,\ldots,t\}$ we have $ \Tr(L_c\rho)=\sum_{x=1}^d L_c(x)\langle x|\rho|x\rangle$. Therefore, to estimate $\Tr(L_c\rho)$, it suffices to estimate the diagonal matrix elements of $\rho$ in the common eigenbasis $\{\ket{x}\}_{x=1}^d$, that is, the quantities
\begin{equation}
    \forall x\in \{1,\ldots,d\}, \qquad \langle x|\rho|x\rangle:=\alpha_x \quad\text{which allow us to obtain}\quad \Tr(L_c\rho)=\sum_{x=1}^d L_c(x)\langle x|\rho|x\rangle = \sum_{x=1}^d L_c(x)\alpha_x.
\end{equation}
We proceed analogously to the previous case by constructing a coarse-grained measurement from the projective measurement that would estimate $\alpha_x$ exactly, but at the expense of causing significant disturbance to the state. We employ the coarse-grained version of the measurement, which avoids significant disturbance to the state $\rho^{\otimes n}$ while introducing only an estimation error that vanishes asymptotically.

If one is unconcerned with disturbing the state, then estimating $\langle x|\rho|x\rangle$ for a fixed basis vector $|x\rangle$ is straightforward: one may simply perform the measurement $\{\ketbra{x},\1-\ketbra{x}\}$ independently on each copy of $\rho$. Then the number of occurrences of the outcome $\ketbra{x}$ is binomially distributed with mean $n\alpha_x$ and variance $n\alpha_x(1-\alpha_x)$. Consequently, $\alpha_x$ can be estimated with accuracy $\mathcal{O}(n^{-1/2})$. However, such a measurement can severely disturb the state. For example, when applied to the state $\frac{1}{\sqrt{2}}\left(\ket{x}+\ket{x^\perp}\right)$, the measurement drastically alters the coherence between the two components. 

Rather than measuring each copy individually, the same procedure can equivalently be expressed as a collective measurement acting jointly on all $n$ copies. The corresponding measurement operators are
\begin{equation}\label{energy_mmt_fine_grained}
    N_k=\sum_{y\in\{0,1\}^n,\ |y|=k}\ \bigotimes_{j=1}^n \left(y_j\ketbra{x}+(1-y_j)\left(\1-\ketbra{x}\right)\right),
\end{equation}
where $k\in\{0,\dots,n\}$ and $|y|$ denotes the Hamming weight of the bit string $y$, i.e., the number of $1$'s in $y$. We emphasize that the measurement described in Eq.~\eqref{energy_mmt_fine_grained} can significantly disturb the state, and to avoid that, we will introduce the coarse-grained measurement as earlier. 

We now partition the interval $[0,n]$ into $m$ bins by selecting the bin boundaries $\{b_i\}_{i=1}^{m-1}$ independently according to the uniform distribution on $[0,n]$. The coarse-grained measurement $\{N_j\}_{j=1}^m$ is given as follows:
\begin{equation}
    N'_j = \sum_{b_{j-1}\leq k<b_j} N_k, \quad\text{where}\quad j\in\{1,\ldots, m\}.
\end{equation}
Like previous scenario of entropy estimation, the measurement outcome directly determines the corresponding bin index. Then the  estimation protocol for $\alpha_x$ goes as follows:
\begin{enumerate}
    \item Measure $\rho^{\otimes n}$ using the measurement $\{N'_j\}_{j=1}^m$.
    \item Upon obtaining an outcome $k$ (corresponding to the $k^{\text{th}}$ bin), we estimate $\alpha_x$ by choosing any value in the interval $[b_{k-1}, b_k]$.
\end{enumerate}

Now, we will directly use a result from \cite{BHL2006_PRA} to ensure successful estimation of the quantity $\alpha_x$ while causing asymptotically vanishing disturbance to the state. 

\begin{thm}[{Error disturbance trade-off for average charge estimation \cite[Prop.~1]{BHL2006_PRA}}]
   For \(m=n^z\) with \(0<z<\frac12\), the protocol for estimating \(\alpha_x\) has failure probability at most $\mathcal{O}\!\left(n^{\,z-\frac12}\log n\right).$ Upon successful execution, the estimated value of $\alpha_x$ deviates from true value of $\alpha_x$ by at most $\mathcal{O}\!\left(n^{-z} \log n\right)$ while the disturbance induced on the state is upper-bounded by $\mathcal{O}\!\left(n^{-p}\right)$ where $p$ is a positive constant.
\end{thm}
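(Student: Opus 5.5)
The plan is to mirror the three-case analysis used for the entropy-estimation theorem, with the Schur--Weyl measurement replaced by the binomial measurement $\{N_k\}$ of Eq.~\eqref{energy_mmt_fine_grained}. Here bin boundaries are drawn uniformly on $[0,n]$. Before the cases, I would record one structural fact. Each $N_k$ is a sum of product projectors built from $\ketbra{x}$ and $\1-\ketbra{x}$, both diagonal in the common eigenbasis. Hence each $N'_j$ commutes with every $L_c^{\times n}$, so the measurement is charge preserving in the sense required by the theorem.

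The concentration input comes first. Under $\rho^{\otimes n}$, the outcome $k$ of $\{N_k\}$ is distributed as $\mathrm{Bin}(n,\alpha_x)$. Hoeffding's inequality (Theorem~\ref{Hoeffding_ineq}) gives
\begin{equation*}
\Pr\{|k-n\alpha_x|>\sqrt{p\,n\log n}\}\le 2n^{-2p}.
\end{equation*}
Set $w_n:=\sqrt{p\,n\log n}$ and let $I:=[n\alpha_x-w_n,n\alpha_x+w_n]$. Then the spectral projector $\Pi_I:=\sum_{k\in I}N_k$ satisfies $\Tr(\Pi_I\rho^{\otimes n})\ge1-2n^{-2p}$.

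The three failure events, each bounded with the union bound and the uniformity of the boundaries, are as follows.
\begin{itemize}
\item[(i)] Some boundary $b_i$ falls in $I$. Each boundary does so with probability $2w_n/n$, so this event has probability at most $2m w_n/n=\mathcal O(n^{z-1/2}\sqrt{\log n})$. This is within the claimed $\mathcal O(n^{z-1/2}\log n)$ and vanishes for $z<\tfrac12$.
\item[(ii)] No boundary lies within distance $n^{1-z}\log n$ of $n\alpha_x$. Since $(1-x)^m\le e^{-mx}$, this has probability at most $2(1-n^{-z}\log n)^m\le 2e^{-\log n}=\mathcal O(n^{-1})$.
\item[(iii)] Neither (i) nor (ii) occurs, but the observed bin is not the one containing $I$. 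By Hoeffding this has probability at most $2n^{-2p}$.
\end{itemize}
Case (i) dominates, which gives the stated failure probability.

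On success, the bin $[b_{j-1},b_j)$ containing $I$ is observed. Because (ii) fails, the nearest boundaries on both sides lie within $\mathcal O(n^{1-z}\log n)$ of $n\alpha_x$; I would need (ii) on each side separately, which only doubles the bound. After dividing by $n$, as I would make explicit (the paper's protocol implicitly rescales $[b_{k-1},b_k]$ by $1/n$), any point of the bin estimates $\alpha_x$ to within $\mathcal O(n^{-z}\log n)$.

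For the disturbance, note that $\Pi_I\le N'_j$ and that $N'_j$ commutes with $\Pi_I$. Hence $\Tr(N'_j\rho^{\otimes n})\ge 1-2n^{-2p}$, and the Gentle Operator Lemma \ref{Gentle_op_lem} bounds the distance between the post-measurement state and $\rho^{\otimes n}$ by $\mathcal O(n^{-p})$.

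The main obstacle is the bookkeeping between failure probability and disturbance. The disturbance guarantee must hold conditionally on the randomly chosen boundaries avoiding $I$, which is where the dominant $n^{z-1/2}$ term comes from. The parameters $p$, $w_n$ and the (ii)-threshold must therefore be chosen jointly so that all three events, averaged over the random boundaries, stay within the claimed orders; this is exactly the tuning carried out in \cite[Prop.~1]{BHL2006_PRA}. Finally, to estimate all $\alpha_x$, $x\in\{1,\dots,d\}$, I would run the $d$ protocols sequentially. By the triangle-inequality argument used for Eq.~\eqref{addisturbance}, the disturbances and failure probabilities add, costing only a factor $d$.
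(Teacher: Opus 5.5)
Your proposal is correct. The paper gives no argument of its own here; it imports \cite[Prop.~1]{BHL2006_PRA} wholesale. You instead give a self-contained proof by transplanting the paper's three-case entropy argument, with Hoeffding for $k\sim\mathrm{Bin}(n,\alpha_x)$ replacing the spectrum-estimation concentration. The estimates check out:
\begin{itemize}
\item event (i) has probability at most $2\sqrt{p}\,n^{z-1/2}\sqrt{\log n}$;
\item event (ii), taken separately on each side of $n\alpha_x$, has probability $\mathcal O(n^{-1})$; when $n\alpha_x$ lies within $n^{1-z}\log n$ of an endpoint, $b_0=0$ or $b_m=n$ covers that side;
\item event (iii) has probability at most $2n^{-2p}$;
\item on success every point of $[b_{j-1},b_j]/n$ is within $n^{-z}\log n$ of $\alpha_x$;
\item $\Pi_I\le N'_j$ together with the gentle operator lemma gives disturbance at most $2\sqrt2\,n^{-p}$.
\end{itemize}
Relative to the citation, your route makes explicit three things the paper leaves implicit or states loosely. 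The $N'_j$ are diagonal in the common eigenbasis, hence charge preserving. The estimate must be rescaled by $1/n$. A nearby boundary is needed on \emph{both} sides of $n\alpha_x$; the paper's Case~2 for the entropy asks for only one, which does not bound the width of the observed bin. You also get $\sqrt{\log n}$ instead of $\log n$ in the failure probability. The citation buys only brevity.

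Two small points. First, there is no nontrivial ``tuning'' to defer to \cite{BHL2006_PRA}. Your total failure probability is $2\sqrt{p}\,n^{z-1/2}\sqrt{\log n}+\mathcal O(n^{-1})+2n^{-2p}$, so any fixed $p\ge\frac14$ makes (i) dominant. Larger $p$ only costs a constant factor in (i) while improving the disturbance.

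Second, when you run all $d$ protocols, the triangle inequality alone does not let you add the disturbances. The second gentle-lemma bound must be applied to the already-disturbed state, on which you can a priori only guarantee acceptance $1-\mathcal O(n^{-p})$, so you would lose a square root. The fix is commutativity. All the bin projectors $N^{\prime(x)}_{j}$ are diagonal in the same product basis, so the sequential post-measurement state is that of the single projector $\prod_x N^{\prime(x)}_{j_x}$. Its acceptance probability is at least $1-2d\,n^{-2p}$, because $\1-AB\le(\1-A)+(\1-B)$ for commuting projectors $A,B$. One application of the gentle operator lemma then gives $\mathcal O(n^{-p})$ for the whole sequence.
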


By the union bound, the probability of failure to estimate all the \(\alpha_x\) can be bounded by summing the individual failure probabilities over \(x \in \{1,\ldots,d\}\). This yields an overall failure probability upper bounded by
\begin{equation}\label{charge_estimation_avg_bd}
    \mathcal{O}\!\left(n^{\,z-\frac12}\log n\right).
\end{equation}

\subsection{The total failure probability and the disturbance induced on the state due to the estimation of entropy and average charges}

In this section, we compute the total failure probability of the estimation protocol, which consists of contributions from both entropy estimation and average charge estimation. In particular, we have
\begin{align}
    &\text{Failure probability of entropy estimation}
    + \text{Failure probability of average charge estimation for all charges} \nonumber \\
    &\phantom{ail}\leq
    \mathcal{O}\!\left(n^{\,z-\frac{1}{2}}(\log n)^{3/2}\right)
    +
    \mathcal{O}\!\left(n^{\,z-\frac{1}{2}}\log n\right) =
    \mathcal{O}\!\left(n^{\,z-\frac{1}{2}}(\log n)^{3/2}\right),
\end{align}
where the inequality follows from Eq.~\eqref{entropy_estimation_avg_bd} and Eq.~\eqref{charge_estimation_avg_bd}, and the final scaling is dominated by the first term. The total disturbance induced by the sequential entropy and average-charge estimation measurements is controlled using the non-commutative union bound (see Proposition~\ref{NCUB}), followed by the Gentle Operator Lemma~\ref{Gentle_op_lem}. Since the total failure probability is

$$
\mathcal{O}\!\left(n^{\,z-\frac12}(\log n)^{3/2}\right),
$$

and the number of measurements is constant, the gentle Operator Lemma yields an overall disturbance of

$$
\mathcal{O}\!\left(n^{\frac{z}{2}-\frac14}(\log n)^{3/4}\right).
$$


\section{Useful results}
Here we collect well-known tools that we use elsewhere in the text. 
\begin{prop}[Fannes-Audenaert inequality \cite{Fannes1973,Audenaert2007,Winter2016, Chiribella_Renner}]\label{Fannes_Audenart_ineq_strict}
    For two quantum states $\rho$ and $\sigma$ in $d$-dimensional Hilbert space such that $\frac{1}{2}\|\rho-\sigma\|_{1} \leq \delta\leq1$, we have
    \begin{equation}
        |S(\rho)-S(\sigma)|\leq \delta\log(d)+H_2(\delta):=\xi(\delta),
    \end{equation}
    where $H_2(\delta)$ denotes the binary entropy defined as $H_2(\delta)=-\delta\log(\delta)-(1-\delta)\log (1-\delta)$.
\end{prop}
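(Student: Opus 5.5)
We prove the Fannes--Audenaert bound by the standard classical reduction.

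\emph{Reduction to spectra.} Let $r^\downarrow$ and $s^\downarrow$ be the decreasingly ordered eigenvalue vectors of $\rho$ and $\sigma$. We invoke Mirsky's inequality (the trace-norm version of the Lidskii--Weyl perturbation bound for Hermitian matrices):
\begin{equation}
\tfrac12\|r^\downarrow-s^\downarrow\|_1\le \tfrac12\|\rho-\sigma\|_1\le\delta .
\end{equation}
Since $S(\rho)=H(r^\downarrow)$ and $S(\sigma)=H(s^\downarrow)$, it suffices to prove the classical statement
\begin{equation}
|H(p)-H(q)|\le\delta\log d+H_2(\delta)
\end{equation}
for probability vectors $p,q$ on $d$ points with $d_{\mathrm{TV}}(p,q)\le\delta$. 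As a side remark, the stated bound is slightly weaker than Audenaert's sharp form $\delta\log(d-1)+H_2(\delta)$, so a coupling argument will suffice.

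\emph{Classical bound via coupling.} Choose a maximal coupling $(X,Y)$ of $p$ and $q$, so that $\Pr[X\neq Y]=d_{\mathrm{TV}}(p,q)=:\delta'\le\delta$. Assume without loss of generality that $H(p)\ge H(q)$. Then
\begin{equation}
H(X)-H(Y)\le H(X,Y)-H(Y)=H(X|Y).
\end{equation}
Let $E=\mathbbm 1[X\neq Y]$. By Fano's inequality,
\begin{equation}
H(X|Y)\le H(E)+\Pr[E=1]\,H(X|Y,E=1)\le H_2(\delta')+\delta'\log(d-1).
\end{equation}
This is at most $\delta'\log d+H_2(\delta')$.

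\emph{Monotonicity step.} The remaining task is to replace $\delta'$ by $\delta$, which is the only delicate point, because $H_2$ decreases on $[1/2,1]$. Set $g(x)=x\log d+H_2(x)$. Its derivative is $g'(x)=\log d+\log\frac{1-x}{x}$, which is nonnegative for $x\le d/(d+1)$. Hence $g(\delta')\le g(\delta)$ whenever $\delta\le d/(d+1)$.

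For $\delta>d/(d+1)$ we argue directly. Note first that $\max_{x\in[0,1]}g(x)=\log(d+1)$, attained at $x=d/(d+1)$. For $\delta\ge d/(d+1)$ we have
\begin{equation}
g(\delta)\ge g(1)=\log d\ge|S(\rho)-S(\sigma)|,
\end{equation}
where the first inequality holds because $g$ is decreasing on $[d/(d+1),1]$, so $g(\delta)\ge g(1)$ for all $\delta$ in that range. The last inequality is the trivial bound on entropy differences in dimension $d$. Together with the case $\delta\le d/(d+1)$, this shows $|S(\rho)-S(\sigma)|\le g(\delta)$ for every admissible $\delta\le 1$.

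I expect the main obstacle to be the spectral reduction, that is, justifying Mirsky's inequality. If needed, it can be proved from Ky Fan's maximum principle applied to the positive and negative parts of $\rho-\sigma$. Alternatively, one can cite it as standard, as is done in the references listed with the proposition.
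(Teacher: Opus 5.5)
Your proof is correct. It necessarily takes a different route from the paper, because the paper gives no proof: it lists the proposition among its standard tools and defers to the cited literature (Fannes, Audenaert, Winter, Chiribella--Renner).

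Your argument is self-contained and has three steps:
\begin{itemize}
\item Mirsky's inequality reduces the problem to ordered spectra.
\item A maximal coupling together with the Fano-type bound $H(X|Y)\le H_2(\delta')+\delta'\log(d-1)$ controls the entropy difference at the exact spectral distance $\delta'=d_{\mathrm{TV}}(r^\downarrow,s^\downarrow)$.
\item The monotonicity analysis of $g(x)=x\log d+H_2(x)$ passes to an arbitrary upper bound $\delta\le 1$. Here $g$ increases up to $d/(d+1)$, its maximum is $\log(d+1)$, and $g(\delta)\ge g(1)=\log d$ beyond that point.
\end{itemize}

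The last step is what a bare citation glosses over, and it is the form the paper actually uses. In Lemma~\ref{Keyl_Werner} it takes $\delta=\xi^{-1}(\epsilon_n)$, and in the entropy-estimation theorem only a threshold on the distance is available, not the distance itself.

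Your analysis also shows why the paper's constant $\log d$, rather than $\log(d-1)$, is the right one when $\delta$ is merely an upper bound in $[0,1]$. With $\log(d-1)$ the bound at $\delta=1$ would be $\log(d-1)<\log d$. That fails for a pure state versus the maximally mixed state, whose entropy gap is $\log d$. So your side remark calling the stated bound ``slightly weaker than Audenaert's sharp form'' is only valid at the exact distance, or for $\delta\le 1-1/d$.

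The cited sources buy sharpness and the characterization of equality cases. Your proof buys a short, checkable derivation of exactly the relaxed statement the paper needs.
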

\begin{thm}[Hoeffding's inequality \cite{Hoeffding_ineq}]
\label{Hoeffding_ineq}
    Let $X_1,\;X_2,\;\ldots,\;X_n$ are independent and bounded random variables with $X_i\in[a,b]$ for all $i$, where $a$ and $b$ are finite. Then, for any $f>0$,
\begin{equation}
   \Pr\left\{\left|\frac{1}{n}\sum_{i=1}^{n}X_i - \mathbb{E}[X]\right| \geq f\right\} \leq \mathrm{exp}\Bigg(-\frac{2nf^2}{(b-a)^2}\Bigg).
\end{equation}
\end{thm}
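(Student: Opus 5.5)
We prove Hoeffding's inequality by the Chernoff exponential-moment method, with Hoeffding's lemma as the key input. Write $S_n=\sum_{i=1}^n X_i$ and $\mu_i=\mathbb{E}[X_i]$, and read $\mathbb{E}[X]$ as the average $\frac1n\sum_i\mu_i$ (equal to the common mean in the i.i.d.\ case used in Lemma~\ref{amc}). We bound the upper tail $\Pr\{S_n-\mathbb{E}S_n\ge nf\}$ first; the lower tail is handled identically by applying the argument to $-X_i\in[-b,-a]$. We then combine the two tails.

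\textbf{Chernoff step.} For any $s>0$, Markov's inequality applied to $e^{s(S_n-\mathbb{E}S_n)}$ gives
\begin{equation*}
\Pr\{S_n-\mathbb{E}S_n\ge nf\}\le e^{-snf}\,\mathbb{E}\,e^{s(S_n-\mathbb{E}S_n)}=e^{-snf}\prod_{i=1}^n\mathbb{E}\,e^{s(X_i-\mu_i)},
\end{equation*}
where independence lets the expectation factorize.

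\textbf{Hoeffding's lemma.} This is the main technical step: for a centered variable $Y\in[a',b']$ with $b'-a'=b-a$, we have $\mathbb{E}e^{sY}\le e^{s^2(b-a)^2/8}$. To prove it, use convexity of $y\mapsto e^{sy}$ on $[a',b']$, which gives $e^{sy}\le\frac{b'-y}{b'-a'}e^{sa'}+\frac{y-a'}{b'-a'}e^{sb'}$. Taking expectations with $\mathbb{E}Y=0$ yields $\mathbb{E}e^{sY}\le e^{\phi(u)}$, where $u=s(b-a)$, $\theta=-a'/(b'-a')\in[0,1]$, and $\phi(u)=-\theta u+\log(1-\theta+\theta e^{u})$. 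Next check that $\phi(0)=\phi'(0)=0$. Also $\phi''(u)=q(1-q)\le\frac14$, where $q=\theta e^u/(1-\theta+\theta e^u)$. Taylor's theorem with remainder then gives $\phi(u)\le u^2/8$. The only delicate point in the whole proof is verifying this second-derivative bound.

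\textbf{Optimization.} Substituting the lemma into the Chernoff bound gives $\exp\bigl(-snf+ns^2(b-a)^2/8\bigr)$. Choosing $s=4f/(b-a)^2$ minimizes the exponent and gives $\exp\bigl(-2nf^2/(b-a)^2\bigr)$ for the upper tail. The same bound holds for the lower tail.

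\textbf{Combining the tails.} A union bound over the two tails gives the two-sided bound with a prefactor $2$. This prefactor matches the factor $2$ actually used in Eq.~\eqref{AMC_error} of Lemma~\ref{amc}. So we would state and prove the two-sided inequality with the $2$ made explicit, noting that the one-sided version carries no prefactor.
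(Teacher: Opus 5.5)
Your proof is correct. It is the standard Chernoff-bound-plus-Hoeffding's-lemma argument from the cited reference, and the paper itself gives no proof but simply quotes the result.

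You are also right to insist on the prefactor $2$, and this is a genuine correction rather than a cosmetic one. As printed, the two-sided statement is false: take $n=1$, $X_1$ uniform on $\{0,1\}$, $a=0$, $b=1$, $f=\tfrac12$. The left side is then $1$ while the right side is $e^{-1/2}$. So only your version with $2e^{-2nf^2/(b-a)^2}$ holds, and that is exactly the form the paper actually uses in Eq.~\eqref{AMC_error}.
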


\begin{lem}[Gentle operator lemma \cite{wilde_2017,Winter_Gentle_op_lem}]
\label{Gentle_op_lem}
    Consider a density operator $\rho$ and a measurement operator $\Lambda$ where $0\leq \Lambda \leq I$. The measurement operator could be an element of a POVM. Suppose that the measurement operator $\Lambda$ has a high probability of detecting state $\rho$ i.e.,
    \begin{equation}
        \Tr(\Lambda\rho)\geq 1-\varepsilon,
    \end{equation}
    where $\epsilon\in[0,1]$. Then, the post-measurement state 
    \begin{equation}
        \rho' = \frac{\sqrt{\Lambda}\rho\sqrt\Lambda}{\Tr(\Lambda\rho)}
    \end{equation}
    satisfies
    \begin{equation}
        \|\rho-\rho'\|_{1}\leq 2\sqrt{\varepsilon}.
    \end{equation}
\end{lem}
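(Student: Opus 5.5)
The plan is to reduce the claim to a statement about pure states by purification, and then use the explicit trace-distance formula for pure states. Write $p:=\Tr(\Lambda\rho)\ge 1-\varepsilon$. If $p=0$ then $\varepsilon=1$; taking the convention $\rho'=\rho$ in that case, the bound $2\sqrt{\varepsilon}=2$ holds trivially. So assume $p>0$.

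First, purify: choose a reference system $R$ and a unit vector $\ket{\psi}$ on $\mathcal{H}\otimes\mathcal{H}_R$ with $\Tr_R\ketbra{\psi}=\rho$. Define
\begin{equation}
\ket{\phi}:=\frac{(\sqrt{\Lambda}\otimes\1)\ket{\psi}}{\sqrt{p}} .
\end{equation}
This is a unit vector, since $\bra{\psi}(\Lambda\otimes\1)\ket{\psi}=\Tr(\Lambda\rho)=p$. Because $\sqrt{\Lambda}$ acts only on the system, $\Tr_R\ketbra{\phi}=\sqrt{\Lambda}\rho\sqrt{\Lambda}/p=\rho'$. The trace norm is monotone under partial trace, so
\begin{equation}
\|\rho-\rho'\|_1\le \bigl\|\ketbra{\psi}-\ketbra{\phi}\bigr\|_1=2\sqrt{1-|\langle\psi|\phi\rangle|^2}.
\end{equation}
The equality on the right is the standard formula for the trace distance between two pure states; it follows because the difference has rank two and eigenvalues $\pm\sqrt{1-|\langle\psi|\phi\rangle|^2}$.

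Second, bound the overlap from below. The key operator inequality is $\sqrt{\Lambda}\ge\Lambda$, which follows from $0\le\Lambda\le I$ by functional calculus, since $\sqrt{x}\ge x$ on $[0,1]$. Hence
\begin{equation}
\langle\psi|\phi\rangle=\frac{\bra{\psi}(\sqrt{\Lambda}\otimes\1)\ket{\psi}}{\sqrt{p}}\ge\frac{\Tr(\Lambda\rho)}{\sqrt{p}}=\sqrt{p}.
\end{equation}
This overlap is real and nonnegative, so $|\langle\psi|\phi\rangle|^2\ge p\ge 1-\varepsilon$. Substituting into the previous display gives $\|\rho-\rho'\|_1\le 2\sqrt{\varepsilon}$.

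There is no real obstacle here. The only points needing care are that the overlap is real and nonnegative, which is why the lower bound can be squared, and the choice to compare via $\sqrt{\Lambda}$ rather than $\Lambda$. That choice is exactly what makes the operator inequality $\sqrt{\Lambda}\ge\Lambda$ do the work.
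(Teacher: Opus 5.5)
Your proof is correct and complete. The vector $(\sqrt{\Lambda}\otimes I)|\psi\rangle/\sqrt{p}$ does purify $\rho'$. Contractivity of the trace norm under partial trace reduces the claim to the pure-state formula. The overlap $\langle\psi|\phi\rangle=\Tr(\sqrt{\Lambda}\rho)/\sqrt{p}$ is real, nonnegative, and at least $\sqrt{p}$ because $\sqrt{\Lambda}\ge\Lambda$. The degenerate case $p=0$ is handled correctly. You actually obtain the slightly sharper intermediate bound $2\sqrt{1-p}$.

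The paper gives no proof of this lemma; it only cites it, so the comparison is with the standard argument in the cited source (Wilde). That argument computes the fidelity directly,
$F(\rho,\rho')=\|\sqrt{\rho}\sqrt{\rho'}\|_1^2=(\Tr\sqrt{\Lambda}\rho)^2/\Tr(\Lambda\rho)\ge\Tr(\Lambda\rho)$.
It uses the same operator inequality $\sqrt{\Lambda}\ge\Lambda$, noting that $\sqrt{\rho}\sqrt{\Lambda}\rho\sqrt{\Lambda}\sqrt{\rho}=(\sqrt{\rho}\sqrt{\Lambda}\sqrt{\rho})^2$. It then invokes the Fuchs--van de Graaf bound $\tfrac12\|\rho-\rho'\|_1\le\sqrt{1-F}$.

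Your route effectively inlines that last step, since Fuchs--van de Graaf is itself proved by purification, monotonicity, and the pure-state formula. The result is a self-contained argument that needs neither Uhlmann's theorem nor any matrix square-root manipulation. Nothing is lost in the constant: your pair of purifications already attains the fidelity, $|\langle\psi|\phi\rangle|^2=(\Tr\sqrt{\Lambda}\rho)^2/p=F(\rho,\rho')$. The fidelity route has the advantage of isolating the reusable intermediate statement $F(\rho,\rho')\ge 1-\varepsilon$.
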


\begin{thm}[Non-commutative union bound \cite{GaoPRA}]
\label{NCUB}
Let \(\sigma\) be a quantum state and  \(\Pi_1,\ldots,\Pi_m\) be projectors. Then
\begin{equation}
    \Bigg\|\sigma-\frac{\Pi_m\cdots \Pi_1 \sigma \Pi_1 \cdots \Pi_m}{\Tr\left(\Pi_m\cdots \Pi_1 \sigma \Pi_1 \cdots \Pi_m\right)}\Bigg\|_1\leq 2\sqrt{
\sum_{i=1}^m
\Tr\left((\1-\Pi_i)\sigma\right)
},
\end{equation}
and
\begin{equation}
\Tr\left(
\Pi_m\cdots \Pi_1 \sigma \Pi_1 \cdots \Pi_m
\right)
\geq 1 - 4\left(\sum_{i=1}^m \Tr\left((\1-\Pi_i)\sigma\right)\right).
\end{equation}
\end{thm}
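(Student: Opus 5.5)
The plan is to reduce to pure states by purification, prove both bounds for a pure state, and get the normalization bound from Gao's quantum union bound and the trace-distance bound from an overlap estimate.

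\textbf{Reduction to pure states.} Let $\ket{\psi}$ on $\mathcal{H}\otimes\mathcal{H}_R$ purify $\sigma$, and replace each $\Pi_i$ by $\Pi_i\otimes\1_R$. Three facts make the reduction work.
\begin{itemize}
\item The failure weights are unchanged: $\varepsilon_i:=\Tr((\1-\Pi_i)\sigma)=\bra{\psi}(\1-\Pi_i)\otimes\1_R\ket{\psi}$.
\item The success probability is unchanged, so the second inequality transfers directly.
\item The partial trace over $R$ of the normalized vector $\ket{v}/\|v\|$, where $\ket{v}:=\Pi_m\cdots\Pi_1\ket{\psi}$, is exactly the normalized operator $\Pi_m\cdots\Pi_1\sigma\Pi_1\cdots\Pi_m/\Tr(\cdot)$. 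Since the trace norm is monotone under partial trace, the first inequality also reduces to the pure case.
\end{itemize}

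\textbf{The normalization bound.} For pure $\ket{\psi}$ with $\varepsilon:=\sum_i\varepsilon_i$, I would invoke Gao's quantum union bound, $1-\|v\|^2\le 4\varepsilon$, which is the second claimed inequality. If a self-contained argument is wanted, I would follow Gao's reasoning: peel off one projector at a time, $\ket{v_k}=\Pi_k\ket{v_{k-1}}$, and control $\|v_{k-1}\|^2-\|v_k\|^2=\|(\1-\Pi_k)v_{k-1}\|^2$ against $\varepsilon_k$. The comparison uses $\|(\1-\Pi_k)v_{k-1}\|\le\|(\1-\Pi_k)\psi\|+\|\psi-v_{k-1}\|$ together with an inductive bound on $\|\psi-v_{k-1}\|$. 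Naively this yields Sen's form $(\sum_i\sqrt{\varepsilon_i})^2$. Getting the linear form requires Gao's sharper Jordan-decomposition or telescoping-identity trick, and I would import that step as a black box.

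\textbf{The trace-distance bound.} For pure states,
\begin{equation}
\left\|\ketbra{\psi}-\frac{\ketbra{v}}{\|v\|^2}\right\|_1=2\sqrt{1-\frac{|\braket{\psi}{v}|^2}{\|v\|^2}}.
\end{equation}
So it suffices to show $|\braket{\psi}{v}|^2\ge\|v\|^2(1-\varepsilon)$. I would aim for a lower bound on $\operatorname{Re}\braket{\psi}{v}$, for example through
\begin{equation}
\|\psi-v\|^2=1+\|v\|^2-2\operatorname{Re}\braket{\psi}{v},
\end{equation}
using the same telescoping estimate on $\|\psi-v\|^2$ as in Gao's argument. I would then combine it with $\|v\|\le1$ to compare $|\braket{\psi}{v}|/\|v\|$ with $1-\varepsilon/2$, or better.

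\textbf{Main obstacle.} The hard part is the constant in the trace-distance claim. The generic route (Gao's bound on the overlap, followed by the Gentle Operator Lemma~\ref{Gentle_op_lem} applied to the product operator) naturally produces something like $4\sqrt{\varepsilon}$, or $2\sqrt{4\varepsilon}$, not $2\sqrt{\varepsilon}$. I would first try to get the overlap bound $|\braket{\psi}{v}|^2/\|v\|^2\ge1-\varepsilon$ directly; as a sanity check, for $m=1$ it is exact, since the ratio equals $\|\Pi_1\psi\|^2=1-\varepsilon_1$. I would then attempt an induction on $m$ that uses the inequality for the sub-normalized vector at each step. If that induction fails, the fallback is to accept a larger absolute constant in front of $\sqrt{\varepsilon}$. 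That change is harmless for every application in the paper, such as Theorem~\ref{ref100}, since only the scaling in $\varepsilon$ matters there.
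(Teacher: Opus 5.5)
Your reduction to pure states is correct. Treating Gao's bound as a black box for the second inequality also matches the paper, which gives no proof of this theorem and simply cites Gao for both parts.

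The gap is the first inequality. You never establish the constant $2$, and your stated fallback (a larger constant in front of $\sqrt{\varepsilon}$) would prove a weaker statement than the one claimed. The closing step you propose is to bound $\operatorname{Re}\braket{\psi}{v}$ via $\|\psi-v\|^2$ and then discard $\|v\|$ through $\|v\|\le 1$. That step does lose a constant. Even with the optimal $\|\psi-v\|^2\le\varepsilon$ and Gao's $\|v\|^2\ge 1-4\varepsilon$, it only yields $|\braket{\psi}{v}|^2/\|v\|^2\ge(1-\tfrac52\varepsilon)^2$.

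Your expectation of a $4\sqrt{\varepsilon}$-type bound comes from carrying Gao's factor $4$ over from the success probability to the normalized overlap. For the success probability the factor is genuinely needed: rank-one projectors onto lines at angles $\pm\theta$ from $\psi$ in $\mathbb{R}^2$ give $1-\|v\|^2\approx 5\theta^2$ against $\varepsilon\approx 2\theta^2$. For the normalized overlap no such loss occurs.

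Both missing ingredients are elementary, and they complete the route you sketched.

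\emph{First}, the telescoping is exactly Pythagorean. With $v_k:=\Pi_k\cdots\Pi_1\psi$ and $v_0:=\psi$, one has $\psi-v_k=(\1-\Pi_k)\psi+\Pi_k(\psi-v_{k-1})$, and the two terms lie in orthogonal ranges. Hence $\|\psi-v_k\|^2=\varepsilon_k+\|\Pi_k(\psi-v_{k-1})\|^2\le\varepsilon_k+\|\psi-v_{k-1}\|^2$. Iterating gives $D:=\|\psi-v\|^2\le\varepsilon=\sum_i\varepsilon_i$, with no Gao-type constant.

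\emph{Second}, do not bound $\|v\|$ by $1$. Let $t:=\|v\|^2$ and assume $D<1$; otherwise $\varepsilon\ge1$ and the claim is trivial, since the trace distance is at most $2$. Then $t>0$, $\operatorname{Re}\braket{\psi}{v}=(1+t-D)/2>0$, and
\[
|\braket{\psi}{v}|^2\;\ge\;\tfrac14(1+t-D)^2\;=\;t(1-D)+\tfrac14(1-D-t)^2\;\ge\;t(1-D).
\]
So $|\braket{\psi}{v}|^2/\|v\|^2\ge 1-D\ge 1-\varepsilon$. Your pure-state formula then gives exactly $2\sqrt{\varepsilon}$, and your purification step finishes the first inequality.

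Alternatively, you can run the induction on $m$ you gestured at. The needed one-step fact is not the $m=1$ case but a two-vector statement: for any unit $\phi$ and projector $\Pi$, the infidelity with $\psi$ of $\Pi\phi/\|\Pi\phi\|$ is at most that of $\phi$ plus $\|(\1-\Pi)\psi\|^2$. This can be verified by direct computation.
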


\end{document}